\def\cluster#1{\ensuremath{[#1]_c}}
 \newcommand{\stkout}[1]{\ifmmode\text{\sout{\ensuremath{#1}}}\else\sout{#1}\fi}
\begin{document}
\setcounter{page}{1}

\title{Free-Choice Nets With Home Clusters Are Lucent\footnote{This paper was submitted  to Fundamenta Informaticae on 9-8-2020 and accepted on 2-6-2021.}}

\author{Wil M.P. van der Aalst \\
Process and Data Science (PADS)\\ 
RWTH Aachen University, Germany\\
wvdaalst{@}pads.rwth-aachen.de}

\runninghead{W.M.P. van der Aalst}{Free-Choice Nets With Home Clusters Are Lucent}
\maketitle

\begin{abstract}
A marked Petri net is \emph{lucent} if there are no two different reachable markings enabling the same set of transitions, i.e.,
states are fully characterized by the transitions they enable.
Characterizing the class of systems that are lucent is a foundational and also challenging question.
However, little research has been done on the topic.
In this paper, it is shown that all \emph{free-choice nets having a home cluster} are lucent. 
These nets have a so-called home marking such that it is always possible to reach this marking again.
Such a home marking can serve as a regeneration point or 
as an end-point.
The result is highly relevant because 
in many applications, we want the system to be lucent and 
many ``well-behaved'' process models fall into the class identified in this paper.
Unlike previous work, we do not require the marked Petri net to be live and strongly-connected.
Most of the analysis techniques for free-choice nets are tailored towards well-formed nets.
The approach presented in this paper provides a novel perspective
enabling new analysis techniques for free-choice nets that do not need to be well-formed.
Therefore, we can also model systems and processes that are terminating and/or have an initialization phase.
\end{abstract}

\begin{keywords}
Petri nets, Free-Choice Nets, Lucent Process Models
\end{keywords}

\section{Introduction}
\label{sec:intro}

Petri nets can be used to model systems and processes.
Many properties have been defined for Petri nets
that describe desirable characteristics of the modeled system or process \cite{lopn1,structure-theory-ToPNoC-advanced-course2010,murata}.
Examples include 
deadlock-freeness (the system is always able to perform an action),
liveness (actions cannot get disabled permanently),
boundedness (the number is states is finite),
safeness (objects cannot be at the same location at the same time),
soundness (a case can always terminate properly) \cite{soundness-FACS}, etc.
In this paper, we investigate another foundational property: \emph{lucency}. 
A system is lucent if it does not have different reachable states that enable the same actions, i.e., the set of enabled actions
uniquely characterizes the state of the system \cite{lucent-PN2018}.
Think of an information system that has a user interface showing what the user can do. In this example, lucency implies that the offered actions fully determine the internal state
and the system will behave consistently from the user's viewpoint.
If the information system would not be lucent, the user could encounter situations where the set of offered actions is the same, but the behavior is very different.
Another example is the worklist of a workflow management system that shows the workitems that can or should be executed. 
Lucency implies that the state of a case can be derived based on the workitems offered for it.

In a Petri net setting, lucency can be defined as follows.
\emph{A marked Petri net is lucent if there are no two different reachable markings enabling the same set of transitions, i.e.,
markings are fully characterized by the transitions they enable.}
\begin{figure}[thb!]
\centering
\includegraphics[width=9.0cm]{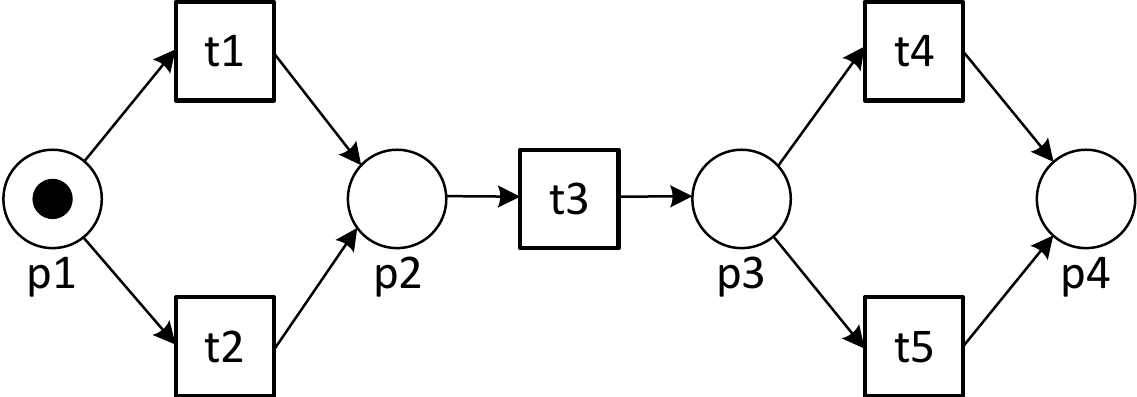}
\caption{$(N_1,M_1)$ is a free-choice net that is lucent, has a home cluster, but is not perpetual.}
\label{fig-intro-fc}
\end{figure}

Figure~\ref{fig-intro-fc} shows a marked Petri net that is lucent.
Each of the four reachable markings has a different set of enabled transitions.
Figure~\ref{fig-intro-nfc} shows a marked Petri net that is not lucent. Initially, one of the transitions $t1$ or $t2$ can occur, leading to two different states (the markings $[p2,p5]$ and $[p2,p6]$) that cannot be distinguished.
Only transition $t3$ is enabled, but the internal state matters.
$t1$ is always followed by $t4$ and $t2$ is always followed by $t5$.
\begin{figure}[thb!]
\centering
\includegraphics[width=9.0cm]{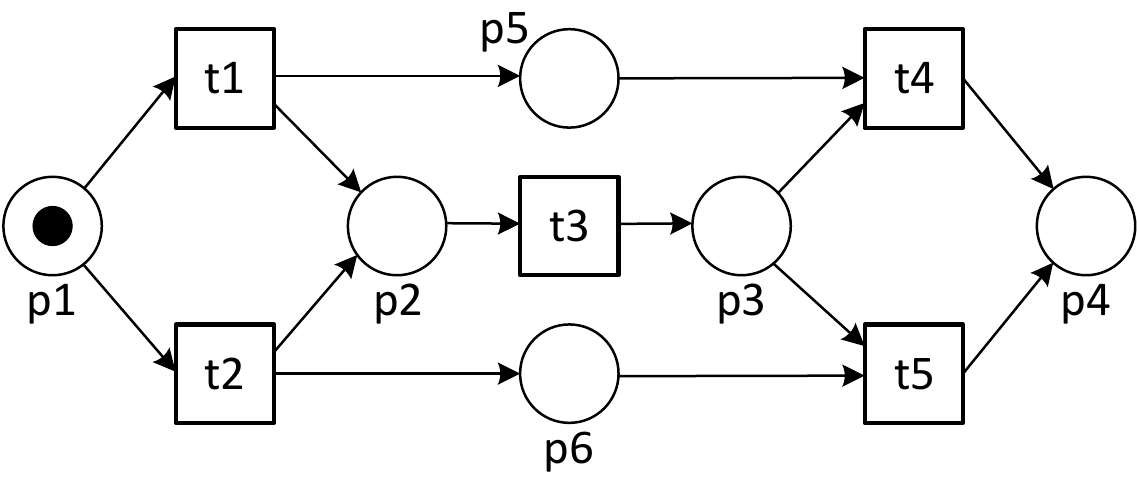}
\caption{$(N_2,M_2)$ is a non-free-choice net that is not lucent because the markings $[p2,p5]$ and $[p2,p6]$ enable the same set of transitions (just $t3$), thereby hiding the internal state.}
\label{fig-intro-nfc}
\end{figure}

Although we focus on Petri nets, lucency is a general notion that is independent of the modeling language used.
Even though lucency is an easy to define and foundational property, 
it was not investigated until recently \cite{lucent-PN2018,lucent-translucent-fi-2019}. As described in \cite{lucent-translucent-fi-2019},
lucent process models are easier to discover from event data.
When the underlying process has states that are different, 
but that enable the same set of activities, 
then it is obviously not easy to learn these ``hidden'' states.
Commercial process mining systems mostly use the so-called Directly-Follows Graph (DFG) as a process model. Here the ``state'' is considered to be the last activity executed.
DFGs have problems dealing with concurrent processes and tend to produce imprecise and ``Spaghetti-like'' models because of that. More advanced process discovery techniques are able to discover concurrent process models \cite{process-mining-book-2016}, but need to ``guess'' the state of the process after each event. When using, for example, region theory, the state is often assumed to be the prefix of activities (or the multiset of activities already executed), leading to overfitting and incompleteness problems (one needs to see all possible prefixes).
For lucent process models, this problem is slightly easier because the state is fully determined by the set of enabled activities. See \cite{lucent-translucent-fi-2019} for more details about the discovery of lucent process models using translucent event logs.

Given the examples in Figures~\ref{fig-intro-fc} and \ref{fig-intro-nfc}, there seems a natural connection between the well-known free-choice property \cite{deselesparza} and lucency. 
In a free-choice net, choice and synchronization can be separated.
However, as illustrated by Figure~\ref{fig-locally-safe-not-perpetual}, it is not enough to require that the net is free-choice. $(N_3,M_3)$ shown in Figure~\ref{fig-locally-safe-not-perpetual} is free-choice. It is actually a marked graph since there are no choices (i.e., places with multiple output arcs). The model in Figure~\ref{fig-locally-safe-not-perpetual} satisfies most of the (often considered desirable) properties defined for Petri nets.
$(N_3,M_3)$ is deadlock-free, live, bounded, safe, well-formed, free-choice, all markings are home markings, etc.
However, surprisingly $(N_3,M_3)$ is not lucent because the two reachable markings $[p1,p3,p6]$ and $[p1,p4,p6]$ enable the same set of transitions ($t1$ and $t4$).
This example shows that lucency does not coincide with any (or a combination) of the properties normally considered.
\begin{figure}[thb!]
\centering
\includegraphics[width=8.0cm]{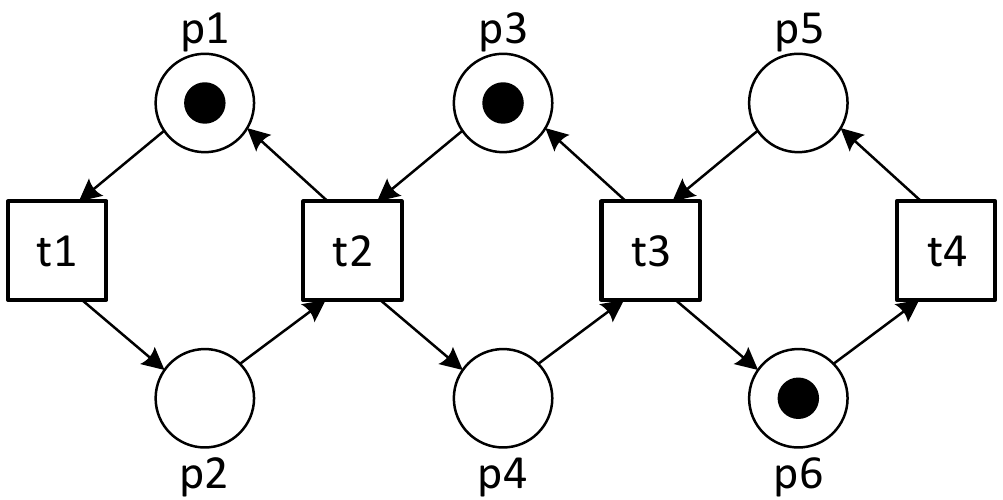}
\caption{$(N_3,M_3)$ is a marked graph that is not lucent because the markings $[p1,p3,p6]$ and $[p1,p4,p6]$ enable the same set of transitions ($t1$ and $t4$), thereby hiding the internal state.}
\label{fig-locally-safe-not-perpetual}
\end{figure}

The notion of lucency was first introduced in \cite{lucent-PN2018}. The paper uses the example shown in Figure~\ref{fig-locally-safe-not-perpetual}
to demonstrate that even nets that are free-choice, live, and safe may not be lucent. Therefore, an additional requirement was added.
In \cite{lucent-PN2018}, the class of \emph{perpetual} nets is introduced in an attempt to relate well-known Petri net properties to lucency. Perpetual free-choice nets are free-choice Petri nets that are live and bounded and have a home cluster, i.e., there is a cluster such that from any reachable state,
there is a reachable state marking the places of this cluster. 
Such a home cluster in a perpetual net 
serves as a ``regeneration point'' of the process, e.g., to start a new process instance (case, job, cycle, etc.).
Any perpetual marked free-choice net is lucent.
However, there are many lucent systems that are
not perpetual because they are terminating or have an initialization phase (and are therefore not live).

This paper extends the work presented in \cite{lucent-PN2018}
which focused exclusively on perpetual marked free-choice nets.
For example, $(N_1,M_1)$ in Figure~\ref{fig-intro-fc} 
is not perpetual. Actually, most of the work done on free-choice nets is limited to well-formed nets, i.e., nets that have a marking that is live and bounded. This is a structural property allowing for many interesting and advanced forms of analysis and reasoning \cite{bestfcn,structure-theory-ToPNoC-advanced-course2010,deselesparza}.
Such nets are automatically strongly-connected and do not have source and sink places to model 
the start and the end of the process.

However, in many applications, such nets are not suitable.
For example, it is impossible to model systems and processes that can terminate.
In some cases, one can apply a trick and ``short-circuit'' 
the actual net to make it well-formed (see, for example, the analysis of soundness for workflow nets \cite{aaljcsc}).
However, this distracts from the essence of the property being analyzed. 
This paper proves this point by showing that liveness is \emph{irrelevant} for ensuring lucency.
For example, the Petri net in Figure~\ref{fig-intro-fc} is lucent, but not well-formed.

In this paper, we show that all \emph{free-choice nets having a home cluster are lucent}. This significantly extends the 
class perpetual marked free-choice nets and also includes 
non-well-formed nets such as $(N_1,M_1)$ in Figure~\ref{fig-intro-fc}.

To do this, we provide a direct proof 
that is \emph{not} building on
the traditional stack of results for well-formed free-choice nets.
In \cite{lucent-PN2018}, we need to use the 
coverability theorem and the blocking marking theorem.
Moreover, the proof in \cite{lucent-PN2018} turned out to be incomplete and the repaired proof is even more involved.
\emph{The approach used to prove the correctness of the main result provides a novel perspective
enabling new analysis techniques for free-choice nets that do not need to be well-formed.} 
Novel concepts like ``expediting transitions'', ``rooted disentangled paths'', and ``conflict-pairs'' can  be used to prove many other properties free-choice nets having a home cluster.
This paper also relates the novel concepts and techniques presented in this paper to results based on 
short-circuiting nets that are non-live and not strongly-connected (Section~\ref{sec:perpmarknets}).
This relation is used to show that we can check whether there is home cluster in polynomial time for free-choice nets (whether they are live and strongly-connected or not).

The remainder is organized as follows.
Section~\ref{sec:rw} briefly discusses related work.
Section~\ref{sec:prelim} introduces Petri nets and some of the basic notations.
Lucent Petri nets are defined in Section~\ref{sec:lucency}.
In Section~\ref{sec:arelucent} we show that free-choice nets having a home cluster are indeed lucent.
To do this, we introduce new notions such as (rooted)
disentangled paths and conflict-pairs.
Section~\ref{sec:perpmarknets} relates the work to perpetual marked free-choice nets and our earlier paper \cite{lucent-PN2018}.
Section~\ref{sec:concl} concludes the paper.

\section{Related Work}
\label{sec:rw}

This paper extends the work presented in \cite{lucent-PN2018}.
There are no other papers on the analysis of lucency (which is surprising).
Hence, we can only point to more indirectly related work.

For more information about Petri nets, we refer to \cite{mbp-aal-stahl-2011,murata,reisig-book-2013,lopn1,lopn2}.
Within the field of Petri nets ``structure theory'' plays an important role \cite{bestfcn,structure-theory-ToPNoC-advanced-course2010,deselesparza}.
Free-choice nets are well studied \cite{bdefcn,structure-theory-ToPNoC-advanced-course2010,Esparza98TCS,thiavoss}.
The definite book on the structure theory of free-choice nets is \cite{deselesparza}.
Also, see \cite{structure-theory-ToPNoC-advanced-course2010} for pointers to literature.
Therefore, it is surprising that the question of whether markings are uniquely identified by the set of enabled transitions (i.e., lucency)
has not been explored in literature.
Lucency is unrelated to the so-called ``frozen tokens'' \cite{frozen-tokens-wehler2010}.
A Petri net has a frozen token if there exists an infinite occurrence sequence never using the token.
It is possible to construct live and bounded free-choice nets that are lucent while having frozen tokens.
Conversely, there are live and bounded free-choice nets that do not have frozen tokens and are not lucent.

The results presented in this paper are also related to the  \emph{blocking theorem}
\cite{blocking-theorem-gaujala2003,blocking-theorem-wehler2010}.
Blocking markings are reachable markings that enable transitions from only a single cluster. Removing the cluster yields a dead marking.
The blocking theorem states that in a bounded and live free-choice net each cluster has a unique blocking marking.
Lucency is broader than blocking markings since multiple clusters and concurrent transitions are considered.
Actually, lucency can be seen as a generalization of unique blocking markings.
Moreover, \cite{blocking-theorem-gaujala2003,blocking-theorem-wehler2010} only consider live Petri nets.

In \cite{reduction-wvda-PN2021}, we propose a framework based on sequences of $t$-induced T-nets and 
$p$-induced P-nets to convert free-choice nets into T-nets and P-nets while preserving properties such as
well-formedness, liveness, lucency, pc-safety, and perpetuality. The framework allows for systematic
proofs that ``peel off'' non-trivial parts while retaining the essence of the problem
(e.g., lifting properties from T-nets and P-nets to free-choice nets).

A major difference between the work reported in this paper and 
the extensive body of knowledge just mentioned is that we do \emph{not} require the Petri net to be well-formed. Liveness assumes that the system is cyclic and actions are always still possible in the future. This does not align well with the standard ``case notion'' used
in Business Process Management (BPM), Workflow Management (WFM), and Process Mining (PM) \cite{aaljcsc,process-mining-book-2016,soundness-FACS}. 
Process instances have a clear start and end.
For example, process discovery algorithms from the field of PM all generate process models close to the workflow nets.
The languages used for BPM and WFM, e.g., BPMN and UML Activity Diagrams, are very different from well-formed Petri nets and closer to workflow nets.
The work presented in this paper supports both views.
The process models may be well-formed or not. 
Therefore, we significantly generalize over the work presented in \cite{lucent-PN2018} and also present results that could be used for other questions.

\section{Preliminaries}
\label{sec:prelim}

This section introduces concepts related to Petri nets and some basic notations.

\subsection{Multisets, Sequences, and Functions}
\label{sec:basics}

$\bag(A)$ is the set of all \emph{multisets} over some set $A$.
For some multiset $b\in \bag(A)$, $b(a)$ denotes the number of times element $a\in A$ appears in $b$.
Some examples: $b_1 = [~]$, $b_2 = [x,x,y]$, $b_3 = [x,y,z]$, $b_4 = [x,x,y,x,y,z]$, and $b_5 = [x^3,y^2,z]$
are multisets over $A=\{x,y,z\}$.
$b_1$ is the empty multiset, $b_2$ and $b_3$ both consist of three elements, and
$b_4 = b_5$, i.e., the ordering of elements is irrelevant and a more compact notation may be used for repeating elements.
The standard set operators can be extended to multisets, e.g., $x\in b_2$, $b_2 \bplus b_3 = b_4$, $b_5 \setminus b_2 = b_3$, $|b_5|=6$, etc.
$\{a \in b\}$ denotes the set with all elements $a$ for which $b(a) \geq 1$.
$b(X) = \sum_{a \in X}\ b(x)$ is the number of elements in $b$ belonging to set $X$, e.g., $b_5(\{x,y\}) = 3+2=5$.
$b \leq b'$ if $b(a) \leq b'(a)$ for all $a \in A$. Hence, $b_3 \leq b_4$ and $b_2 \not\leq b_3$ (because $b_2$ has two $x$'s).
$b < b'$ if $b \leq b'$ and $b \neq b'$.
Hence, $b_3 < b_4$ and $b_4 \not< b_5$ (because $b_4 = b_5$).

$\sigma = \langle a_1,a_2, \ldots, a_n\rangle \in X^*$ denotes a \emph{sequence} over $X$ of length $\card{\sigma} = n$.
$\sigma_i = a_i$ for $1 \leq i \leq \card{\sigma}$.
$\langle~\rangle$ is the empty sequence. 
$\sigma_1 \cdot \sigma_2$ is the concatenation of two sentences, e.g., $\langle x,x,y\rangle \cdot \langle x,z\rangle = \langle x,x,y,x,z\rangle$.
The notation $[a \in \sigma]$ can be used to convert a sequence into a multiset. $[a \in \langle x,x,y,x,z\rangle] = [x^3,y^2,z]$.

\subsection{Petri Nets}
\label{sec:petrinets}

Figures~\ref{fig-intro-fc}, \ref{fig-intro-nfc},  and \ref{fig-locally-safe-not-perpetual} already showed examples of marked Petri nets. To reason about such processes and to formalize lucency, we now provide the basic formalizations \cite{mbp-aal-stahl-2011,murata,reisig-book-2013,lopn1,lopn2}.

\begin{definition}[Petri Net]\label{def:pn}
A \emph{Petri net} is a tuple $N=(P,T,F)$ with $P$ the non-empty set of places, $T$ the non-empty set of transitions such that
$P \cap T = \emptyset$, and $F\subseteq (P \times T) \cup (T \times P)$ the flow relation such that the graph $(P \cup T, F)$ is (weakly) connected. 
\end{definition}

Figure~\ref{fig-intro-fc} has 
four places ($p1,p2,p3,p4$),
five transitions ($t1,t2,t3,t4,t5$),
and ten arcs. 
The initial marking contains just one token located in place $p1$.

\begin{definition}[Marking]\label{def:mrk}
Let $N=(P,T,F)$ be a Petri net.
A \emph{marking} $M$ is a multiset of places, i.e., $M \in \bag(P)$.
$(N,M)$ is a marked net.
\end{definition}

A Petri net $N=(P,T,F)$ defines a directed graph with nodes $P\cup T$ and edges $F$.
For any $x\in P\cup T$, $\pre{x} = \{y\mid (y,x)\in F\}$ denotes the set of input nodes and
$\post{x} = \{y\mid (x,y)\in F\}$ denotes the set of output nodes.
The notation can be generalized to sets: $\pre{X}=\{y\mid \exists_{x\in X} \ (y,x)\in F\}$ and $\post{X} = \{y\mid \exists_{x\in X} \ (x,y)\in F\}$
for any $X \subseteq P\cup T$. 

A transition $t \in T$ is \emph{enabled} in marking $M$ of net $N$, denoted as $(N,M)[t\rangle$, if each of its input places ${\pre{t}}$ contains at least one token.
$\mi{en}(N,M) = \{ t \in T \mid (N,M)[t\rangle \}$ is the set of enabled transitions.

An enabled transition $t$ may \emph{fire}, i.e., one token is removed from each of the input places ${\pre{t}}$ and
one token is produced for each of the output places ${\post{t}}$.
Formally: $M' = (M\bmin {\pre{t}})\bplus {\post{t}}$ is the marking resulting from firing enabled transition $t$ in marking $M$ of Petri net $N$.
$(N,M)[t\rangle (N,M')$ denotes that $t$ is enabled in $M$ and firing $t$ results in marking $M'$.

Let $\sigma = \langle t_1,t_2, \ldots, t_n \rangle \in T^*$ be a sequence of transitions.
$(N,M)[\sigma\rangle (N,M')$ denotes that there is a set of markings $M_1, M_2, \ldots, M_{n+1}$ ($n \geq 0$)
such that $M_1 = M$, $M_{n+1} = M'$, and $(N,M_i)[t_i\rangle (N,M_{i+1})$ for $1 \leq i \leq n$.
A marking $M'$ is \emph{reachable} from $M$ if there exists a \emph{firing sequence} $\sigma$ such that $(N,M)[\sigma\rangle (N,M')$.
$R(N,M) = \{ M' \in \bag(P) \mid \exists_{\sigma \in T^*} \ (N,M)[\sigma\rangle (N,M') \}$ is the set of all reachable markings.
$(N,M)[\sigma\rangle$ denotes that the sequence $\sigma$ is enabled when starting in marking $M$ (without specifying the resulting marking).

For the marked net in Figure~\ref{fig-intro-nfc}: $R(N_2,M_2)= \{ [p1],[p2,p5],[p2,p6],[p3,p5],[p3,p6],[p4]\}$.
Note that $\mi{en}(N_2,[p2,p5]) = \mi{en}(N_2,[p2,p6]) = \{t3\}$.

\subsection{Liveness, Boundedness, and Home Markings}
\label{sec:livenessetc}

Next, we define some of the standard behavioral properties for Petri nets: liveness, boundedness, and the presence of home markings.

\begin{definition}[Live, Bounded, Safe, Dead, Deadlock-free, Well-Formed]\label{def:lb}
A marked net $(N,M)$ is \emph{live} if for every reachable marking $M' \in R(N,M)$ and for every transition $t\in T$ there exists a marking $M'' \in R(N,M')$ that enables $t$.
A marked net $(N,M)$ is $k$-bounded if for every reachable marking $M' \in R(N,M)$ and every $p \in P$: $M'(p) \leq k$.
A marked net $(N,M)$ is \emph{bounded} if there exists a $k$ such that $(N,M)$ is $k$-bounded.
A 1-bounded marked net is called \emph{safe}.
A place $p\in P$ is \emph{dead} in $(N,M)$ when it can never be marked (no reachable marking marks $p$).
A transition $t\in T$ is \emph{dead} in $(N,M)$ when it can never be enabled (no reachable marking enables $t$).
A marked net $(N,M)$ is \emph{deadlock-free} if each reachable marking enables at least one transition.
A Petri net $N$ is \emph{structurally bounded} if $(N,M)$ is bounded for any marking $M$.
A Petri net $N$ is \emph{structurally live} if there exists a marking $M$ such that $(N,M)$ is live.
A Petri net $N$ is \emph{well-formed} if there exists a marking $M$ such that $(N,M)$ is live and bounded.
\end{definition}

\begin{definition}[Home Marking]\label{def:hm}
Let $(N,M)$ be a marked net.
A marking $M_H$ is a \emph{home marking} if for every reachable marking $M' \in R(N,M)$: $M_H \in R(N,M')$.
\end{definition}

Note that home markings do not imply liveness or boundedness, i.e., a Petri net may be non-well-formed and still have home markings. $(N_1,M_1)$ in Figure~\ref{fig-intro-fc} is not live and has one home marking $[p4]$. $(N_3,M_3)$ in Figure~\ref{fig-locally-safe-not-perpetual} is live and all of its reachable markings are home markings.

\subsection{Clusters}
\label{sec:clusters}

Clusters play a major role in this paper.
A cluster is a maximal set of connected nodes, only considering
arcs connecting places to transitions.

\begin{definition}[Cluster]\label{def:clust}
Let $N=(P,T,F)$ be a Petri net and $x \in P \cup T$.
The \emph{cluster} of node $x$, denoted $\cluster{x}$ is the smallest set such that
(1) $x \in \cluster{x}$,
(2) if $p \in \cluster{x} \cap P$, then ${\post{p}} \subseteq \cluster{x}$, and
(3) if $t \in \cluster{x} \cap T$, then ${\pre{t}} \subseteq \cluster{x}$.
$\cluster{N}= \{ \cluster{x} \mid x \in P \cup T\}$ is the set of clusters of $N$.
\end{definition}

Note that $\cluster{N}$ partitions the nodes in $N$.
The Petri net in Figure~\ref{fig-intro-fc} has four clusters:
$C_1 = \{p1,t1,t2\}$, 
$C_2 = \{p2,t3\}$, 
$C_3 = \{p3,t4,t5\}$, and
$C_4 = \{p4\}$.
The Petri net in Figure~\ref{fig-locally-safe-not-perpetual} 
also has four clusters:
$C_1 = \{p1,t1\}$, 
$C_2 = \{p2,p3,t2\}$, 
$C_3 = \{p4,p5,t3\}$, and
$C_4 = \{p6,t4\}$.

\begin{definition}[Cluster Notations]\label{def:clustnot}
Let $N=(P,T,F)$ be a Petri net and $C \in \cluster{N}$ a cluster.
$\mi{Pl}(C) = P \cap C$ are the places in $C$, $\mi{Tr}(C) = T \cap C$ are the transitions in $C$, and $\mi{Mrk}(C) = [p \in \mi{Pl}(C)]$ is the smallest marking fully enabling the cluster.
\end{definition}

\subsection{Structural Properties}
\label{sec:structprop}

As defined before, 
we require Petri nets to be \emph{weakly} connected.
$N$ is \emph{strongly} connected if the graph $(P \cup T,F)$ is strongly-connected, i.e., for any two nodes $x$ and $y$ there is a path leading from $x$ to $y$.

Various subclasses of Petri nets have been defined
based on the network structures they allow.
State machines, also called P-nets, do not allow for transitions with multiple input or output places.
Marked graphs, also called T-nets, do not allow for places with multiple input or output transitions.
In this paper, we focus on \emph{free-choice} nets that are \emph{proper}.

\begin{definition}[Free-choice Net]\label{def:fcne}
Let $N=(P,T,F)$ be a Petri net.
$N$ is \emph{free-choice net} if for any $t_1,t_2 \in T$: ${\pre{t_1}} = {\pre{t_2}}$ or ${\pre{t_1}} \cap {\pre{t_2}} = \emptyset$.
\end{definition}

In free-choice nets, choice and synchronization can be separated.
$(N_2,M_2)$ in Figure~\ref{fig-intro-nfc} is not a free-choice net, because the choice between $t4$ and $t5$ is controlled by the places $p5$ and $p6$.

\begin{definition}[Proper Petri Net]\label{def:proppn}
A Petri net $N=(P,T,F)$ is \emph{proper} if all transitions have input and output places, i.e., for all $t \in T$: $\pre{t} \neq \emptyset$ and $\post{t} \neq \emptyset$.
\end{definition}

Well-formed Petri nets are strongly-connected and therefore also proper. Workflow nets are not strongly-connected, but by definition proper. For the main results in this paper, 
we consider proper Petri nets
instead of enforcing stronger structural or behavioral requirements such as strongly-connectedness, liveness, and boundedness.

\section{Lucent Petri Nets}
\label{sec:lucency}

This paper focuses on \emph{lucent} process models whose states are uniquely identified based on the activities they enable.
Lucency is a generic property that can be formulated in the context of Petri nets.
Given a marked Petri net, we would like to know whether each reachable marking has a unique ``footprint'' in terms of the transitions it enables.
If this is the case, then the Petri net is \emph{lucent}.

\begin{definition}[Lucent Petri nets]\label{def:lucent}
Let $(N,M)$ be a marked Petri net. $(N,M)$ is \emph{lucent} if and only if for any $M_1,M_2 \in R(N,M)$: $\mi{en}(N,M_1)=\mi{en}(N,M_2)$ implies $M_1 = M_2$.
\end{definition}

$(N_1,M_1)$ depicted in Figure~\ref{fig-intro-fc} is lucent.
$(N_2,M_2)$ and $(N_3,M_3)$ in Figures~\ref{fig-intro-nfc} and \ref{fig-locally-safe-not-perpetual} are not lucent.
$(N_4,M_4)$ depicted in Figure~\ref{fig-fc-nonlucid} is also not lucent. Both $[p3,p5,p7]$ and $[p3,p7,p8]$ are reachable from the initial marking and enable the same set of transitions.
\begin{figure}[thb!]
\centering
\includegraphics[width=10.0cm]{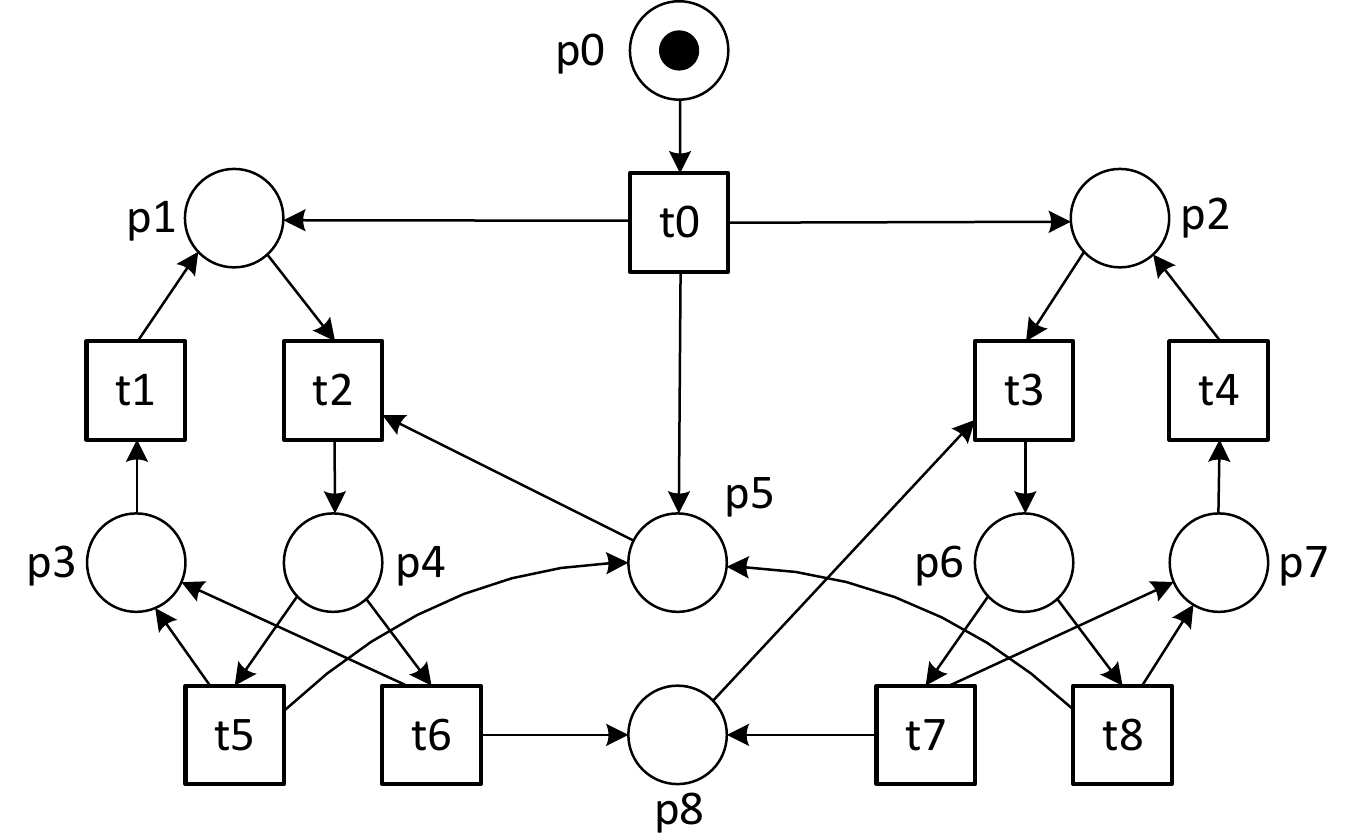}
\caption{$(N_4,M_4)$ is a marked free-choice Petri net that is not lucent: $[p3,p5,p7]$ and $[p3,p7,p8]$ enable $t1$ and $t4$.}
\label{fig-fc-nonlucid}
\end{figure}

Unbounded marked Petri nets are, by definition, not lucent.
However, the examples illustrate that the reverse does not hold.

\begin{proposition}[Boundedness of Lucent Petri Nets]
Any lucent marked Petri net is bounded.
\end{proposition}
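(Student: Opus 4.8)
The plan is to prove the contrapositive: if a marked Petri net $(N,M)$ is unbounded, then it is not lucent. So assume $(N,M)$ is unbounded. By definition there is no $k$ bounding all reachable markings, so for every $k$ there is a reachable marking placing more than $k$ tokens in some place. The standard tool here is the classical observation (essentially the Karp--Miller / coverability argument) that unboundedness implies the existence of two reachable markings $M_1, M_2 \in R(N,M)$ with $M_1 < M_2$ and $M_2$ reachable from $M_1$ via some firing sequence $\sigma$, i.e. $(N,M)[\,\rho\,\rangle(N,M_1)$ and $(N,M_1)[\,\sigma\,\rangle(N,M_2)$ with $M_1 < M_2$. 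Call $d = M_2 \bmin M_1 > [\,]$ the strictly positive ``surplus'' multiset of places.

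\textbf{Key steps.} First I would justify the existence of such a pair $M_1 < M_2$ with $M_2$ reachable from $M_1$. One clean way: enumerate reachable markings along an infinite firing sequence witnessing unboundedness (König's lemma on the reachability tree, plus the fact that $\bag(P)$ is a well-quasi-order by Dickson's lemma, gives an infinite ascending chain, hence in particular one strict pair with the later marking reachable from the earlier one). Second, from $M_1 < M_2$ with $(N,M_1)[\,\sigma\,\rangle(N,M_2)$, I would use monotonicity of firing (if $t$ is enabled at $M$ and $M \leq M'$ then $t$ is enabled at $M'$, and firing preserves the $\leq$-gap) to pump: for every $n \geq 0$, $\sigma^n$ is enabled at $M_1$ and leads to $M_1 \bplus n\cdot d$. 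Third, I would compare enabled sets: I claim $\mi{en}(N, M_1 \bplus n\cdot d) = \mi{en}(N, M_1 \bplus (n{+}1)\cdot d)$ is \emph{not} quite automatic, so instead I use the sharper fact that the sequence of enabled sets $\mi{en}(N, M_1), \mi{en}(N, M_1\bplus d), \mi{en}(N, M_1 \bplus 2d), \ldots$ is monotonically nondecreasing (adding tokens only enables more transitions) and takes values in the finite set $\mathcal{P}(T)$; hence it stabilizes, so there exist $m < n$ with $\mi{en}(N, M_1 \bplus m\cdot d) = \mi{en}(N, M_1 \bplus n\cdot d)$ while $M_1 \bplus m\cdot d \neq M_1 \bplus n\cdot d$ (since $d > [\,]$). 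Both markings are reachable from $M$, so $(N,M)$ is not lucent.

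\textbf{Main obstacle.} The delicate point is Step 3: simply having $M_1 < M_2$ does not by itself give two reachable markings with equal enabled sets, because adding tokens can genuinely enable new transitions. The fix is to iterate the pumping far enough that the (monotone, $\mathcal{P}(T)$-valued, hence eventually constant) sequence of enabled sets repeats — this is where finiteness of $T$ does the real work. A secondary subtlety is making sure the two coinciding markings are genuinely \emph{distinct} as multisets; this is immediate from $d \neq [\,]$, so $M_1 \bplus m\cdot d \neq M_1 \bplus n\cdot d$ whenever $m \neq n$. Assembling these pieces yields the proposition; no heavy machinery beyond monotonicity of the firing rule and a pigeonhole argument on $\mathcal{P}(T)$ is needed.
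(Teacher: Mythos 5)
Your proof is correct, but it takes a much heavier route than the paper. The paper's argument is a two-line counting observation: lucency says the map $M' \mapsto \mi{en}(N,M')$ is injective on $R(N,M)$, and its image lies in the finite set $\mathcal{P}(T)$, so $\card{R(N,M)} \leq 2^{\card{T}}$; a finite reachability set is trivially bounded. You instead prove the contrapositive via the classical Karp--Miller characterization of unboundedness (existence of $M_1 < M_2$ with $M_2$ reachable from $M_1$), pump to obtain the family $M_1 \bplus n\cdot d$, and then use monotonicity plus pigeonhole on $\mathcal{P}(T)$ to find two distinct reachable markings with the same enabled set. All three steps are sound, and your approach has the small bonus of exhibiting an explicit witness pair violating lucency, which the counting argument does not. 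One point of imprecision worth noting in your Step 1: K\"onig's lemma applied to the full reachability tree does not by itself produce a path witnessing unboundedness (any bounded net with a cycle also has an infinite reachability tree); the clean justification truncates branches at markings dominated by an ancestor, shows the truncated tree is finite by Dickson plus K\"onig, and then argues that if every domination were an equality the reachable set would be finite. Since the fact you invoke is a standard textbook result, this is a presentational gap rather than a mathematical one --- but had you looked for the cheapest proof, the injectivity-into-$\mathcal{P}(T)$ argument is the one to use.
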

\begin{proof}
A marked net with $n= \card{T}$ transitions cannot have more than $2^n$ possible sets of enabled transitions. 
Lucency implies that each set of enabled transitions corresponds to a unique marking. 
Hence, there cannot be more than $2^n$ reachable markings (implying boundedness).
\end{proof}
%\begin{proof}
%Assume that $(N,M)$ is both lucent and unbounded. We will show that this leads to a contradiction.
%Since $(N,M)$ is unbounded, we can find markings $M_1$ and $M_2$ and sequences $\sigma_0$ and $\sigma$ such that $(N,M) [\sigma_0\rangle (N,M_1) [\sigma\rangle (N,M_2)$ and $M_2$ is strictly larger than $M_1$.
%This implies that we can repeatedly execute $\sigma$ getting increasingly larger markings:
%$(N,M_2) [\sigma\rangle (N,M_3) [\sigma\rangle (N,M_4) [\sigma\rangle (N,M_5) \ldots$.
%At some stage, say at $M_k$, the set of places that is marked stabilizes. However, the number of tokens in some places continues to increase in $M_{k+1}$, $M_{k+2}$, etc.
%Hence, we find markings that enable the same set of transitions, but that are not the same. For example, $M_{k+1} \neq M_{k+2}$  and $\mi{en}(N,M_{k+1})=\mi{en}(N,M_{k+2})$. Hence, the net cannot be lucent.
%\end{proof}

We would like to find subclasses of nets that are guaranteed to be lucent based on their structure.
At first, one is tempted to think that bounded free-choice nets are lucent.
However, as Figure~\ref{fig-locally-safe-not-perpetual} and Figure~\ref{fig-fc-nonlucid} show, this is not sufficient.

Lucency is related to the notion of transparency, i.e., 
all tokens are in the input places of enabled transitions and therefore not ``hidden''.

\begin{definition}[Transparent Marking]\label{def:trabsp}
Let $(N,M)$ be a marked Petri net. Marking $M$ is a \emph{transparent} marking of $N$ if and only if $M = 
[ p \in P \mid \exists_{t \in \mi{en}(N,M)} \ p \in \pre{t}]$.
$(N,M)$ is fully transparent if and only if each reachable marking is transparent. 
\end{definition}

Full transparency implies lucency, but the reverse does not hold.
Actually, full transparency does not allow for synchronization and concurrency and is therefore very limiting.

\begin{proposition}
Let $(N,M)$ be a marked Petri net.
If $(N,M)$ is fully transparent, then $(N,M)$ is lucent.
The reverse does not hold.
\end{proposition}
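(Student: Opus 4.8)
The plan is to prove both halves of the statement separately, starting with the easy direction (full transparency implies lucency) and then exhibiting a concrete counterexample for the converse.

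\textbf{Direction 1: full transparency implies lucency.}
Suppose $(N,M)$ is fully transparent, and let $M_1, M_2 \in R(N,M)$ with $\mi{en}(N,M_1) = \mi{en}(N,M_2)$. Since every reachable marking is transparent, I would apply Definition~\ref{def:trabsp} to both $M_1$ and $M_2$, obtaining
$M_1 = [\, p \in P \mid \exists_{t \in \mi{en}(N,M_1)} \ p \in \pre{t}\,]$ and $M_2 = [\, p \in P \mid \exists_{t \in \mi{en}(N,M_2)} \ p \in \pre{t}\,]$.
The right-hand sides are determined entirely by the sets $\mi{en}(N,M_1)$ and $\mi{en}(N,M_2)$, which are equal by hypothesis, so $M_1 = M_2$. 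Hence $(N,M)$ is lucent by Definition~\ref{def:lucent}. (One small point worth noting in passing: a transparent marking is always safe, since in the characterizing multiset each place occurs with multiplicity at most $1$; this is consistent with Proposition on boundedness but not needed for the argument.)

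\textbf{Direction 2: the converse fails.}
Here I would point to a marked Petri net that is already available in the paper, namely $(N_1,M_1)$ in Figure~\ref{fig-intro-fc}, which the text has already asserted is lucent. It remains to observe that it is not fully transparent: the initial marking $[p1]$ enables $t1$ and $t2$, whose joint input place is $p1$, so $[p1]$ happens to be transparent — but one should instead examine a reachable marking arising after a synchronization or after a token is deposited in a place that is not the input of any enabled transition. Concretely, the home marking $[p4]$ enables no transition at all (the cluster $C_4 = \{p4\}$ has no transitions), so the characterizing multiset $[\, p \in P \mid \exists_{t \in \mi{en}(N_1,[p4])}\ p \in \pre{t}\,]$ is the empty multiset $[\,]$, which differs from $[p4]$. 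Therefore $[p4]$ is reachable but not transparent, so $(N_1,M_1)$ is lucent but not fully transparent. This establishes that the reverse implication does not hold.

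\textbf{Main obstacle.}
There is essentially no technical difficulty in Direction 1 — it is a one-line unfolding of the definitions. The only thing requiring care is the choice of counterexample for the converse: one must pick a net that the paper has already certified as lucent (to avoid re-proving lucency from scratch) and then identify a reachable marking that is manifestly non-transparent. Using a terminal/home marking that enables nothing — so that the transparency-characterizing multiset collapses to $[\,]$ while the marking itself is nonempty — is the cleanest route, and $(N_1,M_1)$ supplies exactly such a marking. An alternative, if one prefers a non-dead witness, is any small net with a genuine synchronization where a token sits in a place feeding a transition that is not yet enabled because its other input place is empty; but reusing Figure~\ref{fig-intro-fc} keeps the proof short and self-contained.
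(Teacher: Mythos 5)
Your proof is correct and follows essentially the same route as the paper: the forward direction is the immediate unfolding of Definitions~\ref{def:trabsp} and \ref{def:lucent} (the paper in fact states the proposition without writing this out), and the converse is refuted by pointing to a net the text has already certified as lucent. The only difference is the choice of witness: the paper uses $(N_5,M_5)$ of Figure~\ref{fig-home-lucent} with the non-dead reachable marking $[p4,p7]$, where the token in $p7$ is hidden behind a pending synchronization, whereas you use the dead home marking $[p4]$ of $(N_1,M_1)$ --- both are valid, though the paper's choice better illustrates its surrounding remark that full transparency rules out synchronization and concurrency, not merely terminal markings.
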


Figure~\ref{fig-home-lucent}
shows a marked free-choice Petri net that is lucent but not fully transparent.
Consider, for example, the reachable marking $[p4,p7]$ enabling $t5$. There is only one reachable marking which enables only $t5$.
However, marking $[p4,p7]$ is not transparent since the token in $p7$ is ``hidden''.
\begin{figure}[thb!]
\centering
\includegraphics[width=9.0cm]{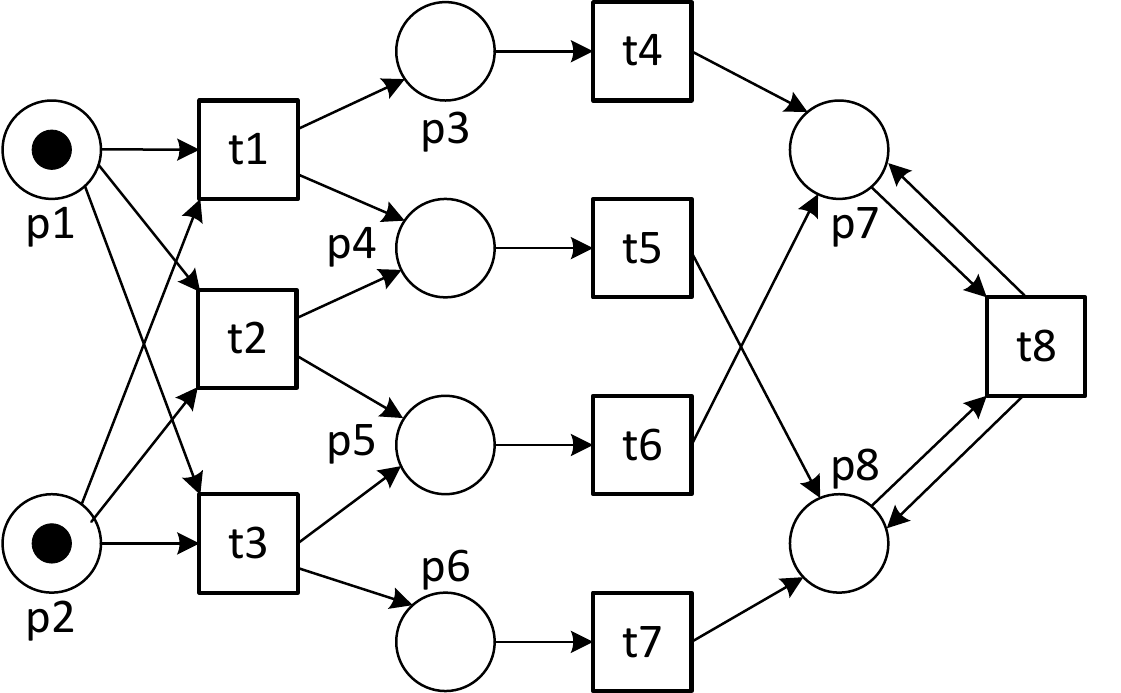}
\caption{$(N_5,M_5)$ is a marked free-choice Petri net that is lucent but not fully transparent.}
\label{fig-home-lucent}
\end{figure}

\section{Free-Choice Nets With Home Clusters Are Lucent}
\label{sec:arelucent}

In \cite{lucent-PN2018}, we defined the class of perpetual nets in an attempt to identify a large class of lucent Petri nets. 
Here, we aim to substantially extend the class of Petri nets 
that is guaranteed to be lucent.
Like in \cite{lucent-PN2018} we use the notion of \emph{home clusters},
but drop the liveness and boundedness requirements.
Actually, none of the Petri nets shown in this paper is perpetual, including the two lucent nets $(N_1,M_1)$ and $(N_5,M_5)$.

\begin{definition}[Home Clusters]\label{def:homeclust}
Let $(N,M)$ be marked Petri net. $C$ is a \emph{home cluster} of $(N,M)$  if and only if $C \in \cluster{N}$ (i.e., $C$ is a cluster) and 
$\mi{Mrk}(C)$ is a home marking of $(N,M)$. If such a $C$ exists, we say that $(N,M)$ has a home cluster.
\end{definition}

Note that a home marking may be dead, but then it should be unique, i.e., a clear \emph{termination point}.
If the initial marking is a home marking, it can be seen as a \emph{regeneration point}.

A mentioned before, 
the key results in this paper apply only to proper Petri nets where all
transitions have input and output places. 
It is always possible to add a self-loop place to ensure this (without changing the behavior).
Moreover, a Petri net having a transition without any input places 
and at least one output place, is unbounded for any initial marking 
and therefore non-lucent.
Transitions without output places make most sense in unbounded nets (which are non-lucent).
Adding a self-loop place to make the Petri net proper, typically results in a model that has no home cluster. However, such models tend to be unbounded and therefore non-lucent anyway.

Note that in literature most authors consider well-formed Petri nets. These are strongly-connected and therefore also proper. Here, we consider a substantially larger class of models.
For example, 
the marked nets $(N_1,M_1)$, $(N_2,M_3)$, 
$(N_4,M_4)$, and $(N_5,M_5)$ are not well-formed, but proper. 
Actually, $(N_3,M_3)$ in Figure~\ref{fig-locally-safe-not-perpetual} is the only well-formed net in this paper (and therefore also proper).
This paper shows that we can drop the well-formedness requirement
and still ensure lucency.

\subsection{Properties of Home Clusters}
\label{sec:prophc}

We first explore some of the essential properties of home clusters
in marked proper free-choice nets.
First, we show that there are two types of clusters: (1) just an isolated end place or (2) a set of places sharing one or more output transitions.

\begin{proposition}[Two Types Of Clusters]\label{prop:cluschar}
Let $(N,M)$ be a marked proper Petri net having a home cluster $C$.
If there is a reachable marking $M' \in R(N,M)$ that is dead, 
then $M' = \mi{Mrk}(C)$, 
$\card{\mi{Pl}(C)} = 1$, and $\mi{Tr}(C) = \emptyset$.
If $(N,M)$ is deadlock-free, then $\mi{Tr}(C) \neq \emptyset$.
\end{proposition}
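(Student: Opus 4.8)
The statement has two parts, and I would handle them separately, both leaning on the defining property of a home cluster: $\mi{Mrk}(C)$ is a home marking, so it is reachable from every reachable marking.

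First, suppose $M' \in R(N,M)$ is dead, i.e. $\mi{en}(N,M') = \emptyset$. Since $\mi{Mrk}(C)$ is a home marking, $\mi{Mrk}(C) \in R(N,M')$; but a dead marking has no outgoing firing sequence other than the empty one, so the only marking reachable from $M'$ is $M'$ itself. Hence $M' = \mi{Mrk}(C)$, which already tells us $\mi{Mrk}(C)$ is dead. Now I use that $N$ is proper: every transition $t$ has a nonempty preset. If some transition $t$ had all its input places inside $\mi{Pl}(C)$, then $t$ would be enabled in $\mi{Mrk}(C) = [p \in \mi{Pl}(C)]$, contradicting deadness. In particular, take any $t \in \mi{Tr}(C)$: by the cluster definition (clause (3)), $\pre{t} \subseteq C$, and since $\pre{t}$ consists of places, $\pre{t} \subseteq \mi{Pl}(C)$ — so $t$ would be enabled in $\mi{Mrk}(C)$. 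Therefore $\mi{Tr}(C) = \emptyset$. It remains to show $\card{\mi{Pl}(C)} = 1$. Here I again use the cluster definition: if $p \in \mi{Pl}(C)$ then $\post{p} \subseteq C$, but $\post{p} \subseteq T$ and $\mi{Tr}(C) = \emptyset$, so $\post{p} = \emptyset$. Thus every place in $C$ is a sink place. If $\card{\mi{Pl}(C)} \geq 2$, the cluster would consist of two or more places with no transitions and no arcs among them — but a cluster is, by construction, a connected piece (it is built by following place-transition arcs), so a cluster containing two places must contain a transition linking toward them, contradiction. Hence $\card{\mi{Pl}(C)} = 1$. (I should double-check the precise sense in which $\cluster{N}$ "partitions into connected pieces" from Definition~\ref{def:clust}; if the bare definition does not immediately give connectedness of each block, the fallback argument is: a place $p$ with $\post{p} = \emptyset$ has $\cluster{p} = \{p\}$ directly from the minimality in Definition~\ref{def:clust}, since $\{p\}$ already satisfies clauses (1)–(3) vacuously.)

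For the second part, assume $(N,M)$ is deadlock-free, so every reachable marking enables at least one transition. Apply this to $M' := \mi{Mrk}(C)$, which is reachable (it is a home marking, hence in particular reachable from $M$). So some transition $t$ is enabled in $\mi{Mrk}(C)$, meaning $\pre{t} \subseteq \mi{Pl}(C)$. Since $N$ is free-choice is not even needed here — I only need that $\pre{t}$ is nonempty (properness) and contained in $\mi{Pl}(C)$. Now pick any place $p \in \pre{t}$; then $p \in C$, and by clause (2) of the cluster definition $\post{p} \subseteq C$, so $t \in \post{p} \subseteq C$, i.e. $t \in \mi{Tr}(C)$. Therefore $\mi{Tr}(C) \neq \emptyset$.

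The main obstacle, as I see it, is purely bookkeeping rather than anything deep: making airtight the claim $\card{\mi{Pl}(C)} = 1$ in the dead case, i.e. correctly invoking the structure of clusters. The cleanest route is to avoid any appeal to "connectedness" of clusters and instead argue directly from the minimality in Definition~\ref{def:clust}: once we know $\mi{Tr}(C) = \emptyset$ and hence every $p \in \mi{Pl}(C)$ is a sink, observe that for such a sink place $p$ the singleton $\{p\}$ satisfies all three closure conditions, so $\cluster{p} = \{p\}$ by minimality; since $C = \cluster{x}$ for some node $x$ and $C$ contains such a $p$, we get $C = \cluster{p} = \{p\}$, giving $\card{\mi{Pl}(C)} = 1$ and (again) $\mi{Tr}(C) = \emptyset$. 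Everything else is a direct unfolding of the definitions of home marking, dead marking, deadlock-freeness, properness, and cluster.
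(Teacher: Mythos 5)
Your proof is correct and follows essentially the same route as the paper's: identify the dead marking with $\mi{Mrk}(C)$ via the home-marking property, observe that any transition of $C$ would be enabled in $\mi{Mrk}(C)$ (forcing $\mi{Tr}(C)=\emptyset$ and hence a singleton cluster by minimality of $\cluster{p}$), and use properness plus the cluster closure conditions for the deadlock-free case. Your write-up is in fact slightly more explicit than the paper's on the singleton step and on why an enabled transition in $\mi{Mrk}(C)$ must belong to $C$, but there is no substantive difference.
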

\begin{proof}
From any reachable marking, one can reach $\mi{Mrk}(C)$.
Hence, if there is a dead marking, then $\mi{Mrk}(C)$ can be the only reachable dead marking.
If not, $\mi{Mrk}(C)$ would not be reachable from this alternative dead marking.
If all places in $\mi{Pl}(C)$ are marked, all transitions $\mi{Tr}(C)$ must be enabled. Hence, $\mi{Tr}(C) = \emptyset$ (otherwise $\mi{Pl}(C)$ cannot be dead).
If $\mi{Tr}(C) = \emptyset$, then the cluster must be a singleton, i.e., $C = \{p_C\}$ (transitions are needed to enlarge the cluster, see Definition~\ref{def:clust}).
If $(N,M)$ is deadlock-free, $\mi{Mrk}(C)$ can be reached and should not be dead. Hence, $\mi{Tr}(C) \neq \emptyset$.
\end{proof}

Most of the results for home markings only apply to \emph{well-formed} free-choice nets,
e.g., 
S-Coverability Theorem,
T-Coverability Theorem,
Rank Theorem,
Duality Theorem, 
Completeness of Reduction Rules Theorem,
Existence of Home Markings Theorem,
Blocking Marking Theorem,
and Home Marking Theorem
\cite{bestfcn,structure-theory-ToPNoC-advanced-course2010,deselesparza}.
We focus on proper free-choice nets
and do \emph{not} require liveness to ensure boundedness, as is shown next.

\begin{definition}[Expedite a Transition in a Transition Sequence]\label{def:expedite}
Let $N=(P,T,F)$ be a free-choice net, $M \in \bag(P)$, 
$\sigma = \langle t_1,t_2, \ldots ,t_i, \ldots ,t_j,\allowbreak \ldots ,t_n \rangle \in T^*$, 
$(N,M)[\sigma\rangle$ (i.e., the sequence $\sigma$ is enabled),
and $1 \leq i < j \leq n$. 
$\mi{exp}_{(N,M)}(\sigma,i,j) = \mi{true}$ if and only if 
\begin{itemize}[noitemsep,topsep=2pt]
\item $(N,M)\allowbreak[\langle t_1,t_2, \ldots ,t_{i-1},t_j\rangle \rangle$ (i.e., it is possible to execute the prefix involving the first $i-1$ transitions followed by $t_j$),  and
\item  $\cluster{t_k} \neq \cluster{t_j}$ for all $k \in \{i, \ldots, j-1\}$ (i.e.,  $t_j$ is the first transition of the respective cluster after $t_{i-1}$).
\end{itemize}
$\mi{exp}_{(N,M)}(\sigma,i,j)$ denotes that the $j$-th transition can be \emph{expedited} by moving $t_j$ to position $i$.
$\sigma_{i \leftarrow j} = \langle t_1,t_2, \ldots ,t_{i-1},t_j,t_i, \ldots ,t_{j-1},t_{j+1} \ldots ,t_n \rangle$ is the corresponding transition sequence where the 
$j$-th transition is moved to the $i$-th position. 

$\mi{Exp}_{(N,M)}(\sigma) \subseteq T^*$ is the subset of all transition sequences that can be obtained by repeatedly expediting transitions, i.e.,
$\mi{Exp}_{(N,M)}(\sigma)$ is the smallest set such that:
\begin{itemize}[noitemsep,topsep=2pt]
\item $\sigma \in \mi{Exp}_{(N,M)}(\sigma)$ and
\item $\sigma'_{i \leftarrow j} \in \mi{Exp}_{(N,M)}(\sigma)$ if $\sigma' \in \mi{Exp}_{(N,M)}(\sigma)$, $1 \leq i < j \leq \card{\sigma'}$, and $\mi{exp}_{(N,M)}(\sigma',i,j)$.
\end{itemize}
\end{definition}

Any $\sigma' \in \mi{Exp}_{(N,M)}(\sigma)$ is a permutation of $\sigma$ and, as we will show next, is enabled if $\sigma$ is enabled.
$\sigma_{i \leftarrow j}$ moves the $j$-th transition to the $i$-th position and is enabled at that position.
Consider $(N_5,M_5)$ in Figure~\ref{fig-home-lucent} and $\sigma = 
\langle t_2,t_5,t_6,t_8,t_8 \rangle$, $\sigma_{2 \leftarrow 3} = \langle t_2,t_6,t_5,t_8,t_8 \rangle$ and $\sigma_{2 \leftarrow 4} = \sigma_{2 \leftarrow 5} = \langle t_2,t_8,t_5,t_6,t_8 \rangle$.
$\sigma_{2 \leftarrow 3} \in \mi{Exp}_{(N_5,M_5)}(\sigma)$, because $\langle t_2,t_6\rangle$ is possible and $t_6$ is the first transition of the respective cluster. 
$\sigma_{2 \leftarrow 4} \not\in \mi{Exp}_{(N_5,M_5)}(\sigma)$, because $\langle t_2,t_8\rangle$ is not possible ($t_8$ is not enabled yet).
Next, we show that expediting transitions is possible and leads to the same marking.

\begin{lemma}[Expediting Transitions Is Safe]\label{lemma:reord}
Let $N=(P,T,F)$ be a free-choice net, 
$M,M' \in \bag(P)$, and
$\sigma \in T^*$ such that
$(N,M)[\sigma \rangle (N,M')$.
For any $\sigma' \in \mi{Exp}_{(N,M)}(\sigma)$:
$(N,M)[\sigma' \rangle (N,M')$.
\end{lemma}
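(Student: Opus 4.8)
The plan is to reduce the statement to a single \emph{expediting step} and then bootstrap along the inductive definition of $\mi{Exp}_{(N,M)}(\sigma)$. Since $\mi{Exp}_{(N,M)}(\sigma)$ is by definition the smallest set that contains $\sigma$ and is closed under $\sigma' \mapsto \sigma'_{i \leftarrow j}$ for those $i<j$ with $\mi{exp}_{(N,M)}(\sigma',i,j)$, it suffices to show that the property ``$(N,M)[\sigma'\rangle(N,M')$'' is preserved by that operation; the base case $\sigma'=\sigma$ is the hypothesis. So fix $\sigma' = \langle t_1,\ldots,t_n\rangle$ with $(N,M)[\sigma'\rangle(N,M')$, write $M_0 = M$ and let $M_\ell$ be the marking reached after $\langle t_1,\ldots,t_\ell\rangle$ (so $M_n = M'$), and assume $\mi{exp}_{(N,M)}(\sigma',i,j)$. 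Since $\sigma'$ and $\sigma'_{i\leftarrow j}$ share the prefix $\langle t_1,\ldots,t_{i-1}\rangle$ (reaching $M_{i-1}$) and the suffix $\langle t_{j+1},\ldots,t_n\rangle$, everything reduces to showing that from $M_{i-1}$ the reordered block $\langle t_j,t_i,t_{i+1},\ldots,t_{j-1}\rangle$ is enabled and again reaches $M_j$, the marking that the original block $\langle t_i,\ldots,t_{j-1},t_j\rangle$ reaches.

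The structural fact I would use is that transitions in different clusters have disjoint input sets: if $p \in \pre{t_a}\cap\pre{t_b}$, then $p\in\cluster{t_a}$ and hence $t_b\in\post{p}\subseteq\cluster{t_a}$ by clauses~(2)--(3) of Definition~\ref{def:clust}, so $\cluster{t_a}=\cluster{t_b}$ because clusters partition $P\cup T$. Thus the second condition of $\mi{exp}_{(N,M)}(\sigma',i,j)$, namely $\cluster{t_k}\neq\cluster{t_j}$ for $k\in\{i,\ldots,j-1\}$, gives $\pre{t_j}\cap\pre{t_k}=\emptyset$ for all such $k$, while the first condition gives $(N,M_{i-1})[t_j\rangle$.

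The core is then an auxiliary ``diamond'' claim, proved by induction on $m$: whenever $(N,L)[\langle s_1,\ldots,s_m\rangle\rangle(N,L_m)$, $(N,L)[t\rangle$, and $\cluster{t}\neq\cluster{s_k}$ for every $k$, we have $(N,L)[\langle t,s_1,\ldots,s_m\rangle\rangle(N,(L_m\bmin\pre{t})\bplus\post{t})$. In the step case, firing $t$ removes no token from $\pre{s_1}$ and firing $s_1$ removes no token from $\pre{t}$ (disjoint presets), so each of $t$ and $s_1$ remains enabled across the other and the two firings commute as pointwise updates of the marking; one then applies the induction hypothesis to $\langle s_2,\ldots,s_m\rangle$. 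Instantiating this claim with $L=M_{i-1}$, $\langle s_1,\ldots,s_m\rangle=\langle t_i,\ldots,t_{j-1}\rangle$ and $t=t_j$ yields $(N,M_{i-1})[\langle t_j,t_i,\ldots,t_{j-1}\rangle\rangle(N,(M_{j-1}\bmin\pre{t_j})\bplus\post{t_j})$, and $(M_{j-1}\bmin\pre{t_j})\bplus\post{t_j}=M_j$ since $(N,M_{j-1})[t_j\rangle(N,M_j)$. Splicing the reordered block between the shared prefix and suffix gives $(N,M)[\sigma'_{i\leftarrow j}\rangle(N,M')$, which closes the induction.

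I expect the only delicate points to be index bookkeeping --- lining up the reordered block with the exact definition of $\sigma'_{i\leftarrow j}$ --- and, inside the diamond claim, verifying not merely that the reordered block is enabled but that it reaches \emph{precisely} $M_j$ (rather than some marking that merely enables the same transitions). Both follow from the disjoint-preset observation, and it is worth noting that, apart from the cluster structure, the argument does not really need the free-choice property itself.
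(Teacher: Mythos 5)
Your proof is correct and follows essentially the same route as the paper: induction over the iterative construction of $\mi{Exp}_{(N,M)}(\sigma)$, reduced to a single expediting step, justified by the observation that $t_j$ consumes only from places in $\cluster{t_j}$ while $t_i,\ldots,t_{j-1}$ consume only from other (disjoint) clusters, so the block commutes and reaches the same intermediate marking $M_j$ before the shared suffix. Your explicit disjoint-preset fact and the auxiliary diamond claim simply make rigorous what the paper states informally, and your closing remark is accurate as well --- the paper's argument likewise relies only on the cluster structure rather than on free-choiceness.
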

\begin{proof}
Assume $N=(P,T,F)$ is a free-choice net and $M$, $M'$, and $\sigma$ are such that $(N,M)[\sigma \rangle (N,M')$.
$\mi{Exp}_{(N,M)}(\sigma)$ is defined as the smallest set such that (1) $\sigma \in \mi{Exp}_{(N,M)}(\sigma)$ and (2)
$\sigma'_{i \leftarrow j} \in \mi{Exp}_{(N,M)}(\sigma)$ if $\sigma' \in \mi{Exp}_{(N,M)}(\sigma)$, $1 \leq i < j \leq \card{\sigma}$, and $\mi{exp}_{(N,M)}(\sigma',i,j)$.
We provide a proof using induction based on the iterative construction of $\mi{Exp}_{(N,M)}(\sigma)$.

(1) The base step  $\sigma'= \sigma$  obviously holds, because $\sigma \in \mi{Exp}_{(N,M)}(\sigma)$ and $(N,M)[\sigma \rangle (N,M')$.

(2) For the inductive step, it suffices to prove that
$(N,M)[\sigma'_{i \leftarrow j} \rangle (N,M')$ assuming that $\sigma' = \langle t_1, \ldots ,t_n \rangle \in \mi{Exp}_{(N,M)}(\sigma)$, $1 \leq i < j \leq n$, $\mi{exp}_{(N,M)}(\sigma',i,j)$, and $(N,M)[\sigma' \rangle (N,M')$. 
We need to prove that $\sigma'_{i \leftarrow j} = \langle t_1,t_2, \ldots ,t_{i-1},\allowbreak t_j,\allowbreak t_i, \ldots ,t_{j-1},t_{j+1} \ldots ,t_n \rangle$ is indeed enabled
and leads to the same final marking, i.e.,  $(N,M)[\sigma'_{i \leftarrow j} \rangle (N,M')$. 
Let $M''$ be the marking after firing the first $i-1$ transitions, i.e., $(N,M)\allowbreak[\langle t_1,t_2, \ldots ,t_{i-1}\rangle \rangle (N,M'')$.
$t_j \in \mi{en}(N,M'')$ because $\mi{exp}_{(N,M)}(\sigma',i,j)$ (see first condition).
The transitions $t_i, \ldots ,t_{j-1}$ do not consume any tokens from $\cluster{t_j}$ (use the second condition in $\mi{exp}_{(N,M)}(\sigma',i,j)$ stating that $\cluster{t_k} \neq \cluster{t_j}$ for all $k \in \{i, \ldots, j-1\}$) 
and therefore can still be executed ($t_j$ only consumed tokens from places in $\cluster{t_j}$).
The marking reached after $\langle t_1,t_2, \ldots ,t_{i-1},\allowbreak t_j,\allowbreak t_i, \ldots ,t_{j-1} \rangle$ is the same as
reached after prefix $\langle t_1,t_2, \ldots ,t_{i-1},\allowbreak t_i, \ldots ,t_{j-1},t_j \rangle$. 
Moreover, the same subsequence of transitions $\langle t_{j+1} \ldots ,t_n \rangle$ remains.
Hence, $(N,M)[\sigma'_{i \leftarrow j} \rangle (N,M')$ thus completing the proof.
\end{proof}

Note that for any $\sigma' \in \mi{Exp}_{(N,M)}(\sigma)$: $[t \in \sigma'] = [t \in \sigma]$ (i.e., $\sigma'$ and $\sigma$ are permutations of the same multiset) and the order per cluster does not change, i.e., transitions can only ``overtake'' transitions of other clusters.
Lemma~\ref{lemma:reord} shows that expediting transitions does not jeopardize the ability to execute the remainder of an enabled firing sequence. 
This can be used to show that it is impossible to have a marking dominating the home marking (i.e., one cannot reach a strictly larger marking). 

\begin{theorem}[No Dominating Markings in Free-Choice Nets With a Home Cluster]\label{theo:dommark}
Let $(N,M)$ be a marked proper free-choice net having a home cluster $C$.
For all $M' \in R(N,M)$: if $M' \geq \mi{Mrk}(C)$, 
then $M' = \mi{Mrk}(C)$.
\end{theorem}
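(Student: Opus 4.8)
The plan is to argue by contradiction using the home-cluster property together with Lemma~\ref{lemma:reord}. Suppose some reachable marking $M'$ satisfies $M' > \mi{Mrk}(C)$, i.e., $M' \geq \mi{Mrk}(C)$ but $M' \neq \mi{Mrk}(C)$. Since $C$ is a home cluster, $\mi{Mrk}(C)$ is a home marking, so there is a firing sequence $\sigma$ with $(N,M')[\sigma\rangle(N,\mi{Mrk}(C))$. The key observation I would make first is that in $M'$ every place of $\mi{Pl}(C)$ is already marked, hence every transition in $\mi{Tr}(C)$ is enabled in $M'$ (and $\mi{Tr}(C) \neq \emptyset$, since otherwise by Proposition~\ref{prop:cluschar}-style reasoning $C=\{p_C\}$ is a singleton sink place, and then $M' > [p_C]$ would mean $M'$ has a token outside $C$ or an extra token on $p_C$; I'd need to handle this degenerate case separately but it is easy, as no transition can ever remove the surplus). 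So assume $\mi{Tr}(C)\neq\emptyset$ and pick any $t_C \in \mi{Tr}(C)$.

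Next I would use the expediting machinery. Firing $t_C$ consumes exactly the tokens on $\mi{Pl}(C) = \pre{t_C}$ (free-choiceness guarantees $\pre{t}=\pre{t_C}=\mi{Pl}(C)$ for all $t\in\mi{Tr}(C)$). Along $\sigma$, the cluster $C$ must eventually be "emptied and refilled" to land exactly on $\mi{Mrk}(C)$; in particular, $\sigma$ must fire some transition of $\mi{Tr}(C)$, and I would take the first such occurrence, say it is $\sigma_j \in \mi{Tr}(C)$. Then $\mi{exp}_{(N,M')}(\sigma,1,j)$ holds: the first condition is met because $\sigma_j$ is enabled already in $M'$ (all of $\mi{Pl}(C)$ is marked), and the second because $\sigma_j$ is the first transition of cluster $C$ in $\sigma$. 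By Lemma~\ref{lemma:reord}, $\sigma_{1\leftarrow j} \in \mi{Exp}_{(N,M')}(\sigma)$ is also enabled from $M'$ and reaches $\mi{Mrk}(C)$. So we may as well assume $\sigma$ starts by firing a transition $t_C\in\mi{Tr}(C)$: $(N,M')[t_C\rangle(N,M'')[\rho\rangle(N,\mi{Mrk}(C))$ with $\sigma = \langle t_C\rangle\cdot\rho$.

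Now I would compare this with the same step fired from $\mi{Mrk}(C)$ itself. Since $M' \geq \mi{Mrk}(C)$ and both markings enable $t_C$, we have $M'' = (M'\bmin\pre{t_C})\bplus\post{t_C} \geq (\mi{Mrk}(C)\bmin\pre{t_C})\bplus\post{t_C} =: M_C''$, and moreover $M'' - M_C'' = M' - \mi{Mrk}(C) =: \Delta \neq []$ (firing the same transition preserves the difference). The idea is then to push this surplus $\Delta$ all the way through: by the monotonicity of enabledness and firing under $\leq$, the sequence $\rho$ — which from $M''$ reaches $\mi{Mrk}(C)$ — is also enabled from $M_C''$ and from it reaches $\mi{Mrk}(C) \bmin \Delta$. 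But $\mi{Mrk}(C)$ is a set-marking (each place at most one token), so $\mi{Mrk}(C)\bmin\Delta$ with $\Delta\neq[]$ is a marking strictly below $\mi{Mrk}(C)$ — in particular it is a genuine marking (nonnegative), which forces $\Delta \leq \mi{Mrk}(C)$, i.e., $\Delta$ is itself a sub-set of $\mi{Pl}(C)$. So $\mi{Mrk}(C)\bmin\Delta$ is reachable from $\mi{Mrk}(C)$ via $\langle t_C\rangle\cdot\rho$. The contradiction I am aiming for: from $\mi{Mrk}(C)\bmin\Delta$, can we still reach $\mi{Mrk}(C)$ (as the home-marking property demands, if $\mi{Mrk}(C)\bmin\Delta$ is reachable from $M$)? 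We have fewer tokens and the total token count can only change in a controlled way — actually the cleanest finish is: iterate the whole argument. From $\mi{Mrk}(C)$ we reached $\mi{Mrk}(C)\bmin\Delta$; since $\mi{Mrk}(C)$ is a home marking we can return to it; so $\mi{Mrk}(C)\bmin\Delta$ also reaches $\mi{Mrk}(C)$, hence is $\geq$-incomparable-free... The sharper route: $M'\geq\mi{Mrk}(C)$ and $M'$ reaches $\mi{Mrk}(C)$, so by the $\leq$-simulation, $\mi{Mrk}(C)$ reaches $\mi{Mrk}(C)\bmin\Delta$, and then by induction $\mi{Mrk}(C)\bmin k\Delta$ is reachable for all $k\geq 1$, which is impossible since markings are nonnegative and $\Delta\neq[]$.

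\textbf{Main obstacle.} The delicate point is the $\leq$-simulation claim — that if $M_1 \leq M_2$ and a sequence is enabled and terminates from $M_1$, the same sequence is enabled from $M_2$ and the final-marking difference is preserved. For the "enabled" direction this is the standard monotonicity of Petri net firing and is trivial; the subtlety is only in making sure we transport the surplus $\Delta$ correctly and that we are allowed to conclude $\mi{Mrk}(C)\bmin\Delta$ is a legitimate (nonnegative) marking, which pins down $\Delta\leq\mi{Mrk}(C)$. The other fiddly bit is the expediting step: verifying that $\sigma_j$, the first $\mi{Tr}(C)$-transition of $\sigma$, really is enabled in $M'$ — this is where properness and free-choiceness ($\pre{t}=\mi{Pl}(C)$ for all $t\in\mi{Tr}(C)$) are used, together with $M'\geq\mi{Mrk}(C)$. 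Once those two lemmatic facts are in place, the infinite-descent contradiction closes the proof.
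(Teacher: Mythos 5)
Your setup is fine as far as it goes: reducing to the case $\mi{Tr}(C)\neq\emptyset$ (the singleton-place case is handled essentially as in Proposition~\ref{prop:cluschar} and the paper's own first paragraph), observing that $M'\geq \mi{Mrk}(C)$ enables every transition of $\mi{Tr}(C)$, and expediting the first $C$-transition of $\sigma$ to the front are all legitimate. But the core of your argument rests on a monotonicity claim applied in the wrong direction, and this is a genuine gap. You correctly state the standard simulation property — if $M_1\leq M_2$ and a sequence is enabled from the \emph{smaller} marking $M_1$, then it is also enabled from the larger $M_2$ and the surplus is transported unchanged. What your proof actually uses, however, is the converse: you know $\rho$ is enabled from the \emph{larger} marking $M''$ and conclude it is enabled from the smaller $M_C''$ (and likewise that $\mi{Mrk}(C)$ can fire $\sigma$ to reach $\mi{Mrk}(C)\bmin\Delta$). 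That direction is false: transitions occurring in $\sigma$ may need the surplus tokens of $\Delta$ in order to become enabled, so $\sigma$ need not be fireable from $\mi{Mrk}(C)$ at all (a two-input transition with only one input marked is the standard counterexample, and free-choiceness does not rescue it). Consequently both the claim that $\mi{Mrk}(C)\bmin\Delta$ is reachable and the infinite-descent finish collapse.

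This is precisely the difficulty the paper's proof is built to circumvent. It picks $M'$ and a witness $\sigma_s$ of \emph{minimal length}, then uses the expediting machinery to produce a permutation $\sigma_1\cdot\sigma_2\in\mi{Exp}_{(N,M')}(\sigma_s)$ in which $\sigma_1$ is a maximal part that \emph{is} executable from $\mi{Mrk}(C)$ (reaching some $M''$) while no transition of $\sigma_2$ is enabled in $M''$; the contradiction is then extracted by a case analysis on $\sigma_1=\langle\,\rangle$ versus $\sigma_1\neq\langle\,\rangle$, using properness, deadlock-freeness, and the minimality of $\sigma_s$ — not by running all of $\sigma_s$ from the home marking. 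To repair your proposal you would need to replace the downward simulation with an argument of this kind; as written, the main step does not go through.
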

\begin{proof}
Consider a marked proper free-choice net $(N,M)$ having a home cluster $C$.
Assume there exists a reachable marking $M' \in R(N,M)$ such that $M' > \mi{Mrk}(C)$.
We show that this is \emph{impossible}, thereby proving the theorem.

First, we assume that $(N,M)$ has a deadlock and show that this leads to a contradiction.
Using Proposition~\ref{prop:cluschar}, 
we know that $\mi{Mrk}(C)$ is the only reachable dead marking 
and $\mi{Tr}(C) = \emptyset$. 
However, $M' > \mi{Mrk}(C)$ is reachable and the token in $C$ cannot be removed anymore if $\mi{Tr}(C) = \emptyset$.
Since the net is proper, any marking reachable from $M'$ will have at least one extra token next to the token in $\mi{Mrk}(C)$.
Therefore, $\mi{Mrk}(C)$ cannot be reached, contradicting that $C$ is a home cluster. Hence, $(N,M)$ must be deadlock-free. 

Since $(N,M)$ is deadlock-free, $\mi{Tr}(C) \neq \emptyset$ (use Proposition~\ref{prop:cluschar}), i.e., the home cluster has at least one transition.
All transitions in $\mi{Tr}(C)$ live, because we can always reach the home marking  $\mi{Mrk}(C)$ again and again.

Without loss of generality, we can assume that $M' \in R(N,M)$ is such that 
the distance to the home marking $\mi{Mrk}(C)$ is \emph{minimal}.
Let $\sigma_s$ be a \emph{shortest} path from $M'$ to $\mi{Mrk}(C)$ having length $ \card{\sigma_s}$.
In other words, $(N,M')[\sigma_s\rangle (N,\mi{Mrk}(C))$, and for all $M_{\mi{alt}} \in R(N,M)$ and $\sigma_{\mi{alt}} \in T^*$ such that $M_{\mi{alt}} > \mi{Mrk}(C)$ and 
$(N,M_{\mi{alt}})[\sigma_{\mi{alt}}\rangle (N,\allowbreak \mi{Mrk}(C))$: $\card{\sigma_{\mi{alt}}} \geq  \card{\sigma_s}$. Obviously, $\card{\sigma_s} \geq 1$ (otherwise $M' = \mi{Mrk}(C)$ contradicting our initial assumption).

$\mi{Exp}_{(N,M')}(\sigma_s)$ contains all permutations 
of the shortest sequence $\sigma_s$ that are obtained by expediting transitions.
Let $\sigma_1$ and $\sigma_2$ be such that $\sigma_1 \cdot \sigma_2 \in \mi{Exp}_{(N,M')}(\sigma_s)$, $(N,\mi{Mrk}(C))[\sigma_1\rangle (N,M'')$, and
$\mi{en}(N,M'') \cap \{t \in \sigma_2\} = \emptyset$.
$\sigma_1$ contains the transitions in $\sigma_s$ that can also be executed
starting from the home marking. This leads to marking $M''$. In this marking, none of the remaining transitions in $\sigma_s$ (i.e., the transitions in $\sigma_2$) can be executed.
In other words, starting from we $\mi{Mrk}(C)$, we try to execute as much of $\sigma_s$ as possible by expediting transitions (as described in Definition~\ref{def:expedite}). $\sigma_1$ is the part that can be executed
(leading to $M''$) and $\sigma_2$ is the remaining part of $\sigma_s$. $\sigma_2$ only contains transitions that are not enabled in $M''$. Obviously, there always exist
$\sigma_1$ and $\sigma_2$ such that these requirements are met ($\mi{Exp}_{(N,M')}(\sigma_s) \neq \emptyset$ and we can add transitions to $\sigma_1$ until this is no longer possible). 
Moreover, $\sigma_1$ can also be executed starting in $M'$ because it is the prefix of an expedited sequence. Let $M'''$ be the corresponding marking, i.e., 
$(N,M')[\sigma_1\rangle (N,M''')$. From this marking, we can reach the home marking
by executing $\sigma_2$ (because $\sigma_1 \cdot \sigma_2 \in \mi{Exp}_{(N,M')}(\sigma_s)$ and Lemma~\ref{lemma:reord}).
\begin{figure}[thb!]
\centering
\includegraphics[width=5.5cm]{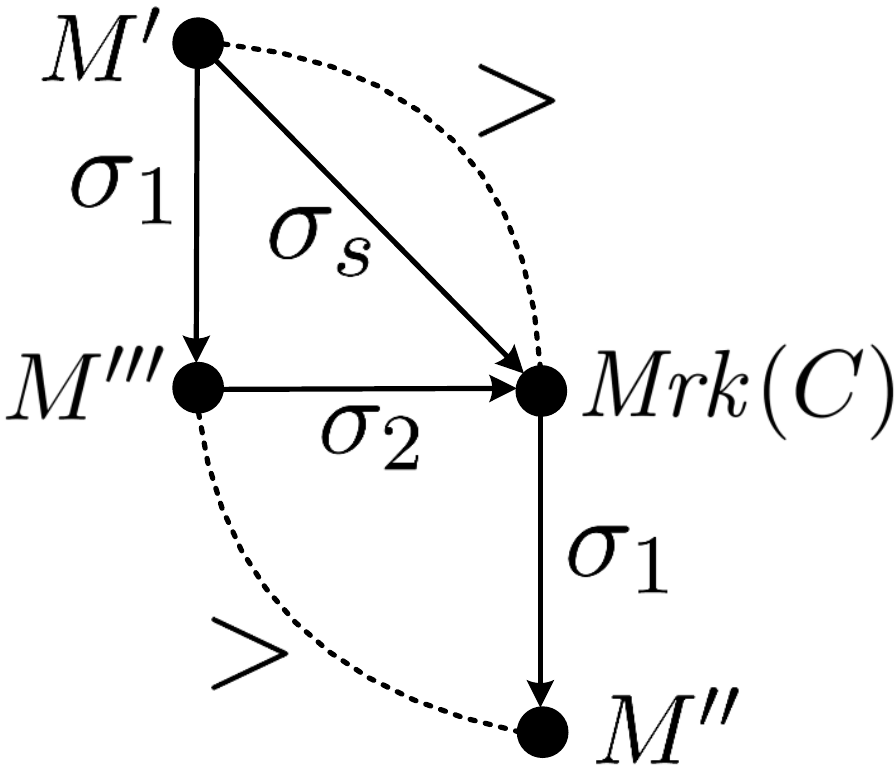}
\caption{Sketch of the construction used in Theorem~\ref{theo:dommark}.
The solid arrows denote firing sequences and the dashed lines indicate multiset domination.
 $\sigma_s$ is a firing sequence of minimal length leading from a marking $M'$ (which is larger than $\mi{Mrk}(C)$) to $\mi{Mrk}(C)$. Firing sequence $\sigma_1 \cdot \sigma_2$ is a permutation of $\sigma_s$ such that the transitions also enabled when starting from $\mi{Mrk}(C)$ are expedited leading to firing sequence $\sigma_1$. The remaining transitions in $\sigma_2$ are not enabled when starting from $\mi{Mrk}(C)$.}
\label{fig-proof-dom}
\end{figure}

To summarize, 
$\sigma_1 \cdot \sigma_2 \in \mi{Exp}_{(N,M')}(\sigma_s)$,
$(N,\mi{Mrk}(C))[\sigma_1\rangle (N,M'')$,
$(N,M')[\sigma_1\rangle (N,M''')$,
$(N,M''')[\sigma_2\rangle (N,\mi{Mrk}(C))$, and
$\mi{en}(N,M'') \cap \{t \in \sigma_2\} = \emptyset$.
Moreover, because $M' > \mi{Mrk}(C)$ also $M''' > M''$ and $\card{\sigma_s} \geq 1$.
Figure~\ref{fig-proof-dom} shows the relations between the different markings. 
To complete the proof we consider two cases ($\sigma_1 = \langle ~ \rangle$ and $\sigma_1 \neq \langle ~ \rangle$):
\begin{itemize}[noitemsep,topsep=2pt]
\item Assume $\sigma_1 = \langle ~ \rangle$. 
This implies that $M''' = M'$, $M'' = \mi{Mrk}(C)$, $\sigma_2 = \sigma_s$, $\mi{en}(N,M'')= \mi{Tr}(C)$,
and $\mi{Tr}(C) \cap \{t \in \sigma_s\} =\emptyset$.
Hence, when executing $\sigma_s$ starting from $M'$ the home cluster remains fully marked.  
However, there is at least one additional token in $M'$ that cannot ``disappear''
when executing $\sigma_s$ (the net is proper) leading to a contradiction. 
\item Assume $\sigma_1 \neq \langle ~ \rangle$. 
This implies that $M''' \not> \mi{Mrk}(C)$, otherwise there would be a shorter
sequence than $\sigma_s$, namely $\sigma_2$. (Recall that we selected $M'$ and $\sigma_s$ such that there is no shorter sequence leading to the home marking.)
There must exist an enabled cluster $C_e$ in $M''$ (the net cannot be dead), i.e., 
$\mi{Tr}(C_e) \subseteq \mi{en}(N,M'')$.
$\mi{Tr}(C_e) \cap \{t \in \sigma_2\} = \emptyset$.
If $C_e = C$, then we find a contradiction 
because this implies $M''' > \mi{Mrk}(C)$.
If $C_e \neq C$, then there is a place 
$p_e \in \mi{Pl}(C_e)$ outside of the home cluster $C$ 
that is marked in $M''$ and also $M'''$,
but $p_e \not\in \mi{Mrk}(C)$.
The token in $p_e$ is never removed by the transitions in $\sigma_2$.
However, after executing $\sigma_2$, place $p_e$ should be empty because only places in $C$ are marked, 
thus leading to a contradiction.
\end{itemize}
Hence, in all cases we find a contradiction, proving that $M' \leq \mi{Mrk}(C)$.
\end{proof}

Theorem~\ref{theo:dommark} implies boundedness.
Later, we show that marked proper free-choice nets having a home cluster are also safe.

\begin{corollary}[Boundedness]\label{corr:dommark}
Let $(N,M)$ be a marked proper free-choice net having a home cluster $C$.
For all $M_1,M_2 \in R(N,M)$: $M_1 \not> M_2$.
Hence, $(N,M)$ is also bounded.
\end{corollary}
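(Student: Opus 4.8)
The plan is to obtain the no-domination claim directly from Theorem~\ref{theo:dommark} together with the monotonicity of the firing rule, and then to read off boundedness from it by the classical coverability-tree argument. So I would first show that no reachable marking strictly dominates another reachable marking, and only afterwards observe that an unbounded net would contradict exactly this.

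For the no-domination claim I would argue by contradiction. Suppose $M_1, M_2 \in R(N,M)$ with $M_1 > M_2$, and set $D = M_1 \bmin M_2$, which is a nonempty multiset of places. Since $C$ is a home cluster, $\mi{Mrk}(C)$ is a home marking (Definition~\ref{def:homeclust}), so there is a transition sequence $\sigma \in T^*$ with $(N,M_2)[\sigma\rangle (N,\mi{Mrk}(C))$. By monotonicity of the firing rule --- each fired transition consumes and produces a fixed multiset, so surplus tokens can never disable a transition and are simply carried along --- the same $\sigma$ is enabled in $M_1 = M_2 \bplus D$ and $(N,M_1)[\sigma\rangle (N,\mi{Mrk}(C)\bplus D)$. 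Hence $\mi{Mrk}(C)\bplus D$ is reachable from $M_1$, so $\mi{Mrk}(C)\bplus D \in R(N,M)$, and since $D \neq [~]$ we have $\mi{Mrk}(C)\bplus D > \mi{Mrk}(C)$. This contradicts Theorem~\ref{theo:dommark}. Therefore $M_1 \not> M_2$ for all $M_1, M_2 \in R(N,M)$.

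Boundedness then follows from the standard fact that an unbounded marked Petri net admits markings $M_a, M_b \in R(N,M)$ with $M_b \in R(N,M_a)$ and $M_b > M_a$: the reachability tree is finitely branching, so an infinite set of reachable markings yields an infinite path by K\"onig's lemma, and Dickson's lemma produces a strictly increasing pair along that path; both markings lie in $R(N,M)$, so this contradicts the previous paragraph. The whole argument is routine given Theorem~\ref{theo:dommark}; the only mild subtlety is making sure the marking obtained by shifting the surplus tokens is itself in $R(N,M)$ so that Theorem~\ref{theo:dommark} applies --- which holds because it is reached from $M_1 \in R(N,M)$. The conceptual crux really sits in Theorem~\ref{theo:dommark}: a home cluster lets one drive any reachable marking down onto $\mi{Mrk}(C)$, so any strict domination somewhere in $R(N,M)$ can be transported into a strict domination of $\mi{Mrk}(C)$, which is impossible.
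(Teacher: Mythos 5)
Your proposal is correct and follows essentially the same route as the paper: the paper also fires a sequence $\sigma$ leading from $M_2$ to $\mi{Mrk}(C)$ from the larger marking $M_1$, obtaining by monotonicity a reachable marking strictly dominating $\mi{Mrk}(C)$ and thereby contradicting Theorem~\ref{theo:dommark}. The only difference is that you spell out the K\"onig/Dickson argument for the final boundedness claim, which the paper leaves implicit.
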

\begin{proof}
Assume $M_1,M_2 \in R(N,M)$ such that $M_1 > M_2$ (first marking is strictly larger).
There exists a $\sigma$ such that
$(N,M_2)[\sigma\rangle (N,\mi{Mrk}(C))$.
Since $M_1 > M_2$ there must be another reachable marking $M_3$ such that
$(N,M_1)[\sigma\rangle (N,M_3)$ and $M_3 > \mi{Mrk}(C)$.
However, Theorem~\ref{theo:dommark} says this is impossible, leading to a contradiction.
\end{proof}

\subsection{Rooted Disentangled Paths}
\label{sec:rdespath}

We will now reason about the numbers of tokens on specific paths in the Petri net. Therefore, we first provide some standard definitions and then introduce the new notion of \emph{rooted disentangled paths}.

\begin{definition}[Elementary Paths and Circuits]\label{def:elempath}
A \emph{path} in a Petri net $N=(P,T,F)$ is a non-empty sequence of nodes $\rho = \langle x_1,x_2, \ldots ,x_n \rangle$ such that $(x_i,x_{i+1}) \in F$ for $1 \leq i < n$.
Hence, $x_{i-1} \in {\pre{x_i}}$ for $1< i \leq n$ and $x_{i+1} \in {\post{x_i}}$ for $1 \leq i < n$.
$\mi{paths}(N)$ is the set of all paths in $N$.
$\rho$ is an \emph{elementary path} if $x_i \neq x_j$ for $1 \leq i < j \leq n$ (i.e., no element occurs more than once). An elementary path is called is a \emph{circuit} if $x_1 \in {\post{x_n}}$.
\end{definition}

Next, we focus on paths that start and end with a place and that visit a cluster at most once.
Consider $(N_5,M_5)$ in Figure~\ref{fig-home-lucent}.
$\langle t8,p7,t8,p8\rangle$ is a path that is not elementary. This implies that also a cluster is visited multiple times.
$\langle p1,t1,p3,t4,p7 \rangle$ is a so-called disentangled path since each place on the path belongs to a different cluster.

\begin{definition}[(Rooted) Disentangled Paths]\label{def:rdespath}
Let $N=(P,T,F)$ be a Petri net.
$\rho = \langle p_1,t_1,p_2, \ldots , t_{n-1},p_n \rangle$ is a \emph{disentangled path} of $N$ if and only if 
$\rho$ is a path of $N$ ($\rho \in \mi{paths}(N)$),
$p_1 \in P$, 
$p_n \in P$, and
for all $1 \leq i < j \leq n$: $\cluster{p_i} \neq \cluster{p_j}$ (i.e., $\rho$ starts and ends with a place and does not contain elements that belong to the same cluster).
A disentangled path is $Q$-\emph{rooted} if $p_n \in Q$.
\end{definition}

Disentangled paths are elementary, but not all elementary paths are disentangled.
Consider $N_3$ in Figure~\ref{fig-locally-safe-not-perpetual}.
$\rho_1 = \langle p5,t3,p3,t2,p4 \rangle$ is elementary, but not disentangled because $\cluster{p5} = \cluster{p4}$.
$\rho_2 = \langle p5,t3,p3,t2,p1 \rangle$ is elementary and disentangled.
$\rho_2$ is $Q$-rooted where $Q$ can be any subset of places that includes $p1$.

In the remainder of this subsection, we reason about the existence of disentangled paths and the number of tokens on them.

\begin{lemma}[Existence of Rooted Disentangled Paths]\label{lem:exstrdespath}
Let $N=(P,T,F)$ be a free-choice net, $C$ a cluster of $N$, $p \in P$, and $q \in C \cap P$. If $N$ has a path $\rho \in \mi{paths}(N)$ starting in $p$ and ending in $q$, then there also exists a $C$-rooted disentangled path starting in $p$.
\end{lemma}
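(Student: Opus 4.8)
The statement asks: given a path $\rho$ from $p$ to some $q \in C \cap P$ in a free-choice net $N$, produce a $C$-rooted disentangled path starting in $p$. The natural approach is a "shortcutting" or minimal-counterexample argument on paths, exploiting the free-choice property to repair cluster collisions. I would take the set of all paths from $p$ that end in a place of $C$ — this set is nonempty by hypothesis — and pick one, call it $\rho' = \langle x_1, \dots, x_m\rangle$ with $x_1 = p$ and $x_m \in C \cap P$, that is \emph{minimal in length}. The goal is to show this minimal $\rho'$ is automatically a disentangled path, i.e.\ starts and ends with a place and no two of its nodes lie in the same cluster.

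First I would note that $x_1 = p \in P$ and $x_m \in P$ by choice, so the endpoint conditions hold; in particular a shortest such path cannot end with a transition followed by a place (that would not be shorter but also a path need not — actually since it must end in $C \cap P$ the last node is a place). The core work is ruling out repeated clusters. Suppose $\cluster{x_i} = \cluster{x_j}$ for some $i < j$. I would split into cases by the types of $x_i$ and $x_j$. If both are places in the same cluster: a cluster contains at most the output transitions of its places, and by the free-choice property all transitions sharing an input place have identical presets; the key structural fact is that within one cluster you can always route from one place to another place of that cluster through a single shared output transition — or, more carefully, if $x_i$ and $x_j$ are distinct places of the same cluster, then since the cluster is connected via place$\to$transition and transition$\to$place arcs, and by free-choiceness transitions in the cluster have their whole preset inside the cluster, one can build a short detour; then replacing the segment $x_i, \dots, x_j$ by this short detour (or, if $x_i$ and $x_j$ can be directly bridged, by just deleting $x_{i+1}, \dots, x_j$ and re-attaching) yields a strictly shorter path from $p$ to $C \cap P$, contradicting minimality. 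If $x_i, x_j$ are both transitions in the same cluster, free-choiceness gives $\pre{x_i} = \pre{x_j}$, so $x_{i-1} \in \pre{x_j}$ and the path can jump directly from $x_{i-1}$ to $x_j$, again shortening. The mixed cases (one a place, one a transition of the same cluster) reduce to these by noting the transition's preset lies in the cluster. In every case minimality is violated, so no repeated cluster exists, and $\rho'$ is the desired $C$-rooted disentangled path.

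The main obstacle I anticipate is making the place-to-place shortcut inside a cluster fully rigorous — a cluster need not be "small," and two places of a cluster are connected only through the place$\to$transition / transition$\to$preset closure of Definition~\ref{def:clust}, so there is genuine case analysis in showing a bounded-length bridge always exists (and that it actually shortens $\rho'$ rather than merely replacing one collision by another). One clean way around this: instead of local surgery, argue globally. Take $\rho'$ minimal; if $\cluster{x_i} = \cluster{x_j}$ with $i < j$, I would show there is a path from $x_i$ (or from $x_{i-1}$, when $x_i$ is a transition) to $x_j$ of length at most the "diameter within the cluster," and since all of $x_1, \dots, x_{i}$ already form a valid prefix reaching cluster $\cluster{x_i} = \cluster{x_j}$, I can simply truncate: the prefix $\langle x_1, \dots, x_i\rangle$ already ends in a node of $\cluster{x_m}$'s... no — here the subtlety is that $\cluster{x_i}$ need not equal $C$. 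So truncation does not immediately work; the repair must preserve the endpoint in $C \cap P$. Hence the honest argument is the local-surgery one, and I would simply be careful: when $\cluster{x_i} = \cluster{x_j}$, replace the infix $\langle x_i, \dots, x_j \rangle$ by a shortest path inside that cluster from $x_i$ to $x_j$ (which exists and has length $< j - i$ whenever $j - i$ is larger than that cluster-internal distance, and when $j-i$ already equals it, iterate on the remaining collision, using that the number of collisions strictly decreases). A well-founded induction on (length of path, number of collision pairs) then closes it. I expect this bookkeeping — choosing the right well-founded measure so the repair provably terminates at a disentangled path — to be the delicate point; everything else is routine use of the free-choice identity $\pre{t_1} = \pre{t_2}$.
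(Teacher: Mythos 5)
Your overall strategy---take a shortest path from $p$ to a place of $C$ and show that any cluster collision between two of its places would permit a strictly shortening shortcut, contradicting minimality---is sound, and it is essentially the paper's argument recast as an extremal principle (the paper instead walks a pointer along the given path and performs the same surgery greedily). However, your write-up has a gap exactly at the step that carries all the weight. The fact you need, and never state, is that in a free-choice net every transition of a cluster has \emph{every place of that cluster} in its preset: any two output transitions of a common place have equal presets by free-choiceness, and since the cluster is generated by iterating the closure of Definition~\ref{def:clust}, all transitions of a cluster share one preset, which is exactly $\mi{Pl}$ of the cluster. With this, the place--place collision $\cluster{x_i}=\cluster{x_j}$, $i<j$, is dispatched in one line: if $x_j$ is the final node, then $x_i\in C$ and truncating at $x_i$ shortens the path; otherwise $x_{j+1}\in\post{x_j}$ is a transition of $\cluster{x_i}$, hence $x_i\in\pre{x_{j+1}}$, so $(x_i,x_{j+1})\in F$ and $\langle x_1,\dots,x_i,x_{j+1},\dots,x_m\rangle$ is a strictly shorter path from $p$ to $q$. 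The fact you do cite---``transitions in the cluster have their whole preset inside the cluster''---holds by the definition of cluster alone, does not use free-choiceness, and does not yield the arc $(x_i,x_{j+1})$ that the reattachment requires.

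Second, the fallback you propose for rigor---replacing the infix $\langle x_i,\dots,x_j\rangle$ by ``a shortest path inside that cluster from $x_i$ to $x_j$''---would fail: a cluster's internal arcs all run from places to transitions, so in general there is \emph{no} directed path between two places of a cluster that stays inside the cluster, and no meaningful ``cluster-internal distance.'' Fortunately this detour is unnecessary: the delete-and-reattach move above always applies and always strictly decreases the length, so plain minimality of the path suffices and the two-component well-founded induction can be dropped. The transition--transition and mixed cases you enumerate are also not needed, since Definition~\ref{def:rdespath} only constrains the \emph{places} of the path to lie in pairwise distinct clusters, and each transition of a path shares its cluster with the place immediately preceding it in any case.
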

\begin{proof}
Let $\rho = \langle p_1,t_1,p_2, \ldots , t_{n-1},p_n \rangle \in \mi{paths}(N)$ be the path connecting $p = p_1$ and $q = p_n$.
$\rho$ can be converted into a $Q$-rooted disentangled path starting in $p$. 
This is done by removing parts of the path through shortcuts that can be taken when the same cluster is visited multiple times.
Let $i \in \{1, \ldots, n\}$ be a pointer pointing to place $p_i$ in $\rho$.
We start with $i = 1$ (i.e., $i$ points to the first place $p_1$) and move towards the end of the path $i=n$. 
\begin{itemize}[noitemsep,topsep=2pt]
\item If pointer $i$ points to place $p_i$ and $p_i \in C$, we can ignore the rest of the sequence because we already reached $C$ via a unique sequence of clusters. 
(Note that if $p=p_1$ is already in $C$, we have a sequence of length 1.)
\item If $p_i \not\in C$, then $i < n$ because $q = p_n \in C$. 
Hence, there still exists an output transition $t_i$ with output place $p_{i+1}$ in $\rho$.
\begin{itemize}[noitemsep,topsep=2pt]
\item If none of the $p_j$ with $j>i$ is in the same cluster as $p_i$, then we keep $p_i$ and $t_i$, and continue with $p_{i+1}$ (i.e, increment $i$). 
\item If there is a $p_j$ with $j>i$ that is in the same cluster as $p_i$, then we take the largest $j$ for which this holds. Also $j < n$, because $p_j \not\in C$ (here $p_i$ and $p_j$ are in the same cluster). Hence, there exists a $t_j$ and $p_{j+1}$.
$p_i$ and $p_j$ may refer to the same place or not.
However, $\{p_i,p_j\} \subseteq \pre{t_j}$ (both are in the same cluster and all transitions in the cluster consume from all places in the cluster).
Since $p_i \in \pre{t_j}$, we can remove the subsequence $\langle t_i,p_{i+1}, \ldots , t_{j-1},p_j \rangle$ and directly connect $p_i$ to $t_j$. 
Sequence 
$\langle \ldots , p_i, t_j, p_{j+1}, \ldots , t_{n-1},\allowbreak p_n \rangle$
constitutes a path in the Petri net and we continue with $p_{j+1}$ (i.e, set $i = j+1$).
In summary, $\langle p_1, \ldots , p_i, t_i, t_{i+1}, \ldots, p_j, t_j, p_{j+1}, \ldots , t_{n-1},p_n \rangle$ is transformed into
$\langle p_1, \ldots , p_i, \allowbreak \stkout{t_i, t_{i+1}, \ldots, p_j}, t_j, p_{j+1}, \ldots , t_{n-1},p_n \rangle$ and continues with $i=j+1$.
\end{itemize}
\end{itemize}
We repeat this process until we reach $C$.
Each of the elements in the resulting path is connected to the previous one and we never visit the same cluster twice. We also keep the initial place $p$. Therefore, the resulting path is a $C$-rooted disentangled path starting in $p$.
\end{proof}

Consider the path $\rho = \langle p6,t4,p5,t3,p3,t2,p4,t3,p3,t2,p1 \rangle$ in Figure~\ref{fig-locally-safe-not-perpetual}.
the path ends in the cluster $C=\{p1,t1\}$. Using the approach used in the proof of Lemma~\ref{lem:exstrdespath}, 
this path is converted into the $C$-rooted disentangled path $\langle p6,t4,p5,\stkout{t3,p3,t2,p4},t3,p3,t2,p1 \rangle = \langle p6,t4,p5,t3,p3,t2,p1 \rangle$.
We can construct a $C$-rooted disentangled path starting in any place $p$ that is not dead,
i.e., a place marked in at least one of reachable markings.

\begin{corollary}[Existence of Rooted Disentangled Paths from Marked Places]\label{corr:pathexists}
Let $(N,M)$ be a marked proper free-choice net having a home cluster $C$.
For any non-dead place $p \in P$, there exists a $C$-rooted disentangled path starting in $p$.
\end{corollary}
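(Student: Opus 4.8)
The goal is to show that for any non-dead place $p$ there is a $C$-rooted disentangled path starting in $p$. By Lemma~\ref{lem:exstrdespath}, it suffices to exhibit an ordinary path in $N$ from $p$ to some place $q \in C \cap P$ (the lemma then does the work of trimming it into a $C$-rooted disentangled path). So the entire task reduces to a reachability-in-the-graph argument: from every non-dead place there is a directed path to a place of the home cluster.

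First I would set up the situation. Since $p$ is non-dead, there is a reachable marking $M_0 \in R(N,M)$ with $M_0(p) \geq 1$. Because $C$ is a home cluster, $\mi{Mrk}(C) \in R(N,M_0)$, so there is a firing sequence $\sigma$ with $(N,M_0)[\sigma\rangle(N,\mi{Mrk}(C))$. The plan is to use $\sigma$ (together with the token that sits in $p$ under $M_0$) to trace out a directed path in the graph from $p$ into $C$. The key idea is a token-chasing / causality argument: the token in $p$ must eventually be consumed along $\sigma$ — otherwise it would still be present in $\mi{Mrk}(C)$, which is impossible unless $p \in \mi{Pl}(C)$ (in which case we are already done with a path of length $1$). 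When it is consumed by some transition $t_1 \in \post{p}$, that firing deposits tokens in every place of $\post{t_1}$; pick any such place $p_2$. If $p_2 \in \mi{Pl}(C)$ we are done; otherwise that token, in turn, must be consumed later in $\sigma$ (again, proper-ness guarantees transitions have output places, and the final marking is exactly $\mi{Mrk}(C)$, so no stray token survives outside $C$). Iterating, I build a sequence $p=p_1, t_1, p_2, t_2, \dots$ which is a genuine path in $N$, and which must terminate at a place in $\mi{Pl}(C)$ because $\sigma$ is finite and each step of the construction consumes a strictly later transition occurrence of $\sigma$.

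Two points need care. One is termination/correctness of the token-chase: I need to argue that each token I follow is genuinely consumed by a \emph{later} transition occurrence in $\sigma$ than the one that produced it, so that the process strictly advances through the finite sequence $\sigma$ and cannot loop forever without reaching $C$; the fact that $(N,M)$ is proper (every transition has an output place, so the chase never "dead-ends" at a transition) and the fact that the terminal marking $\mi{Mrk}(C)$ contains tokens \emph{only} in $\mi{Pl}(C)$ are exactly what make this work. The second, smaller point is the degenerate case $p \in \mi{Pl}(C)$, handled trivially by the one-node path $\langle p\rangle$, which is vacuously a $C$-rooted disentangled path.

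**The main obstacle.** The delicate part is making the token-chasing rigorous: tokens are anonymous in a Petri net, so "follow the token from $p$" is informal. The clean way to discharge this is to argue at the level of markings rather than individual tokens — e.g., show by induction on the length of $\sigma$ that if a non-$C$ place is marked in some intermediate marking, then along the remainder of $\sigma$ there is a firing that consumes from it and produces into a place one step closer (in the path being built) — or, alternatively, to invoke a standard causal-dependency/unfolding argument. I expect this bookkeeping to be the only real content; once an ordinary $p$-to-$C$ path is in hand, Lemma~\ref{lem:exstrdespath} finishes immediately.
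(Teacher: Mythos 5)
Your proposal is correct and follows essentially the same route as the paper: reduce to exhibiting an ordinary directed path from $p$ to a place of $C$ by tracing a token along a firing sequence ending in $\mi{Mrk}(C)$, using properness so the trace never dead-ends at a transition, and then invoke Lemma~\ref{lem:exstrdespath}. The paper formalizes the token-chase with a ``red token'' colouring argument rather than your strictly-advancing-position bookkeeping, but that is a cosmetic difference; your marking-level version of the chase is, if anything, slightly more explicit about why the process terminates.
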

\begin{proof}
Take an arbitrary place $p$ that can be marked in some reachable marking $M'$. If $p \in C$, then $\rho = \langle p \rangle$ is a
$C$-rooted disentangled path.
If $p \not\in C$, then there must be a path from $p$ to one of the places in $C$ (say $q$). 
This follows from the fact that the net is proper, i.e., for all $t \in T$: $\pre{t} \neq \emptyset$ and $\post{t} \neq \emptyset$.
Therefore, a token can not simply disappear and must end up in $C$.
To see this, color the token in $p$ red and then execute a firing sequence ending in $\mi{Mrk}(C)$. 
When executing a transition with at least one red token, make all produced tokens also red.
Because we cannot consume a red token without producing at least one new red token, 
we know that at least one red token will end up in $\mi{Mrk}(C)$.
This proves that there is a path $\rho$ starting in $p$ and ending in some $q \in C \cap P$ (follow back one red token in $\mi{Mrk}(C)$).
Since there is such a path $\rho$, there also exists a $C$-rooted disentangled path starting in $p$ 
(apply Lemma~\ref{lem:exstrdespath}).
\end{proof}

Dead places do not change the behavior and can be removed together with the output transitions if desired 
(but do not have to be removed, since they remain empty anyway).
The next lemma plays a key role in our analysis of nets having a home cluster $C$: 
$C$-rooted disentangled paths are \emph{safe}, i.e., 
at any time \emph{all} places on a $C$-rooted disentangled path \emph{together} contain at most one token.

\begin{lemma}[Rooted Disentangled Paths Are Safe]\label{lem:pathsafety}
Let $(N,M)$ be a marked proper free-choice net having a home cluster $C$.
For any reachable marking, $M' \in R(N,M)$ and $C$-rooted disentangled path $\rho = \langle p_1,t_1,p_2, \ldots , t_{n-1},\allowbreak p_n \rangle$: $M'(\{p_1,p_2, \ldots ,p_n\}) \allowbreak \leq 1$.
\end{lemma}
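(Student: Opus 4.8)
The goal is to show that a $C$-rooted disentangled path $\rho = \langle p_1,t_1,\ldots,t_{n-1},p_n\rangle$ can never carry more than one token simultaneously, where $p_n \in C$. The natural route is to reason about what happens to a token sitting somewhere on $\rho$: it must eventually reach $\mi{Mrk}(C)$, and I want to argue that two tokens on $\rho$ would force a reachable marking strictly dominating $\mi{Mrk}(C)$, contradicting Theorem~\ref{theo:dommark}. So the first move is to suppose for contradiction that some reachable $M'$ has $M'(\{p_1,\ldots,p_n\}) \geq 2$ and to fix a shortest firing sequence from $M'$ to the home marking $\mi{Mrk}(C)$.

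The technical heart is a ``token-tracing'' argument in the style of Corollary~\ref{corr:pathexists}: color the tokens on $\rho$ red, fire toward $\mi{Mrk}(C)$, and propagate redness whenever a transition consumes a red token (marking all its output tokens red). Since the net is proper, red tokens never vanish, so at least as many red tokens survive into $\mi{Mrk}(C) = [p_n]$ as we started with — but $\mi{Mrk}(C)$ is a single place. The crucial refinement is that I need the two red tokens to arrive at $p_n$ and I need to track them along $\rho$ itself. Here the disentangled property does the work: because $\rho$ visits each cluster at most once and ends in $C$, a token moving forward along $\rho$ from $p_i$ can only leave $\rho$ through the unique output transition $t_i \in \mi{Tr}(\cluster{p_i})$ (free-choiceness: all transitions in a cluster share all input places, so firing any transition of $\cluster{p_i}$ consumes the token from $p_i$), and the successor place $p_{i+1}$ is again on $\rho$ in a strictly later, distinct cluster. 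The point is to show that the two tokens, each tracked forward along $\rho$, cannot both be consumed and re-deposited on $\rho$ without at some intermediate marking leaving more than one token on the suffix of $\rho$, and in particular both end up at $p_n$; combined with the surviving-red-token count, this yields a reachable marking with $M''(p_n) \geq 2$, i.e. $M'' > \mi{Mrk}(C)$, contradicting Theorem~\ref{theo:dommark}.

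Alternatively — and this may be cleaner — I would argue directly: from $M'$ with two tokens on $\rho$, build a firing sequence that advances \emph{both} tokens along $\rho$ all the way to $p_n$ without removing either (possible because each step along $\rho$ is just firing the cluster transition $t_i$, which other tokens cannot preempt since $\rho$ is disentangled so no competing consumption of $p_i$ happens from elsewhere on the path, and Lemma~\ref{lemma:reord} lets me reorder so these advancing firings are not blocked). Reaching a marking with $M''(p_n) \geq 2 \geq \mi{Mrk}(C)(p_n)+1$ gives $M'' > \mi{Mrk}(C)$ if all other places are empty there — and if they are not, I still have $M'' \geq \mi{Mrk}(C)$ with $M'' \neq \mi{Mrk}(C)$, which is exactly what Theorem~\ref{theo:dommark} forbids. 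Either way the contradiction closes the argument.

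**Main obstacle.** The delicate point is controlling what happens to a token once it steps \emph{off} $\rho$. When $t_i$ fires it produces tokens on \emph{all} of $\post{t_i}$, only one of which, $p_{i+1}$, lies on $\rho$; the other output tokens wander off into the rest of the net. I must ensure that the firings needed to shepherd the \emph{second} token forward along $\rho$ do not require consuming the first token (which the disentangled/free-choice structure should prevent, since only $\mi{Tr}(\cluster{p_i})$ consumes $p_i$ and that cluster appears once on $\rho$) and, conversely, that I am not implicitly assuming tokens deposited off $\rho$ can be ignored. Making the ``advance both tokens along $\rho$'' sequence genuinely enabled — rather than merely ``not obviously blocked'' — is where Lemma~\ref{lemma:reord} (expediting) and the free-choice property must be invoked carefully; that bookkeeping is the part I expect to require the most care.
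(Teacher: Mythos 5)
Your plan has the same overall shape as the paper's proof (assume two tokens on a $C$-rooted disentangled path, use the fact that each $p_i$ lies in its own cluster plus free-choiceness to shepherd the tokens forward, and contradict Theorem~\ref{theo:dommark}), but the two steps that carry the real weight are, respectively, missing and incorrect. The first is the progress mechanism, which you yourself flag as the ``main obstacle'' but do not close: Lemma~\ref{lemma:reord} is not the right tool, because expediting only permutes an \emph{already enabled} sequence and cannot manufacture enabledness of $t_i$ when the other places of $\cluster{p_i}$ are still empty. The paper's mechanism is the home-cluster property itself: when no transition of $T_\rho$ (or of $C$) is enabled, follow any firing sequence toward $\mi{Mrk}(C)$; by free-choiceness the first time that sequence would consume a token from some $p_i$ on the path, \emph{all} transitions of $\cluster{p_i}$ -- in particular $t_i$ -- are already enabled (they share the preset $\mi{Pl}(\cluster{p_i})$), so one intercepts and fires $t_i$ instead. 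This interception argument is what guarantees that path tokens are never lost and that a path transition eventually becomes enabled; without it the construction is only ``not obviously blocked.''

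The second problem is the closing contradiction. From $M''(p_n)\geq 2$ you infer $M''\geq \mi{Mrk}(C)$ with $M''\neq\mi{Mrk}(C)$; this is false whenever $\mi{Pl}(C)$ contains places other than $p_n$, since those may be unmarked in $M''$, so Theorem~\ref{theo:dommark} does not apply. The paper sidesteps this with a case split based on Proposition~\ref{prop:cluschar}: if a dead marking exists, $C$ is a single isolated place and $[p_n^2]>\mi{Mrk}(C)$ does the job; otherwise $C$ has a transition $t_C$ and one never needs both tokens at $p_n$ -- one stops at the first moment $t_C$ is enabled, when $M_c\geq\mi{Mrk}(C)$ while a second path token still sits \emph{outside} $C$ (the disentangled path meets $C$ only in $p_n$), giving strict domination immediately. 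Your endpoint is repairable (to ever remove a token from $p_n$ one must first fully mark $C$, at which instant the surplus token yields domination; and if no token is ever removed, $\mi{Mrk}(C)$ is unreachable, contradicting the home-cluster property), but that extra argument is absent and is precisely where the multi-place-cluster case bites. Your alternative ``red token counting'' route fails for the same reason and one more: red multiplicity is not preserved under firing, and two red tokens arriving in $\mi{Mrk}(C)$ is not by itself contradictory when $C$ has more than one place.
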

\begin{proof}
Assume $(N,M)$ is a marked proper free-choice net, $C$ is a home cluster,  and $\rho = \langle p_1,t_1,p_2, \ldots , t_{n-1},\allowbreak p_n \rangle$ is a $C$-rooted disentangled path.
Let $P_{\rho} =  \{p_1,p_2, \ldots ,p_n\}$ and $T_{\rho} =  \{t_1,t_2, \ldots ,t_{n-1}\}$.

Assume that the lemma does not hold, i.e., $\rho$ is not safe and
$M'(P_{\rho}) > 1$ for some $M' \in R(N,M)$. 
We show that this leads to a contradiction.

Consider the tokens (at least two) in the places $P_{\rho}$. We try to move these tokens towards $p_n \in C$.
Each place $p_i \in P_{\rho}$ corresponds to a unique cluster $C_i$ because the path is disentangled. This combined with the free-choice property, allows us to fully control the trajectories of the tokens in $P_{\rho}$.

First, we look a the case where $C$ has a transition, say $t_C$ (i.e., there are no dead markings, see Proposition~\ref{prop:cluschar}).
We start in marking $M_c = M'$. If one of the transitions in $T_{\rho}$ is enabled, then we fire this transition (in any order and perhaps also multiple times) and update the current marking $M_c$. This cannot decrease the number of tokens, i.e., we still have $M_c(P_{\rho}) > 1$. 
Note that a transition in $T_{\rho}$ consumes precisely one token
``from the path'' and produces at least one token ``on the path'' (disentangled paths are elementary).
If $t_C$ is enabled, then $M_c \geq \mi{Mrk}(C)$. 
However, given the second token in $P_{\rho}$ this implies $M_c > \mi{Mrk}(C)$. 
This leads to a contradiction using Theorem~\ref{theo:dommark}. 
If none of the transitions in $T_{\rho} \cup \{t_C\}$ is enabled in $M_c$, then
we must be able to fire a sequence of other transitions enabling a transition in $T_{\rho} \cup \{t_C\}$. 
$C$ is a home cluster and there are no deadlocks, so we can always walk towards a marking enabling one of the transitions in 
$T_{\rho} \cup \{t_C\}$. The moment one of the transitions
in $T_{\rho}$ is enabled, we can again control the choices involved.
Hence, we can continue to move tokens along the path until we find a contradiction.

Next, we look a the case where $C$ does not have a transition (i.e., the home marking is is a deadlock, see Proposition~\ref{prop:cluschar}).
We can use exactly the same strategy to move the tokens towards $p_n$ (there one case less to consider). 
The moment a token reaches $p_n$ there is at least one additional token in $P_{\rho}$ and this one can also be moved to $p_n$ 
leading to a contradiction (apply Theorem~\ref{theo:dommark} to show that there cannot be two tokens in $p_n$).
\end{proof}

The previous results can be combined to show
that the class of marked Petri nets considered is safe.

\begin{corollary}[Marked Proper Free-Choice Net Having a Home Cluster Are Safe]\label{corr:safe}
Let $(N,M)$ be a marked proper free-choice net having a home cluster $C$. $(N,M)$ is safe.
\end{corollary}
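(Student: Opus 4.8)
The plan is to derive safeness of a marked proper free-choice net $(N,M)$ having a home cluster $C$ directly from Lemma~\ref{lem:pathsafety} (Rooted Disentangled Paths Are Safe) together with Corollary~\ref{corr:pathexists} (Existence of Rooted Disentangled Paths from Marked Places). Safeness means $M'(p) \leq 1$ for every reachable $M' \in R(N,M)$ and every $p \in P$, so it suffices to show that no reachable marking puts two tokens on a single place.

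First I would observe that a place $p$ that is dead contributes nothing: it is never marked, so $M'(p) = 0$ for all $M' \in R(N,M)$, and the bound $M'(p) \leq 1$ holds trivially. Hence we may restrict attention to a non-dead place $p \in P$. For such a $p$, Corollary~\ref{corr:pathexists} guarantees the existence of a $C$-rooted disentangled path $\rho = \langle p_1, t_1, p_2, \ldots, t_{n-1}, p_n \rangle$ with $p_1 = p$. In particular $p$ is one of the places $\{p_1, p_2, \ldots, p_n\}$ lying on $\rho$.

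Now I would apply Lemma~\ref{lem:pathsafety} to this path $\rho$: for every reachable marking $M' \in R(N,M)$ we have $M'(\{p_1, p_2, \ldots, p_n\}) \leq 1$. Since $p = p_1$ is among these places and token counts are non-negative, it follows that $M'(p) \leq M'(\{p_1, \ldots, p_n\}) \leq 1$. As $p$ was an arbitrary non-dead place and $M'$ an arbitrary reachable marking, we conclude that $(N,M)$ is $1$-bounded, i.e., safe.

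I do not expect any real obstacle here — the corollary is essentially an unpacking of the two preceding results, and the only care needed is the preliminary case split on whether $p$ is dead (so that Corollary~\ref{corr:pathexists} is applicable, since it is stated for non-dead places). The conceptual work has already been done in Lemma~\ref{lem:pathsafety}, whose proof in turn leans on Theorem~\ref{theo:dommark}; the present corollary merely transports the ``at most one token on a whole rooted disentangled path'' statement down to ``at most one token on each place'' by noting every non-dead place sits on some such path.
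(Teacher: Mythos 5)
Your proposal is correct and follows exactly the same route as the paper's own proof: split on whether $p$ is dead, and for non-dead $p$ invoke Corollary~\ref{corr:pathexists} to obtain a $C$-rooted disentangled path starting in $p$, then apply Lemma~\ref{lem:pathsafety} to conclude $M'(p) \leq 1$. No gaps.
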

\begin{proof}
Follows directly from Lemma~\ref{lem:pathsafety} and Corollary~\ref{corr:pathexists}.
If a place $p$ is dead, then it will never have a token and thus safe.
If a place $p$ is not dead, then there exists a $C$-rooted disentangled path (Corollary~\ref{corr:pathexists}) starting in $p$, and this path must be safe due to Lemma~\ref{lem:pathsafety}. Hence, all places are safe.
\end{proof}

\subsection{Conflict-Pairs}
\label{sec:cfpair}

If a marked Petri net is not lucent, then there must be two different markings enabling the same set of transitions.
We will convert such a pair of markings 
into a \emph{conflict-pair}. By showing that these do not exist, we can prove lucency. 

\begin{definition}[Conflict-Pair]\label{def:cfpair}
Let $(N,M)$ be a marked Petri net.
$(M_1,M_2)$ is called a \emph{conflict-pair} for $(N,M)$ 
if and only if
\begin{itemize}[noitemsep,topsep=2pt]
\item $M_1$ and $M_2$ are reachable markings of $(N,M)$ (i.e., $M_1,M_2 \in R(N,M)$),
\item $M_1$ and $M_2$ are not dead (i.e., $\mi{en}(N,M_1) \neq \emptyset$ and $\mi{en}(N,M_2) \neq \emptyset$),
\item $\mi{en}(N,M_1) \cap \mi{en}(N,M_2) = \emptyset$ (no transition is enabled in both markings),
\item for all $t \in \mi{en}(N,M_1)$: $M_2(\pre t)\geq 1$, and
\item for all $t \in \mi{en}(N,M_2)$: $M_1(\pre t)\geq 1$.
\end{itemize}
\end{definition}
Consider $(N_3,M_3)$ Figure~\ref{fig-locally-safe-not-perpetual} and markings $M_1 = [p2, p3, p5]$ and $M_2 = [p2, p4, p5]$.
$M_1$ can be reached by firing $t1$ and $t4$.
$M_2$ can be reached by firing $t1$, $t2$, $t1$, and $t4$.
$\mi{en}(N,M_1) = \{t2\}$, $\mi{en}(N,M_2) = \{t3\}$,
$\mi{en}(N,M_1) \cap \mi{en}(N,M_2) = \emptyset$,
$M_2(\pre t2) = 1 \geq 1$, and
$M_1(\pre t3) = 1 \geq 1$.

To better understand the above definition, let us split
each of the two markings in the conflict-pair $(M_1,M_2)$ in an ``agreement'' and a ``disagreement''  part.
$M^{\mi{agree}}$ is the maximal marking such that 
$M^{\mi{agree}} \leq M_1$ and $M^{\mi{agree}} \leq M_2$.
Now we can write 
$M_1 = M^{\mi{agree}} \bplus M^{\mi{disagree}}_1$ and
$M_2 = M^{\mi{agree}} \bplus M^{\mi{disagree}}_2$.
Obviously, all three submarkings $M^{\mi{agree}}$, $M^{\mi{disagree}}_1$, and $M^{\mi{disagree}}_2$ are non-empty.
This allows us to speak about ``agreement tokens'' (tokens in $M^{\mi{agree}}$) and ``disagreement tokens'' (tokens in $M^{\mi{disagree}}_1$ or $M^{\mi{disagree}}_2$).
For $M_1 = [p2, p3, p5]$ and $M_2 = [p2, p4, p5]$, we have $M^{\mi{agree}} = [p2,p5]$, $M^{\mi{disagree}}_1 = [p3]$, 
and $M^{\mi{disagree}}_2 = [p4]$.

Both $M_1$ and $M_2$ should enable at least one transition, but there cannot be a transition enabled by both.
This means that $\mi{en}(N,M^{\mi{agree}}) = \emptyset$.
The last two requirements in Definition~\ref{def:cfpair}
state that transitions enabled in  $M_1$ and $M_2$ should also
consume at least one agreement token. 
Hence, for any $t_1 \in \mi{en}(N,M_1)$:
$t_1 \not\in \mi{en}(N,M_2)$,
$t_1 \not\in \mi{en}(N,M^{\mi{agree}})$,
$M^{\mi{agree}}(\pre{t_1}) \geq 1$, and
$M^{\mi{disagree}}_1(\pre{t_1}) \geq 1$.
Similarly, for any $t_2 \in \mi{en}(N,M_2)$:
$t_2 \not\in \mi{en}(N,M_1)$,
$t_2 \not\in \mi{en}(N,M^{\mi{agree}})$,
$M^{\mi{agree}}(\pre{t_2}) \geq 1$, and
$M^{\mi{disagree}}_2(\pre{t_2}) \geq 1$.

%%%%%%%%%%%%%%%%%%%%%%%%%%%%%%%%%%%% NOT NEEDED
%In the proof of the next lemma we use again the results that expediting transitions is ``safe'' (see Lemma~\ref{lemma:reord}).
%However, in the proof we only expedite transitions in some subset $E \subseteq T$.
%
%\begin{definition}[Expedite a Subset of Transitions]\label{def:expedite-subset}
%Let $N=(P,T,F)$ be a free-choice net, $M \in \bag(P)$, $E \subseteq T$,
%$(N,M)[\sigma\rangle$ (i.e., $\sigma$ is an enabled firing sequence),
%$\mi{Exp}_{(N,M)}^E(\sigma) \subseteq T^*$ is the subset of all transition sequences that can obtained by repeatedly expediting transitions in $E$, i.e.,
%$\mi{Exp}_{(N,M)}^E(\sigma)$ is the smallest set such that:
%\begin{itemize}[noitemsep,topsep=2pt]
%\item $\sigma \in \mi{Exp}_{(N,M)}(\sigma)$ and
%\item $\sigma'_{i \leftarrow j} \in \mi{Exp}_{(N,M)}(\sigma)$ 
%if $\sigma' \in \mi{Exp}_{(N,M)}(\sigma)$, $1 \leq i < j \leq \card{\sigma'}$, $t_j \in E$, $\mi{exp}_{(N,M)}(\sigma',i,j)$.
%\end{itemize}
%\end{definition}
%Since $\mi{Exp}_{(N,M)}^E(\sigma) \subseteq \mi{Exp}_{(N,M)}(\sigma)$, we can use
%Lemma~\ref{lemma:reord} to show that for any $\sigma' \in \mi{Exp}_{(N,M)}^E(\sigma)$:
%$(N,M)[\sigma' \rangle (N,M')$.
%%%%%%%%%%%%%%%%%%%%%%%%%%%%%%%%%%%%%%%%%%%

Next, we show that the absence of conflict-pairs implies lucency.
Later, we show that a marked proper free-choice net with a home cluster cannot have a conflict-pair.
Hence, such nets are guaranteed to be lucent.
\begin{figure}[thb!]
\centering
\includegraphics[width=15.0cm]{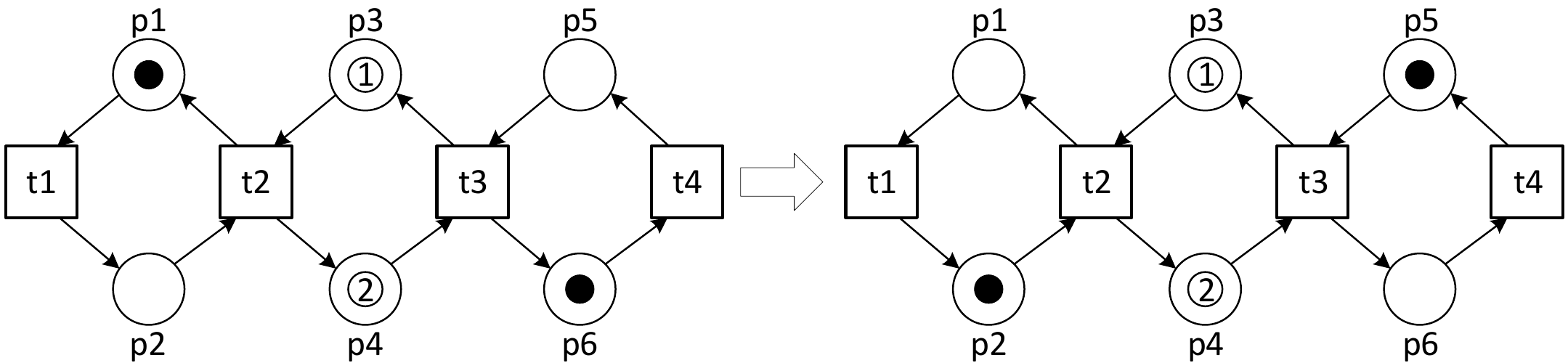}
\caption{Example showing how two markings that have the same ``footprint'' (left) in terms of enabling can be converted into a conflict-pair (right).
The left-hand side shows markings $M_1 = [p1,p3,p6]$ and $M_2 = [p1,p4,p6]$.
The right-hand side shows markings $M_1' = [p2,p3,p5]$ and $M_2' = [p2,p4,p5]$.
The ``agreement tokens'' are depicted as black dots (denoted by $\bullet$), the ``disagreement tokens'' are shown as 
$\raisebox{.5pt}{\textcircled{\raisebox{-.9pt} {1}}}$ (only in $M_1$ and $M_1'$)
or $\raisebox{.5pt}{\textcircled{\raisebox{-.9pt} {2}}}$ (only in $M_2$ and $M_2'$).
}
\label{fig-locally-safe-conflict-pair}
\end{figure}

To show that the absence of conflict-pairs implies lucency for free-choice nets having a home cluster, 
Lemma~\ref{lem:cflucent} shows that it is possible to convert two markings $M_1$ and $M_2$ that have the same ``footprint'' in terms of enabling (i.e., $\mi{en}(N,M_1)=\mi{en}(N,M_2)$) into a conflict-pair $(M_1',M_2')$.
To illustrate the construction, we consider the free-choice net $N_3$ in Figure~\ref{fig-locally-safe-not-perpetual} 
which does \emph{not} have a home cluster 
(we can find two markings having the same ``footprint'' because of this).
The left-hand side of Figure~\ref{fig-locally-safe-conflict-pair} shows 
the markings $M_1 = [p1,p3,p6]$ and $M_2 = [p1,p4,p6]$.
$M_1$ is the initial marking and $M_2$ can be reached by firing $t1$ and $t2$.
Tokens in $M_1$ but not in $M_2$ are represented by $\raisebox{.5pt}{\textcircled{\raisebox{-.9pt} {1}}}$ 
and tokens in $M_2$ but not in $M_1$ are represented by $\raisebox{.5pt}{\textcircled{\raisebox{-.9pt} {2}}}$.
Tokens in both markings are denoted by $\bullet$.
$M_1$ and $M_2$ demonstrate that the net is \emph{not} lucent because $\mi{en}(N_3,M_1)=\mi{en}(N_3,M_2)=\{t1,t4\}$.
To move to the conflict-pair $(M_1',M_2')$ with $M_1' = [p2,p3,p5]$ and $M_2' = [p2,p4,p5]$ on the right-hand side of Figure~\ref{fig-locally-safe-conflict-pair}, we do not ``touch'' the disagreement tokens denoted by 
$\raisebox{.5pt}{\textcircled{\raisebox{-.9pt} {1}}}$ and $\raisebox{.5pt}{\textcircled{\raisebox{-.9pt} {2}}}$.
This implies that no transitions in the corresponding clusters can fire and that these disagreement tokens do not move.
Hence, we can only fire transitions that only consume agreement tokens. These are depicted as normal black dots $\bullet$ in Figure~\ref{fig-locally-safe-conflict-pair}. In the example, we can fire $t1$ and $t4$ involving only agreement tokens. Such transitions consume and produce agreement tokens.
Since the net is guaranteed to be safe, no agreement tokens can be produced for places that have disagreement tokens 
(i.e., $p3$ and $p4$ in Figure~\ref{fig-locally-safe-conflict-pair}).
Hence, the $\raisebox{.5pt}{\textcircled{\raisebox{-.9pt} {1}}}$ and $\raisebox{.5pt}{\textcircled{\raisebox{-.9pt} {2}}}$ tokens cannot disappear in the process.

The main idea of Lemma~\ref{lem:cflucent} is to fire transitions that are enabled by 
agreement tokens until this is no longer possible. This leads to markings like $M_1' = [p2,p3,p5]$ and $M_2' = [p2,p4,p5]$ in Figure~\ref{fig-locally-safe-conflict-pair}. In such markings, all enabled transitions have a mix of agreement and disagreement tokens in their input places. For example, $t2$ is enabled in $M_1'$ by a $\bullet$ token in $p2$ and a $\raisebox{.5pt}{\textcircled{\raisebox{-.9pt} {1}}}$ token in $p3$, and 
$t3$ is enabled in $M_2'$ by a $\bullet$ token in $p5$ and a $\raisebox{.5pt}{\textcircled{\raisebox{-.9pt} {2}}}$ token in $p4$.
Lemma~\ref{lem:cflucent} shows that it is always possible to reach such markings using the fact that the home marking $\mi{Mrk}(C)$ is always reachable. 
Later, we will show that free-choice nets having a home cluster cannot have conflict-pairs.
Therefore, Figure~\ref{fig-locally-safe-conflict-pair} need to use an example that 
does not have a home cluster.

The proof of Lemma~\ref{lem:cflucent} can be summarized as follows.
Start from two different markings $M_1$ and $M_2$ that enable the same set of transitions. 
The tokens of both markings are split into ``agreement tokens'' denoted by $\bullet$ and ``disagreement tokens'' marked by $\raisebox{.5pt}{\textcircled{\raisebox{-.9pt} {1}}}$ or $\raisebox{.5pt}{\textcircled{\raisebox{-.9pt} {2}}}$ (as shown in Figure~\ref{fig-locally-safe-conflict-pair}).
Next, we fire transitions that consume only $\bullet$ tokens.
It is possible to do this in such a way that the process stops
and there are no such transitions enabled anymore (just try to move tokens closer to the home marking, this must stop at some stage because the disagreement tokens are needed). 
The $\raisebox{.5pt}{\textcircled{\raisebox{-.9pt} {1}}}$ and $\raisebox{.5pt}{\textcircled{\raisebox{-.9pt} {2}}}$ tokens do not move
and enabled transitions require at least one ``disagreement token'' ($\bullet$).
This way we can construct a conflict-pair $(M_1',M_2')$.
Hence, if there are no conflict-pairs, there cannot be two markings $M_1$ and $M_2$ that enable the same set of transitions, thus proving lucency.

\begin{lemma}[Nets Without Conflict-Pairs Are Lucent]\label{lem:cflucent}
Let $(N,M)$ be a marked proper free-choice net having a home cluster.
If $(N,M)$ has no conflict-pairs, then $(N,M)$ is lucent.
\end{lemma}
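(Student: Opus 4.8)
The plan is to prove the contrapositive: assuming $(N,M)$ is not lucent, I will construct a conflict-pair, contradicting the hypothesis. So suppose there exist two \emph{distinct} reachable markings $M_1, M_2 \in R(N,M)$ with $\mi{en}(N,M_1) = \mi{en}(N,M_2)$. First I would split both markings relative to their common part: let $M^{\mi{agree}}$ be the largest marking with $M^{\mi{agree}} \leq M_1$ and $M^{\mi{agree}} \leq M_2$, and write $M_i = M^{\mi{agree}} \bplus M^{\mi{disagree}}_i$. Since $M_1 \neq M_2$, at least one of $M^{\mi{disagree}}_1, M^{\mi{disagree}}_2$ is non-empty; but because both markings enable the same (non-empty, since a home cluster forces a reachable non-dead marking --- actually one must be slightly careful here) set of transitions, and any $t$ enabled in $M_1$ has $\pre{t} \subseteq \{p \mid M_1(p) \geq 1\} = \{p \mid M_2(p) \geq 1\}$, I can argue both disagreement markings are non-empty whenever one is (otherwise one marking dominates the other, contradicting Corollary~\ref{corr:dommark}).

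The core construction is: starting from $(M_1, M_2)$, repeatedly fire transitions that consume \emph{only} agreement tokens, applying the \emph{same} firing to both markings in lock-step. Since the net is safe (Corollary~\ref{corr:safe}), a transition firing on agreement tokens in $M_1$ produces tokens only into places currently empty in $M_1$ --- and I need to check these places are also empty in $M_2$, so that the produced tokens remain ``agreement tokens'' and the two markings stay coupled through a shared agreement part. This is where safety is essential: a place holding a disagreement token in $M_1$ (equivalently in $M_2$, since that place is then non-empty in both) already has one token, so by safety no agreement token can be added there. To terminate the process, I would invoke the home cluster: among all sequences of agreement-only firings, pick one that drives the agreement sub-configuration as far as possible toward $\mi{Mrk}(C)$; this must halt (the disagreement tokens block completion, so we cannot actually reach the home marking), leaving markings $M_1', M_2'$ in which \emph{every} enabled transition consumes at least one disagreement token.

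It then remains to verify that $(M_1', M_2')$ is a conflict-pair in the sense of Definition~\ref{def:cfpair}. Reachability of $M_1', M_2'$ is immediate from the construction. Non-deadness: since $(N,M)$ is deadlock-free (here I should note that if the home marking is dead then Proposition~\ref{prop:cluschar} applies and the argument needs the deadlock-free branch; in the deadlocked case the home marking is a singleton and one argues separately that $M_1 = M_2 = \mi{Mrk}(C)$ is forced, so non-lucency cannot arise) both $M_1'$ and $M_2'$ enable something. Disjointness of enabled sets: any $t \in \mi{en}(N,M_1')$ consumes a disagreement-$1$ token, hence has an input place empty in $M_2'$, so $t \notin \mi{en}(N,M_2')$; symmetrically. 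The last two bullets: for $t \in \mi{en}(N,M_1')$, its input places are all non-empty in $M_1'$; the ones holding agreement tokens are non-empty in $M_2'$ too, and I must show $\pre{t} \cap \{p \mid M_1'(p) \geq 1\} \cap (\text{supp } M_2') \neq \emptyset$, i.e. $M_2'(\pre{t}) \geq 1$ --- this holds because $t$ was enabled in $M_1$ (the enabled sets of $M_1, M_2$ coincided and are preserved appropriately... actually more carefully: $\pre t \subseteq \mathrm{supp}(M_1) = \mathrm{supp}(M_2)$, and this support inclusion is what I'd track as invariant through the lock-step firing).

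I expect the main obstacle to be making the lock-step firing argument airtight: precisely, maintaining the invariant that $M_1'$ and $M_2'$ have the same support on the ``agreement region'' and that the supports $\mathrm{supp}(M_1') \supseteq \bigcup_{t \in \mi{en}(N,M_1')} \pre t$ (and symmetrically), so that the last two conditions of a conflict-pair survive. The subtle point is that I am not claiming $\mi{en}(N,M_1') = \mi{en}(N,M_2')$ any more (indeed they become disjoint), only that the \emph{supports} stay equal --- and verifying that agreement-only firings preserve support-equality of the two markings, using safety to prevent agreement tokens landing on disagreement places, is the crux. A secondary nuisance is the bookkeeping for the degenerate deadlocked-home-cluster case, which should be dispatched quickly via Proposition~\ref{prop:cluschar} and Theorem~\ref{theo:dommark}.
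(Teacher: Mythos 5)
Your overall strategy is exactly the paper's: prove the contrapositive, split the two markings into agreement and disagreement tokens, fire only agreement-consuming transitions in lock-step on both markings (using safety, via Corollary~\ref{corr:safe}, to keep agreement tokens off disagreement places), and stop at a pair $(M_1',M_2')$ that you then verify to be a conflict-pair. Two steps, however, are not sound as written. First, \emph{termination}: ``fire agreement-only transitions until none is enabled'' need not halt --- the net is safe, so the state space is finite, but a greedy firing policy can cycle forever among agreement-only transitions without ever getting stuck, and your remark that the disagreement tokens block arrival at $\mi{Mrk}(C)$ only shows the process cannot \emph{finish}, not that it \emph{stops}. The paper resolves this by fixing one concrete firing sequence $\sigma$ with $(N,M_1)[\sigma\rangle(N,\mi{Mrk}(C))$ and using the expediting machinery (Definition~\ref{def:expedite}, Lemma~\ref{lemma:reord}) to pull the $T_{\mi{rest}}$ transitions of $\sigma$ to the front; the resulting $\sigma_1$ is a sub-permutation of the finite $\sigma$, so the extension process terminates. (The price is that the stopping condition only rules out enabled $T_{\mi{rest}}$ transitions \emph{occurring in the remainder $\sigma_2$}, and the paper then needs an extra argument --- pushing the would-be common enabled transitions through $\sigma_2$ into $\mi{Mrk}(C)$ and invoking Theorem~\ref{theo:dommark} --- to get disjointness of the enabled sets.)

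Second, your justification of the last two conflict-pair conditions is wrong as stated: $\mathrm{supp}(M_1)=\mathrm{supp}(M_2)$ is false in general (the supports differ exactly on $P_1\cup P_2$; see the markings $[p1,p3,p6]$ and $[p1,p4,p6]$ in Figure~\ref{fig-locally-safe-conflict-pair}), and a transition enabled in $M_1'$ need not have been enabled in $M_1$. The correct argument, which the paper gives, is by contradiction: if some $t\in\mi{en}(N,M_1')$ had \emph{all} its input places among the $P_1$ (disagreement) places, then, since those places are untouched by $\sigma_1$, $t$ would already be enabled in $M_1$, hence in $M_2$ (equal footprints), forcing $\pre{t}\subseteq\mathrm{supp}(M_2)$ --- contradicting $\pre{t}\subseteq P_1$. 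So every $t\in\mi{en}(N,M_1')$ has at least one input place carrying an agreement token, which is also marked in $M_2'$, giving $M_2'(\pre{t})\geq 1$. With these two repairs your plan coincides with the paper's proof.
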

\begin{proof}
Let $(N,M)$ be a marked proper free-choice net having a home cluster $C$. We need to prove that if $N=(P,T,F)$ has no conflict-pairs, then $N$ is lucent. This can be rewritten to the logically equivalent contrapositive ``if $N$ is not lucent, then $N$ has a conflict-pair''. We will construct such a conflict-pair. 

Assume $N$ is \emph{not} lucent. There must be two markings $M_1,M_2 \in R(N,M)$ such that $\mi{en}(N,M_1)=\mi{en}(N,M_2)$ and $M_1 \neq M_2$.
We will  show that, based on these markings, we can construct a conflict-pair $(M_1',M_2')$.

The only dead reachable marking is $\mi{Mrk}(C)$.
Since $\mi{en}(N,M_1)=\mi{en}(N,M_2)$ and $M_1 \neq M_2$, we conclude
that $\mi{en}(N,M_1)=\mi{en}(N,M_2) \neq \emptyset$, $M_1 \neq \mi{Mrk}(C)$, and $M_2 \neq \mi{Mrk}(C)$ (use see Proposition~\ref{prop:cluschar}).

Since $(N,M)$ is safe (see Corollary~\ref{corr:safe}), we can 
partition the tokens into three groups based on the places where they reside:
$P_{\bullet} = \{p \in P \mid  p \in M_1 \ \wedge \  p \in M_2  \}$,
$P_{1} = \{p \in P \mid  p \in M_1 \ \wedge \  p \not\in M_2  \}$, and
$P_{2} = \{p \in P \mid  p \not\in M_1 \ \wedge \  p \in M_2  \}$.
Tokens in $P_{\bullet}$ are shared by both markings 
(i.e., the ``agreement tokens'' mentioned before).
Tokens in $P_{1}$ and $P_{2}$ exist in only one of the two markings (i.e., the ``disagreement tokens'' mentioned before).
None of these three sets can be empty.
Because $\mi{en}(N,M_1)=\mi{en}(N,M_2) \neq \emptyset$, transitions enabled in both markings must agree on the marked input places. Hence, $P_{\bullet} \neq \emptyset$.
Because $M_1 \neq M_2$ and one cannot be strictly larger than the other one (Corollary~\ref{corr:dommark}), $P_{1} \neq \emptyset$ and $P_{2} \neq \emptyset$.
We also create three groups of transitions:
$T_1 = \{t \in T \mid \pre{t} \cap P_{1} \neq \emptyset \}$,
$T_2 = \{t \in T \mid \pre{t} \cap P_{2} \neq \emptyset \}$,  and
$T_{\mi{rest}} = T \setminus (T_1 \cup T_2) = \{t \in T \mid \pre{t} \cap (P_{1} \cup P_{2}) = \emptyset \}$.
Note that $T_1$ and $T_2$ may overlap in principle, but do not overlap with $T_{\mi{rest}}$ i.e.,
$(T_1 \cup T_2)$ and $T_{\mi{rest}}$ partition $T$.
Each subset (i.e., $T_1$, $T_2$ or $T$) includes all transitions of a cluster or none (i.e., clusters agree on membership).

After introducing these notations, we pick a firing sequence 
$\sigma$ starting in $M_1$ and ending in the home marking, i.e.,
$(N,M_1)\allowbreak[\sigma\rangle\allowbreak (N,\mi{Mrk}(C))$.
Such a $\sigma$ exists, because $C$ is a home cluster.

Like in the proof of Theorem~\ref{theo:dommark} we split $\sigma$ into $\sigma_1$ and $\sigma_2$.
$\mi{Exp}_{(N,M_1)}(\sigma)$ contains all permutations of firing sequence $\sigma$ that are obtained by expediting transitions.
Let $\sigma_1$ and $\sigma_2$ be such that $\sigma_1 \cdot \sigma_2 \in \mi{Exp}_{(N,M_1)}(\sigma)$, 
$(N,M_1)[\sigma_1\rangle (N,M_1')$,
$\sigma_1 \in {T_{\mi{rest}}}^*$,
and $\mi{en}(N,M_1') \cap T_{\mi{rest}} \cap \{t \in \sigma_2\} = \emptyset$.
In other words, we expedite transitions from $T_{\mi{rest}}$, until this is no longer possible.
Given $\mi{Exp}_{(N,M_1)}(\sigma)$ it is always possible to find such $\sigma_1$, $\sigma_2$, and $M_1'$.
Suppose that $\mi{en}(N,M_1') \cap T_{\mi{rest}} \cap \{t \in \sigma_2\} \neq \emptyset$,
then we take the first transition in $\sigma_2$ that is in this set and move it to $\sigma_1$
(see construction in Definition~\ref{def:expedite}).
Since $\sigma_1$ does not fire transitions possibly consuming disagreement tokens 
(recall that $T_{\mi{rest}} = \{t \in T \mid \pre{t} \cap (P_{1} \cup P_{2}) = \emptyset \}$), $\sigma_1$ is also enabled in $M_2$.
Let $M_2'$ be the marking reached after firing $\sigma_1$ in $M_2$, i.e., $(N,M_2)[\sigma_1\rangle(N,M_2')$.
Also, $(N,M_1')[\sigma_2\rangle\allowbreak (N,\mi{Mrk}(C))$ (because $\sigma_1 \cdot \sigma_2 \in \mi{Exp}_{(N,M_1)}(\sigma)$ and Lemma~\ref{lemma:reord}).
Figure~\ref{fig-proof-cfp} summarizes the different entities involved and their relationships.
\begin{figure}[thb!]
\centering
\includegraphics[width=6.0cm]{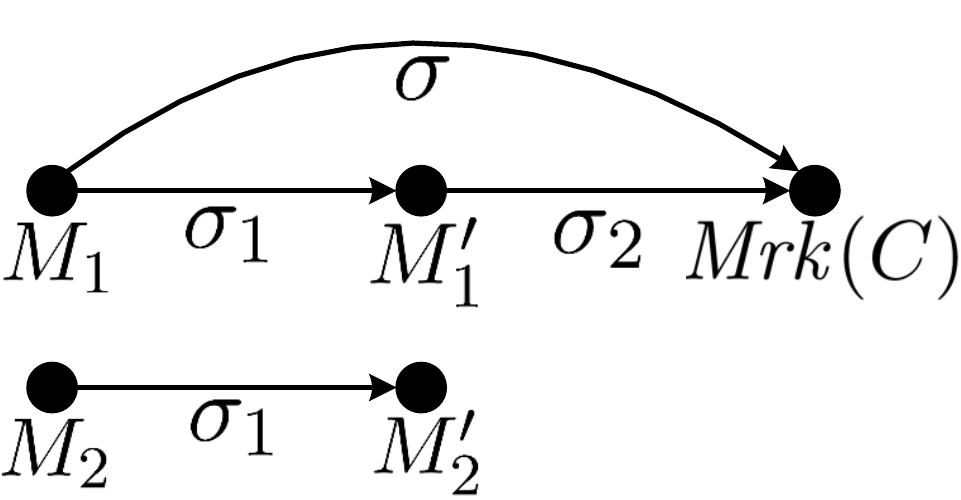}
\caption{Sketch of the construction used in Lemma~\ref{lem:cflucent}.
The elements satisfy the following relations:
$(N,M_1)\allowbreak[\sigma\rangle\allowbreak (N,\mi{Mrk}(C))$,
$\sigma_1 \cdot \sigma_2 \in \mi{Exp}_{(N,M_1)}(\sigma)$,
$(N,M_1)[\sigma_1\rangle (N,M_1')$,
$(N,M_2)[\sigma_1\rangle(N,M_2')$,
$(N,M_1')[\sigma_2\rangle (N,\mi{Mrk}(C))$,
$\sigma_1 \in {T_{\mi{rest}}}^*$, and
$\mi{en}(N,M_1') \cap T_{\mi{rest}} \cap \{t \in \sigma_2\} = \emptyset$.}
\label{fig-proof-cfp}
\end{figure}

$M_1(p) = M_1'(p)$ and $M_2(p) = M_2'(p)$ for any $p \in P_1 \cup P_2$, i.e., 
the disagreement places are unaffected by $\sigma_1$ because the $T_1$ and $T_2$ transitions did not fire
and $\sigma_1$ cannot add tokens to $P_{1}$ or $P_{2}$, because the net is safe 
(see Corollary~\ref{corr:safe}). $\sigma_1$ only produces ``agreement tokens'' and
putting such a token in a disagreement place violates the safety property in the sequence starting in $M_1$ or $M_2$.
Also $M_1(p) = M_2(p)$ and $M_1'(p) = M_2'(p)$ for any $p \in P \setminus (P_1 \cup P_2)$.
This also holds for intermediate markings when firing the transitions in $\sigma_1$. Hence, the collection of $\raisebox{.5pt}{\textcircled{\raisebox{-.9pt} {1}}}$ and $\raisebox{.5pt}{\textcircled{\raisebox{-.9pt} {2}}}$ tokens does not change (no disagreement tokens are removed and no new disagreement tokens are created). Moreover, there is a non-empty set of agreement tokens (denoted by $\bullet$) because the net is proper ($M_1'$ and $M_2'$ agree on these and each transition in $\sigma_1$ produces at least one such token).

Next, we prove that $(M_1',M_2')$ is indeed a conflict-pair for $N$.
We check the required properties listed in Definition~\ref{def:cfpair}:
\begin{itemize}[noitemsep,topsep=2pt]

\item $M_1'$ and $M_2'$ are indeed reachable markings of $(N,M)$ because $M_1,M_2 \in R(N,M)$, 
$(N,M_1)\allowbreak [\sigma_1\rangle(N,M_1')$ and $(N,M_2)[\sigma_1\rangle(N,M_2')$,
\item $M_1'$ contains at least one ``disagreement token'' $\raisebox{.5pt}{\textcircled{\raisebox{-.9pt} {1}}}$ and one ``agreement token'' $\bullet$ (see above).
$M_1'$ cannot be dead, because the only reachable marking that may be dead is $\mi{Mrk}(C)$ having a single token (apply again Proposition~\ref{prop:cluschar}).
$M_2'$ also contains at least one ``disagreement token'' $\raisebox{.5pt}{\textcircled{\raisebox{-.9pt} {2}}}$ and one ``agreement token'' $\bullet$.
Hence, neither $M_1'$ nor $M_2'$ can be dead.

\item Next, we show that $T' = \mi{en}(N,M_1') \cap \mi{en}(N,M_2') = \emptyset$ using the following observations:
\begin{itemize}[noitemsep,topsep=2pt]
\item $T' \subseteq T_{\mi{rest}}$, because the transitions in $T_1$ and $T_2$ cannot be enabled in both $M_1'$ and $M_2'$
(no tokens were added to a place in $P_1 \cup P_2$ by $\sigma_1$).
\item $\mi{en}(N,M_1') \cap T_{\mi{rest}} \cap \{t \in \sigma_2\} = \emptyset$ was used as a  criterion when splitting $\sigma$ into $\sigma_1$ and $\sigma_2$.
\item Combining the above
implies $T' \cap \{t \in \sigma_2\} = \emptyset$. Hence,
the input places of the transitions in $T'$ are still marked after executing $\sigma_2$ in $M_1'$.
\item Since $(N,M_1')[\sigma_2\rangle (N,\mi{Mrk}(C))$, the input places of $T'$ must be marked in $\mi{Mrk}(C)$. Hence, $T' \subseteq C$.
\item This implies that the transitions in the home cluster are enabled in both $M_1'$ and $M_2'$.
This is only possible if $M_1'=M_2'$ leading to a contradiction, i.e., $\mi{en}(N,M_1') \cap \mi{en}(N,M_2') = \emptyset$.
\end{itemize}
Note that in $M_1'$ and $M_2'$ all enabled transitions need to consume at least one ``disagreement token''. Hence, no transition can be enabled in both $M_1'$ and $M_2'$.
If a transition would be enabled in both, then $\sigma_1$ could have been extended.

\item For all $t \in \mi{en}(N,M_1')$: $M_2'(\pre t)\geq 1$, because each transition enabled in $M_1'$ must 
have an ``agreement token'' produced by $\sigma_1$ and a ``disagreement token'' in $P_1$.
If a transition $t$ would be enabled based on ``disagreement tokens'' only, these would have been there in $M_1$ already
(recall that $M_1(p) = M_1'(p)$ for any $p \in P_1$) leading to a contradiction because $\mi{en}(N,M_1)=\mi{en}(N,M_2)$.
Hence, any transition $t$ enabled in $M_1'$ must have an ``agreement token'' produced by $\sigma_1$ on one of it input places.
This token is also there in $M_2'$. Hence, $M_2'(\pre t)\geq 1$.

\item For all $t \in \mi{en}(N,M_2')$: $M_1'(\pre t)\geq 1$. Here the same arguments apply.
A transition cannot be enabled based on ``disagreement tokens'' only, since these would have been there in $M_2$ already
($M_2(p) = M_2'(p)$ for any $p \in P_2$).
\end{itemize}
Hence, $(M_1',M_2')$ is indeed a conflict-pair and thus completes the contrapositive proof.
\end{proof}

\subsection{Home Clusters Ensure Lucency in Free-Choice Nets}
\label{sec:mainresult}

Now we can prove the main result of this paper: Marked proper free-choice nets having a home cluster are lucent.
We use the notions of rooted disentangled paths and
conflict-pairs. The basic idea is to show that 
a conflict-pair implies that there is an unsafe rooted disentangled path which is impossible.
The absence of conflict-pairs implies lucency.
\begin{figure}[thb!]
\centering
\includegraphics[width=16.0cm]{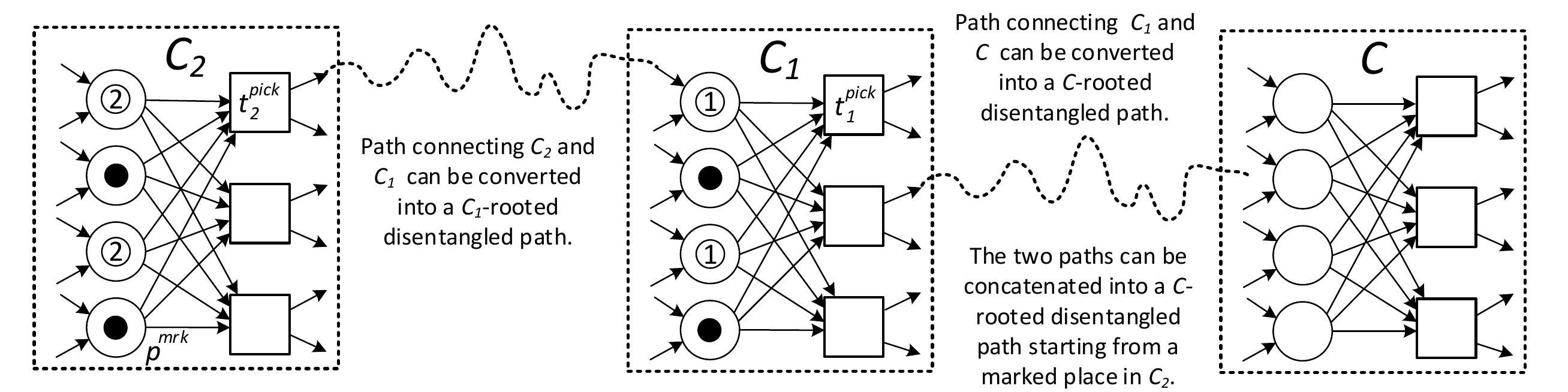}
\caption{Visualization of the three clusters considered in the proof of Theorem~\ref{theo:home-cp}. $C$ is the home cluster. $C_1$ is a cluster enabled in $M_1$ but not in $M_2$.
$C_2$ is a cluster enabled in $M_2$ but not in $M_1$.
The places labeled $\raisebox{.5pt}{\textcircled{\raisebox{-.9pt} {1}}}$ or $\raisebox{.5pt}{\textcircled{\raisebox{-.9pt} {2}}}$ contain a token in the respective marking (just in $M_1$ or just in $M_2$). The paths connecting $C_2$ and $C_1$ and $C_1$ and $C$ are converted into rooted disentangled paths.
These two rooted disentangled paths can be concatenated to create a $C$-rooted disentangled path starting in $p^{mrk}$. The proof shows that at least one of these rooted disentangled paths is not safe, proving that the net has no conflict-pairs and thus must be lucent.}
\label{fig-theo-lucent}
\end{figure}

Theorem~\ref{theo:home-cp} shows that there cannot be a conflict-pair $(M_1,M_2)$ in
a marked proper free-choice net having a home cluster.
Figure~\ref{fig-theo-lucent} sketches the main idea of the proof.
First, we assume that there exist a conflict-pair $(M_1,M_2)$.
We identify, next to the home cluster $C$, two additional clusters $C_1$ and $C_2$ based on 
the conflict-pair $(M_1,M_2)$. 
$C_1$ is enabled in marking $M_1$ and $C_2$ is enabled in marking $M_2$.
$C_1$ can be any cluster enabled in marking $M_1$.
$C_2$ is a cluster enabled in marking $M_2$ that contributes to the enabling of
cluster $C_1$ which is disabled in marking $M_2$.
As Figure~\ref{fig-theo-lucent} shows $C_1$ has a $\raisebox{.5pt}{\textcircled{\raisebox{-.9pt} {1}}}$ input token and $C_2$ has a $\raisebox{.5pt}{\textcircled{\raisebox{-.9pt} {2}}}$ input token.

Based on the selected  $C_1$ and $C_2$ clusters, we create two rooted disentangled paths: 
$\rho'$ is a $C_1$-rooted disentangled path connecting $C_2$ to $C_1$ and 
$\rho''$ is a $C$-rooted disentangled path connecting $C_1$ to $C$.
These two rooted disentangled paths are combined into a path $\rho'''$ running from $C_2$ to $C$ via $C_1$.
If $\rho'''$ is not a $C$-rooted disentangled path (i.e., the same cluster is visited multiple times along the path), then it is possible to reach a marking starting from $M_2$ which puts a token on $\rho''$
(the path connecting $C_1$ to $C$) while having an agreement token in $C_1$.
Hence, there is a $C$-rooted disentangled path connecting $C_1$ to $C$ having at least two tokens (see proof for details). Using Lemma~\ref{lem:pathsafety} this leads to a contradiction. Hence, $\rho'''$ must be a $C$-rooted disentangled path. However, considering $M_1$ (rather than a marking reached from $M_2$) path $\rho'''$ has at least two tokens.
This also leads to a contradiction using Lemma~\ref{lem:pathsafety}. 
Therefore, there cannot be a conflict-pair $(M_1,M_2)$.
The approach presented using Figure~\ref{fig-theo-lucent} is detailed in the proof below. 

\begin{theorem}[Home Clusters Ensure Absence of Conflict-Pairs]\label{theo:home-cp}
Let $(N,M)$ be a marked proper free-choice net having a home cluster. $(N,M)$ has no conflict-pairs.
\end{theorem}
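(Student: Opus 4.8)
The plan is to prove the contrapositive-free statement directly: assume a conflict-pair $(M_1,M_2)$ exists and derive a contradiction by producing an unsafe $C$-rooted disentangled path, contradicting Lemma~\ref{lem:pathsafety}. Following the roadmap sketched around Figure~\ref{fig-theo-lucent}, I would first fix notation: write $M_1 = M^{\mi{agree}} \bplus M^{\mi{disagree}}_1$ and $M_2 = M^{\mi{agree}} \bplus M^{\mi{disagree}}_2$ as in the discussion after Definition~\ref{def:cfpair}. Pick any cluster $C_1$ enabled in $M_1$; by the conflict-pair properties $C_1$ is not enabled in $M_2$, yet $M_2(\pre{t}) \geq 1$ for each $t \in \mi{Tr}(C_1)$, so some place $q_1 \in \mi{Pl}(C_1)$ carries a ``disagreement'' token in $M_1$ (a $\raisebox{.5pt}{\textcircled{\raisebox{-.9pt}{1}}}$ token). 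Since $C_1$ is disabled in $M_2$, some place $q_1' \in \mi{Pl}(C_1)$ is \emph{unmarked} in $M_2$ but marked in $M_1$; chasing back how $q_1'$ could ever get a token from $M_2$ — use that $C$ is a home cluster, so from $M_2$ we can reach $\mi{Mrk}(C)$, and a token must eventually traverse $q_1'$-ish territory — I would extract a cluster $C_2$ enabled in $M_2$, disabled in $M_1$, with a $\raisebox{.5pt}{\textcircled{\raisebox{-.9pt}{2}}}$ token on one of its places $q_2 \in \mi{Pl}(C_2)$, and a path in $N$ from $q_2$ through $C_2$'s transition to $q_1'$ and hence into $C_1$. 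This gives, via Lemma~\ref{lem:exstrdespath}, a $C_1$-rooted disentangled path $\rho'$ starting at $q_2$.

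Next I would build $\rho''$: since $q_1$ is non-dead, Corollary~\ref{corr:pathexists} gives a $C$-rooted disentangled path $\rho''$ starting at $q_1 \in \mi{Pl}(C_1)$. Concatenating $\rho'$ (ending in $C_1$) with $\rho''$ (starting in $C_1$) — after aligning them at a common place of $C_1$ — yields a path $\rho'''$ from $q_2$ into $C$. The delicate point is that $\rho'''$ need not be disentangled: $\rho'$ and $\rho''$ might both pass through some third cluster. I would handle the two cases as the sketch indicates. \emph{Case A: $\rho'''$ is disentangled.} Then $\rho'''$ is a $C$-rooted disentangled path. In marking $M_1$, the place $q_2 \in \mi{Pl}(C_2)$ is unmarked (it holds a $\raisebox{.5pt}{\textcircled{\raisebox{-.9pt}{2}}}$ token, i.e.\ only in $M_2$), but the place $q_1 \in \mi{Pl}(C_1)$ is marked in $M_1$; that alone is only one token. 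I instead need \emph{two} tokens on $\rho'''$ in a single reachable marking. The trick: start from $M_1$, and fire the transition(s) of $C_1$ — $C_1$ is enabled in $M_1$ — which consumes the $\raisebox{.5pt}{\textcircled{\raisebox{-.9pt}{1}}}$ token on $q_1$ but deposits a token somewhere downstream. Simultaneously, by the home-cluster property I can maneuver a token from $M_2$'s side... actually cleaner: consider $M_2$, where $q_2$ is marked; fire toward $\mi{Mrk}(C)$ controlling choices (free-choice lets us) so that the token from $q_2$ travels along $\rho'$ and reaches $C_1$, producing an ``agreement-style'' token on a $C_1$ place — meanwhile the original $\raisebox{.5pt}{\textcircled{\raisebox{-.9pt}{1}}}$ token was never there in $M_2$, but $M_2(\pre t)\geq 1$ for $t \in \mi{Tr}(C_1)$ means \emph{some} $C_1$-input place is already marked in $M_2$; once the traveling token lands on the complementary $C_1$-input place we have $C_1$ fully marked, i.e.\ a marking $\geq \mi{Mrk}(C_1)$ reached from $M_2$, with the token still somewhere on $\rho''$. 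That gives two tokens on the $C$-rooted disentangled path $\rho''$ extended appropriately — contradicting Lemma~\ref{lem:pathsafety}.

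\emph{Case B: $\rho'''$ is not disentangled}, i.e.\ $\rho'$ and $\rho''$ share a cluster. Then along the shared cluster I can shortcut, but the point of the sketch is the opposite: the overlap lets me reach, starting from $M_2$, a marking with a token on $\rho''$ (pushed there along $\rho'$ from the $\raisebox{.5pt}{\textcircled{\raisebox{-.9pt}{2}}}$ token in $C_2$) \emph{while} an agreement token sits in $C_1$ (coming from the part of $M_2$ that already marks a $C_1$-input place). Take the $C$-rooted disentangled path obtained from $\rho''$ by Lemma~\ref{lem:exstrdespath} starting from that agreement-marked $C_1$ place; it now carries at least two tokens, again contradicting Lemma~\ref{lem:pathsafety}. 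In both cases we contradict safety of rooted disentangled paths, so no conflict-pair exists.

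\textbf{The main obstacle} I anticipate is making the token-routing rigorous: precisely specifying which $C_1$-input place the $\raisebox{.5pt}{\textcircled{\raisebox{-.9pt}{2}}}$-token-turned-agreement-token must reach, arguing via the free-choice property that we can steer the token exactly along $\rho'$ without prematurely consuming the tokens we are tracking, and confirming that the two tokens end up on the \emph{same} $C$-rooted disentangled path simultaneously (rather than one being consumed to produce the other). This is exactly the kind of ``control the choices because the path is disentangled and the net is free-choice'' argument used in the proof of Lemma~\ref{lem:pathsafety}, and I would lean heavily on that lemma's technique, plus Theorem~\ref{theo:dommark} / Corollary~\ref{corr:dommark} to rule out dominating markings. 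The case split on whether $\rho'''$ is disentangled is the structural heart, and getting the ``two tokens, one path, one reachable marking'' conclusion airtight in Case~B — where the shared cluster could a priori let tokens interfere — is where I expect to spend the most care.
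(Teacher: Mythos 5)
Your overall architecture matches the paper's: pick $C_1$ enabled in $M_1$, trace how $M_2$ must re-enable it to find $C_2$, build a $C_1$-rooted disentangled path $\rho'$ and a $C$-rooted disentangled path $\rho''$, concatenate, and split on whether the concatenation is disentangled, invoking Lemma~\ref{lem:pathsafety} in each case. Your Case~B is essentially the paper's argument for why the concatenation cannot revisit a cluster. But there is a genuine gap in Case~A, and it stems from where you root $\rho'$. You start $\rho'$ at a place $q_2 \in \mi{Pl}(C_2)$ carrying a disagreement token of $M_2$, i.e.\ a place \emph{unmarked} in $M_1$. As you yourself notice, this leaves only one token of $M_1$ on the concatenated path, and you then try to rescue the argument by dynamically routing the $M_2$-side token along $\rho'$ into $C_1$. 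That routing does not deliver the contradiction: once the travelling token reaches $C_1$ it sits at the junction of $\rho'$ and $\rho''$, giving at most one token on the $C$-rooted path $\rho''$; you never exhibit a second token on the \emph{same} disentangled path in the \emph{same} reachable marking. (It also risks colliding with Theorem~\ref{theo:dommark} only for the home cluster, not for $C_1$, so ``$C_1$ fully marked'' is not by itself absurd.) Your own ``main obstacle'' paragraph correctly identifies this as the weak point.

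The missing idea is to use the conflict-pair condition symmetrically on $C_2$: since $t^{pick}_2 \in \mi{en}(N,M_2)$, Definition~\ref{def:cfpair} gives $M_1(\pre{t^{pick}_2}) \geq 1$, so $C_2$ contains an \emph{agreement} place $p^{mrk} \in C_2 \cap P_{\bullet}$, marked in both $M_1$ and $M_2$. Rooting $\rho'$ at $p^{mrk}$ (rather than at a $P_2$ place) makes Case~A a one-line static observation: in $M_1$, both $p^{mrk}$ and the endpoint $p^{conn} \in \mi{Pl}(C_1)$ of $\rho'$ are marked ($C_1$ is fully marked in $M_1$), and they lie in distinct clusters on the single $C$-rooted disentangled path $\rho'''$, contradicting Lemma~\ref{lem:pathsafety} with no token routing at all. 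You quote the analogous inequality for $C_1$ but never exploit it for $C_2$. A secondary point: for Case~B to go through you need $\rho'$ to be extracted from a concrete shortest firing sequence $\sigma_{en}$ from $M_2$ that first marks $C_1 \cap P_1$, so that a shared cluster forces one of its transitions to occur in $\sigma_{en}$ before any $C_1$ transition; your ``chase back how $q_1'$ gets a token'' construction would need to be pinned down to exactly this to make the overlap argument rigorous.
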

\begin{proof}
Let $(N,M)$ be a marked proper free-choice net having a home cluster $C$. 
We assume that $(N,M)$ has a conflict-pair $(M_1,M_2)$
and show that this leads to a contradiction.

{\bf Useful notations: $P_{\bullet}$, $P_{\emptyset}$, $P_{1}$, and $P_{2}$.}
Based on the conflict-pair $(M_1,M_2)$, we partition the set of places $P$ into four sets
$P_{\bullet} = \{p \in P \mid  p \in M_1 \ \wedge \  p \in M_2  \}$,
$P_{\emptyset} = \{p \in P \mid  p \not\in M_1 \ \wedge \  p \not\in M_2  \}$,
$P_{1} = \{p \in P \mid  p \in M_1 \ \wedge \  p \not\in M_2  \}$, and
$P_{2} = \{p \in P \mid  p \not\in M_1 \ \wedge \  p \in M_2  \}$.
Transitions enabled in $M_1$ have 
input places from $P_{\bullet}$ and $P_{1}$.
$\mi{en}(N,M_1) = \{ t \in T \mid 
\pre{t} \cap P_{\bullet} \neq \emptyset \ \wedge \
\pre{t} \cap P_{\emptyset} = \emptyset \ \wedge \
\pre{t} \cap P_{1} \neq \emptyset \ \wedge \
\pre{t} \cap P_{2} = \emptyset 
\}$.
Transitions enabled in $M_2$ have 
input places from $P_{\bullet}$ and $P_{2}$.
$\mi{en}(N,M_2) = \{ t \in T \mid 
\pre{t} \cap P_{\bullet} \neq \emptyset \ \wedge \
\pre{t} \cap P_{\emptyset} = \emptyset \ \wedge \
\pre{t} \cap P_{1}  = \emptyset \ \wedge \
\pre{t} \cap P_{2} \neq \emptyset 
\}$.
This follows directly from Definition~\ref{def:cfpair}.

{\bf Selecting clusters $C_1$ and $C_2$.}
Pick an arbitrary transition enabled in $M_1$:
$t^{pick}_1 \in \mi{en}(N,M_1)$. 
Call the corresponding cluster $C_1$ (i.e., $t^{pick}_1 \in C_1$).
Cluster $C_1$ is fully marked in $M_1$, 
but has at least one unmarked place in $M_2$.
$P^{unmrk} = C_1 \cap P_{1}$ is the non-empty set of such places.
To reach the home marking from $M_2$, we need to execute a transition in cluster $C_1$ because it is partially enabled.
Hence, there needs to be a firing sequence that marks the places in $P^{unmrk}$. Let $\sigma_{en}$ be a shortest firing sequence
starting in $M_2$ and marking a place in $P^{unmrk}$.
$\sigma_{en}$ starts with a transition enabled in $M_2$ and ends with a transition putting the first token in $P^{unmrk}$ (the transition may also mark other places in $P^{unmrk}$). 
Let $t^{pick}_2 \in \mi{en}(N,M_2)$ be the first transition in this shortest sequence $\sigma_{en}$.
Given this firing sequence we can ``follow a token'' 
from $t^{pick}_2$ to a place in $P^{unmrk}$.
This provides a path starting in $t^{pick}_2$ and ending in the first place marked in $P^{unmrk}$.
This path contains a subset of transitions in $\sigma_{en}$.
Obviously, such a path must exist, but there may be many candidates.
The cluster where this path starts is called $C_2$ (i.e., $t^{pick}_2 \in C_2$).
There exists a place $p^{mrk} \in C_2 \cap P_{\bullet}$ in this cluster that is marked in both $M_1$ and $M_2$
($t^{pick}_2$ is enabled in $M_2$ and at least of the input places must also have a token in $M_1$, since $(M_1,M_2)$ is a conflict pair).

{\bf Selecting two rooted disentangled paths $\rho'$ and $\rho''$.}
We use the three clusters $C$, $C_1$, and $C_2$ to prove the contradiction. There is a path from $C_2$ to $C_1$ and a path from $C_1$ to $C$. Note that $C_1$ and $C_2$ need to be different due to the disagreement tokens.
Also $C$ is different from both $C_1$ and $C_2$, since it is not possible to mark the home cluster completely and still have tokens in other places (use Corollary~\ref{corr:dommark}).
Due to Lemma~\ref{lem:exstrdespath}
there must be a $C_1$-rooted disentangled path starting in $p^{mrk}$. Let us call this path
$\rho' = \langle p_1',t_1',p_2', \ldots , t_{n-1}',p_n' \rangle$.
$p_1' = p^{mrk}$ and $p_n'$ is a place in cluster $C_1$.
Assume that the construction described in Lemma~\ref{lem:exstrdespath} is used, i.e., 
all transitions in $\rho'$ also appear in $\sigma_{en}$ (but the reverse does not need to hold since we follow a token and take shortcuts to ensure that each cluster appears only once).
For clarity, we refer to the end place of $\rho'$ as $p^{conn}$, i.e., 
$p^{conn} = p_n'$.
Due to Corollary~\ref{corr:pathexists}
there must also be a $C$-rooted disentangled path starting in $p^{conn}$ ($p^{conn}$ is non-dead in $(N,M)$).
Let us call this path
$\rho'' = \langle p_1'',t_1'',p_2'', \ldots , t_{m-1}'',p_m'' \rangle$.
$p_1'' = p^{conn}$ and $p_m''$ is a place in cluster $C$.
For clarity, we refer to this place as $p^{end}$, i.e., 
$p^{end} = p_m''$.

Hence, we have 
a $C_1$-rooted disentangled path $\rho'$ starting in $p^{mrk}$ and ending in $p^{conn}$ and
a $C$-rooted disentangled path $\rho''$ starting in $p^{conn}$ and ending in $p^{end}$.

{\bf Creating another rooted disentangled path $\rho'''$ by combining $\rho'$ and $\rho''$.}
Consider now the 
path $\rho''' = 
\langle p^{mrk},t_1',p_2', \ldots , t_{n-1}',p^{conn},
t_1'',p_2'', \ldots , t_{m-1}'',p^{end}
 \rangle$, i.e., the concatenation of the paths $\rho'$ and $\rho''$.
We will show that $\rho'''$ is a $C$-rooted disentangled path  starting in $p^{mrk}$ and ending in $p^{end}$.

Obviously, $\rho'''$ is also a path of $N$.
However, we also need to show that $\rho'''$
does not contain elements that belong to the same cluster.
If this is not the case there must be a place $p_i'$ in $\rho'$ with $1 \leq i < n$ and a place $p_j''$ in $\rho''$ with 
$1 \leq j \leq m$ that belong to the same cluster. (Note that $\rho'$  and $\rho''$ do not visit the same cluster twice when considered separately, and  $p_n'= p^{conn} = p_1''$ is in both so should not be compared with itself.)
However, this is impossible.
Assume there would be a cluster $C'$ 
with $p_i' \in C'$ and $p_j'' \in C'$.
Then a transition of this cluster should appear in $\sigma_{en}$.
Recall that we assume that the construction described in Lemma~\ref{lem:exstrdespath} is used to create $\rho'$, i.e., all transitions in $\rho'$ also appear in $\sigma_{en}$.
$t' \in C'$ is such a transition appearing in $\sigma_{en}$
and $\rho'$ and consuming tokens from both $p_i'$ and $p_j''$.
When starting in marking $M_2$ and executing $\sigma_{en}$, transition $t'$ occurs before any transition in $C_1$.
Consider the marking $M'$ just before $t'$ occurs, i.e., starting in $M_2$ a prefix of $\sigma_{en}$ is executed enabling $t'$ without executing any transition in $C_1$.
There exists a place $p^{alt} \in C_1 \cap P_{\bullet}$, because $C_1$ is fully marked in $M_1$ and partially marked in $M_2$.
In marking $M'$, both $p_j''$ and $p^{alt}$ are marked. 
$p_j''$ is marked because $t'$ is enabled.
$p^{alt}$ is marked because no transition in $C_1$ fired yet.
However, there is also a $C$-rooted disentangled path  
starting in $p^{alt}$, namely
$\rho^{alt} = \langle p^{alt},t_1'',p_2'', \ldots ,p_j'', \ldots t_{m-1}'',p^{end} \rangle$ 
(we can start in an arbitrary place in $C_1$ and still meet all requirements, note that compared to $\rho''$, $p^{conn}$ is replaced by $p^{alt}$).
Lemma~\ref{lem:pathsafety} shows that it is impossible 
to have two marked places $p_j''$ and $p^{alt}$ in the 
$C$-rooted disentangled path $\rho^{alt}$, leading to a contradiction.
Therefore, $\rho'''$ does not visit the same cluster multiple times (if so, $\rho''$ would not be a $C$-rooted disentangled path).
Hence, $\rho'''$ is a $C$-rooted disentangled path
starting in $p^{mrk}$ and ending in $p^{end}$.

{\bf The combined rooted disentangled path $\rho'''$ is not safe leading to a contradiction.}
Now consider the just constructed $C$-rooted disentangled path $\rho'''$ and marking $M_1$.
The places $p^{mrk}$ and $p^{conn}$ are both marked in $M_1$ and must be different.
Recall that $p^{mrk} \in C_2 \cap P_{\bullet}$ (i.e., also marked in $M_1$)
and $p^{conn} \in C_1$ (all places in $C_1$ are marked in $M_1$).
Again we apply Lemma~\ref{lem:pathsafety}, which shows that it is impossible to have two marked places 
in the $C$-rooted disentangled path $\rho'''$.
Therefore, we find another contradiction, showing that 
the conflict-pair $(M_1,M_2)$ cannot exist.
\end{proof}

Our goal was to show that marked proper free-choice nets having a home cluster
are lucent and this follows directly from the previous results.

\begin{corollary}[Home Clusters Ensure Lucency]\label{corr:home-lu}
Let $(N,M)$ be a marked proper free-choice net having a home cluster. $(N,M)$ is lucent.
\end{corollary}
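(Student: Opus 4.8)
The plan is to obtain the corollary as an immediate consequence of the two results established just above it, so that essentially no new argument is needed. Concretely, Theorem~\ref{theo:home-cp} states that a marked proper free-choice net having a home cluster has no conflict-pairs, and Lemma~\ref{lem:cflucent} states that a marked proper free-choice net having a home cluster with no conflict-pairs is lucent. Chaining these two implications for the given $(N,M)$ yields exactly the claim: $(N,M)$ is lucent.

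The only point worth making explicit is that the chain is legitimate because the hypotheses match verbatim — ``proper'', ``free-choice'', and ``having a home cluster'' are exactly the standing assumptions of both Theorem~\ref{theo:home-cp} and Lemma~\ref{lem:cflucent} — so no side conditions have to be re-verified. If one preferred a self-contained phrasing, one could instead argue by contraposition: assume $(N,M)$ is not lucent, invoke the construction inside the proof of Lemma~\ref{lem:cflucent} to build a conflict-pair $(M_1',M_2')$ from two distinct reachable markings with the same set of enabled transitions, and then derive a contradiction from Theorem~\ref{theo:home-cp}, which forbids conflict-pairs. Either packaging gives the same one-line result.

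At the level of the corollary there is no real obstacle; the hard work has already been done in Theorem~\ref{theo:home-cp}, whose proof builds on rooted disentangled paths (Lemma~\ref{lem:exstrdespath}, Corollary~\ref{corr:pathexists}) and their safety (Lemma~\ref{lem:pathsafety}, which in turn rests on the absence of dominating markings, Theorem~\ref{theo:dommark}) to show that a conflict-pair would force an unsafe $C$-rooted disentangled path. So the proof of the corollary is simply: apply Theorem~\ref{theo:home-cp} to conclude that $(N,M)$ has no conflict-pairs, then apply Lemma~\ref{lem:cflucent} to conclude that $(N,M)$ is lucent.
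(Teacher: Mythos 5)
Your proposal is correct and matches the paper's own proof exactly: the corollary is obtained by chaining Theorem~\ref{theo:home-cp} (no conflict-pairs) with Lemma~\ref{lem:cflucent} (absence of conflict-pairs implies lucency), with no additional side conditions to check. Nothing further is needed.
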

\begin{proof}
This follows directly from Lemma~\ref{lem:cflucent} and Theorem~\ref{theo:home-cp}.
A marked proper free-choice net having a home cluster 
has no conflict-pairs (Theorem~\ref{theo:home-cp})
and, therefore, must be lucent (Lemma~\ref{lem:cflucent}).
\end{proof}

$(N_1,M_1)$ in Figure~\ref{fig-intro-fc} and 
$(N_5,M_5)$ in Figure~\ref{fig-home-lucent} are examples of 
free-choice nets having a home cluster and these are indeed lucent.
$(N_4,M_4)$ depicted in Figure~\ref{fig-fc-nonlucid} is
not lucent and indeed has no home cluster.

\section{Relation To Perpetual Nets}
\label{sec:perpmarknets}

This paper significantly extends the results for \emph{perpetual} marked 
free-choice nets presented in \cite{lucent-PN2018}.
These nets need to be live, bounded, and have a home cluster, 
whereas in this paper, we only require the latter (but boundedness is implied).
Moreover, unlike \cite{lucent-PN2018} the setting is not limited to strongly-connected nets, 
e.g., we allow for workflow nets and other types of Petri nets typically used in 
process mining, workflow management, and business process management.

\begin{definition}[Perpetual Marked Nets \cite{lucent-PN2018}]\label{def:perpmn}
A marked Petri net $(N,M)$ is perpetual net if and only if 
it is live, bounded, and has a home cluster.
\end{definition}

In this paper, we focus on marked proper free-choice nets having a home cluster. 
Since boundedness is implied, the essential difference is the liveness requirement that we dropped.
None of the lucent Petri nets shown in this paper is live,
showing that this is a substantial generalization.
For example, $(N_1,M_1)$ in Figure~\ref{fig-intro-fc} and 
$(N_5,M_5)$ in Figure~\ref{fig-home-lucent}
are lucent but not perpetual. 
Lemma~\ref{lem:cflucent} and Theorem~\ref{theo:home-cp} (combined in Corollary~\ref{corr:home-lu}) can be used to show that 
$(N_1,M_1)$ and $(N_5,M_5)$ are lucent.
\begin{table}[htb!]
\centering
\begin{tabular}{|p{1.5cm}|p{3.0cm}||c|c|}
  \hline
\multicolumn{2}{|p{4.7cm}||}{class of nets for which lucency is proven to hold} & \parbox[t]{4cm}{marked proper free-choice nets having a home cluster (this paper)} & 
\parbox[t]{4cm}{perpetual nets (free-choice, live, bounded, and having home cluster) \cite{lucent-PN2018}} \\
    \hline \hline
structural & proper & \checkmark & \checkmark ~ (implied)\\ \cline{2-4}
properties & strongly-connected & - & \checkmark  ~ (implied)\\
    \hline \hline
dynamic & bounded & \checkmark ~ (implied) & \checkmark\\ \cline{2-4}
properties & live & - & \checkmark\\ \hline
\end{tabular}
\caption{Corollary~\ref{corr:home-lu} extends the results in \cite{lucent-PN2018} to nets that may be non-live and not strongly-connected (requirements are denoted by $\checkmark$).}\label{tab:prop}
\end{table}

Theorem~3 in \cite{lucent-PN2018} states that any perpetual marked free-choice net is lucent.
Corollary~\ref{corr:home-lu} generalizes this statement, as shown in Table~\ref{tab:prop}.
In the remainder of this section, we relate both settings.

\begin{proposition}[Perpetual Nets Are a Subclass of Free-Choice Nets Having a Home Cluster]\label{lemma:implication}
Let $(N,M)$ be a marked free-choice net.
If $(N,M)$ is perpetual, then $(N,M)$ is proper and has a home cluster.
\end{proposition}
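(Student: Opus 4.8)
The statement splits into two claims: that a perpetual marked free-choice net is \emph{proper}, and that it has a \emph{home cluster}. The second claim requires no argument at all, since having a home cluster is literally one of the three defining conditions of a perpetual net (Definition~\ref{def:perpmn}). So the entire content of the proof is the properness claim, and the plan is to derive it purely from liveness and boundedness; the free-choice hypothesis is not even needed, so the implication actually holds for arbitrary nets.

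The plan is to proceed through a short chain of known implications. First I would note that, since $(N,M)$ is perpetual, it is in particular live and bounded, and hence by Definition~\ref{def:lb} the underlying net $N$ is well-formed, with $M$ itself serving as the required witnessing marking. Next I would invoke the standard structural fact, recalled in Section~\ref{sec:structprop} (see also \cite{deselesparza}), that a well-formed Petri net is strongly connected, i.e. between any two of its nodes there is a directed path in $(P\cup T,F)$. From strong connectedness I would then extract properness: because $P$ and $T$ are non-empty and disjoint, every node $x$ of $N$ admits at least one other node $y$, and applying strong connectedness to the pairs $(x,y)$ and $(y,x)$ shows that $x$ has both an outgoing and an incoming arc. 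Specializing $x$ to an arbitrary transition $t$, and recalling that arcs run only between $P$ and $T$, the incoming arc of $t$ comes from a place and the outgoing arc of $t$ leads to a place, so $\pre{t}\neq\emptyset$ and $\post{t}\neq\emptyset$; thus $N$ is proper in the sense of Definition~\ref{def:proppn}.

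I do not expect a real obstacle here. The only point needing a moment's care is the passage from strong connectedness to properness, where one must rule out the degenerate case of a node with no distinct partner — which cannot arise precisely because $P$ and $T$ are non-empty. If one preferred a self-contained argument avoiding the citation of strong connectedness, one could argue directly that a transition with empty preset is either isolated (contradicting weak connectedness, since at least one place exists) or, having a non-empty postset, can be fired arbitrarily often from $M$ to violate boundedness, and dually that a transition with empty postset is excluded because some outgoing arc is forced; but the route via well-formedness and strong connectedness is shorter and reuses material already established in the paper, so that is the one I would write up.
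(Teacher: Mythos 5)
Your proposal is correct and follows essentially the same route as the paper: the home-cluster part is immediate from Definition~\ref{def:perpmn}, and properness is obtained via live $+$ bounded $\Rightarrow$ well-formed $\Rightarrow$ strongly connected (Theorem 2.25 in \cite{deselesparza}) $\Rightarrow$ proper. You merely spell out the last step (strong connectedness forces every transition to have an input and an output place) in more detail than the paper, which states it as immediate.
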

\begin{proof}
A marked free-choice net $(N,M)$ is a perpetual net if and only if 
it is live, bounded, and has a home cluster.
Hence, we only need to show that $(N,M)$ is proper.
This follows directly from the fact that well-formed nets are strongly-connected (Theorem 2.25 in \cite{deselesparza}).
\end{proof}

The reverse does not need to hold, as is demonstrated by figures~\ref{fig-intro-fc} and 
\ref{fig-home-lucent}.
The proof of Theorem~3 in \cite{lucent-PN2018} is also incomplete.
The proof in \cite{lucent-PN2018} can be repaired, but this requires reasoning over a stacked array of P-components, making things overly complicated.
It is also possible to use a different approach using a so-called T-reduction showing the absence of conflict pairs, 
see Theorem~6 in \cite{reduction-wvda-PN2021}.
In a T-reduction proper $t$-induced T-nets are ``peeled off'' until a T-net (i.e., marked graph) 
remains (this is related to the notion of CP-nets used in \cite{deselesparza}).
The reduction preserves liveness, boundedness, perpetuality, pc-safeness, and other properties.
Starting from a perpetual well-formed free-choice net and a T-reduction, it can be shown that
lucency is preserved in the ``upstream'' direction.
Since for marked graphs it is easy to show lucency, 
this implies that any perpetual marked free-choice net is lucent.

Selected results from Section~\ref{sec:arelucent} can also be used to repair the proof in \cite{lucent-PN2018} in a more direct manner 
without using existing results for well-formed free-choice nets. 
In this more limited setting, our approach can be further simplified by exploiting safeness and liveness.

For strongly-connected marked free-choice nets, having a home cluster implies perpetuality (i.e., liveness and boundedness are implied).
Moreover, such nets are also safe.

\begin{proposition}[Properties of Strongly-Connected Free-Choice Nets Having a Home Cluster]\label{prop:live-safe}
A strongly-connected marked free-choice net $(N,M)$ 
having a home cluster $C$ is live, safe, and lucent.
\end{proposition}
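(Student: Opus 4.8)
The statement to prove is that a strongly-connected marked free-choice net $(N,M)$ with a home cluster $C$ is live, safe, and lucent. Safeness and lucency are almost immediate from the machinery already built: safeness is exactly Corollary~\ref{corr:safe} (a strongly-connected net is in particular proper, since every transition has an input and an output place — otherwise a token could reach a node from which it cannot leave, contradicting strong connectedness), and lucency follows from Corollary~\ref{corr:home-lu}. So the real content is \emph{liveness}, and my plan is to spend essentially all the effort there.

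For liveness, I would argue as follows. Fix a reachable marking $M' \in R(N,M)$ and an arbitrary transition $t \in T$; I must exhibit a marking in $R(N,M')$ enabling $t$. Since $C$ is a home cluster, $\mi{Mrk}(C) \in R(N,M')$, so it suffices to find a marking reachable from $\mi{Mrk}(C)$ that enables $t$. First I would dispose of the degenerate case: by Proposition~\ref{prop:cluschar}, if the net has a dead reachable marking then $\mi{Mrk}(C)$ is that marking, $\card{\mi{Pl}(C)}=1$, and $\mi{Tr}(C)=\emptyset$; but then from $\mi{Mrk}(C)$ nothing fires and we could never reach any marking enabling $t$ — however, by Corollary~\ref{corr:pathexists} every non-dead place admits a $C$-rooted disentangled path, so in a strongly-connected net the single place $p_C$ of $C$ would have to be reachable-from \emph{and} reach every other place; the only way $\mi{Mrk}(C)$ can be reached while being dead is if $P=\{p_C\}$ and $T=\emptyset$, which is excluded since $T$ is non-empty. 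Hence the net is deadlock-free and, by Proposition~\ref{prop:cluschar}, $\mi{Tr}(C)\neq\emptyset$.

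Now the core step. Because the net is strongly connected, there is a path from (a place of) $C$ to $t$; more precisely there is an elementary path $\langle q, \ldots, t\rangle$ with $q \in \mi{Pl}(C)$. Starting from $\mi{Mrk}(C)$ I want to push a token along this path down to an input place of $t$, and then repeat for each input place of $t$ until all of $\pre{t}$ is marked simultaneously. The key enabling observation is the combination of the free-choice property with the disentangled-path technology: given a token sitting on a place $p$ and a path from $p$ toward $t$, I can — using the same "controlled routing" argument as in the proof of Lemma~\ref{lem:pathsafety} — fire a sequence of transitions that keeps moving a token strictly closer to $t$ along the chosen (disentangled) path, using deadlock-freeness to guarantee that whenever no path-transition is enabled some other transition is, and using the home-cluster property to regenerate $\mi{Mrk}(C)$ and try again. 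The delicate point, and what I expect to be the main obstacle, is \textbf{simultaneity}: moving a token to one input place of $t$ may require firing transitions that disturb tokens already parked on other input places of $t$. I would handle this by iterating the regeneration: each time, return to $\mi{Mrk}(C)$ (home cluster) and route tokens so that they accumulate on $\pre{t}$; since every $C$-rooted disentangled path is safe (Lemma~\ref{lem:pathsafety}) and the net is bounded (Corollary~\ref{corr:dommark}), the configurations are finitely many, so a token parked on an input place of $t$ is never "lost" in a way that forces unbounded backtracking — formally, I would set up a well-founded measure (e.g., the multiset of remaining distances, along fixed disentangled paths, from the current token positions to the input places of $t$) that strictly decreases under the routing moves, using the free-choice property to always pick the move that decreases it. Once all of $\pre{t}$ is marked, $t$ is enabled, giving liveness.

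Once liveness is in hand, the proposition is complete: safeness is Corollary~\ref{corr:safe} and lucency is Corollary~\ref{corr:home-lu}, both of which apply because a strongly-connected net is proper. (As a remark, liveness plus boundedness — the latter from Corollary~\ref{corr:dommark} — together with the home cluster shows such a net is perpetual in the sense of Definition~\ref{def:perpmn}, which is the converse direction to Proposition~\ref{lemma:implication} under the strong-connectedness hypothesis.) I would expect the write-up of the routing/simultaneity argument to be the only substantial part; everything else is a citation of the results already established in Section~\ref{sec:arelucent}.
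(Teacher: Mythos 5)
Your decomposition is the same as the paper's: properness from strong connectedness, safeness from Corollary~\ref{corr:safe}, lucency from Corollary~\ref{corr:home-lu}, and liveness by routing a token from $\mi{Mrk}(C)$ along a path to $t$ using the free-choice property and the fact that one can never get stuck (the home marking stays reachable). Your treatment of the degenerate dead-marking case is also fine. The problem is the part you yourself flag as delicate: your mechanism for getting \emph{all} of $\pre{t}$ marked simultaneously does not work as described. You propose to "iterate the regeneration: each time, return to $\mi{Mrk}(C)$ and route tokens so that they accumulate on $\pre{t}$" — but returning to $\mi{Mrk}(C)$ empties every place outside $C$ (by Theorem~\ref{theo:dommark} nothing can coexist with a fully marked home cluster), so regeneration destroys exactly the tokens you claim to be accumulating. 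The proposed well-founded measure (multiset of remaining distances to $\pre{t}$) is likewise never shown to decrease under the moves that are actually forced on you; a routing step toward one input place may legitimately increase the distance of a token aimed at another.

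The fix is much simpler and makes the whole simultaneity issue disappear: you only need to park a token on \emph{one} input place $p\in\pre{t}$. If $p\in\mi{Pl}(C)$ then $t\in\mi{Tr}(C)$ by the cluster definition and $\mi{Mrk}(C)$ already enables $t$. Otherwise, since $\mi{Mrk}(C)$ is reachable from the current marking and leaves $p$ empty, some transition consuming from $p$ must fire on the way there; every such transition $t'\in\post{p}$ satisfies $\pre{t'}\cap\pre{t}\ni p$, hence $\pre{t'}=\pre{t}$ by the free-choice property, so at the moment $t'$ fires the whole preset $\pre{t}$ is marked and $t$ is enabled. Thus a single token delivered to a single input place, combined with home-cluster reachability and free-choiceness, already yields a reachable marking enabling $t$. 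This is the content behind the paper's remark that one "cannot get stuck on the way"; with it, your routing argument for a single token (which is sound and parallels the proof of Lemma~\ref{lem:pathsafety}) completes the liveness proof.
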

\begin{proof}
Let $(N,M)$ be a strongly-connected marked free-choice net having a home cluster $C$.
$N$ is proper because $N$ is strongly-connected.
Hence, we can apply Corollary~\ref{corr:safe} to show that $(N,M)$ is safe.
Corollary~\ref{corr:home-lu} can be used to show that $(N,M)$ is lucent.
Any transition $t$ is on a path from starting in $C$. 
It is possible to create a firing sequence starting in $\mi{Mrk}(C)$ enabling $t$ by following this path. 
This is due to the free-choice property and the fact that we cannot ``get stuck on the way'' (it is always possible to return to $\mi{Mrk}(C)$).
See the proof of Lemma~\ref{lem:pathsafety} for a similar reasoning.
Hence, $(N,M)$ is live.
\end{proof}

To explore the relationship between both settings in more detail, 
we take a proper Petri net with a safe initial marking $M$ and a selected cluster $C$.
We add a transition $t_C$ that extends the cluster and that marks all places in $M$,
i.e., $\pre{t_C} = C \cap P$ and $\post{t_C} = \{p \in M\}$.
$t_C$ short-circuits the original net in an attempt to make it strongly-connected.
To achieve this, we also need to remove the nodes for which there is no path from the initially marked places.

\begin{definition}[Short-Circuited Cleaned Nets]\label{def:short-circ}
Let $N= (P,T,F)$ be proper Petri net having a cluster $C$ and an initial marking $M$ that is safe.
\begin{itemize}[noitemsep,topsep=2pt]
\item $\mi{conn}(N,M) = \{ x_n \mid \langle x_1,x_2, \ldots ,x_n \rangle \in \mi{paths}(N) \ \wedge \ x_1 \in M \}$ are all nodes that are on a path starting in an initially marked place.
\item $\mi{clean}(N,M) = (P',T',F'\cap ((P' \times T') \cup (T' \times P')))$ with
$P' = P \cap \mi{conn}(N,M)$, and $T' = T \cap \mi{conn}(N,M)$ is the net containing all places and transitions on paths starting in an initially marked place.
\item $\mi{short\_circ}(N,C,M) = (P,T\cup \{t_C\},F \cup (\mi{Pl}(C) \times \{t_C\}) 
\cup (\{t_C\}\times \{p \in M\}) )$ is the short-circuited net (adding a ``fresh'' transition $t_C \not\in T$
with $\pre{t_C} = C \cap P$ and $\post{t_C} = \{p \in M\}$).
\item $N_{C,M} = \mi{short\_circ}(\mi{clean}(N,M),\allowbreak C,M)$ applies the two operations in sequence.
\item $\hat{C} = C \cap \{t_c\}$ is used to denote the extended cluster (note that this is only a cluster of $N_{C,M}$ if $C \subseteq \mi{conn}(N,M)$).
\end{itemize}
\end{definition}

In an attempt to create a strongly-connected net, we first remove all ``dead nodes'' and then short-circuit the net by connecting a selected cluster to the initially marked places.
If all nodes of $C$ are on a path starting in an initially marked place, then 
$\hat{C} = C \cap \{t_c\}$ is indeed a cluster of $N_{C,M}$ (otherwise not).

\begin{proposition}[Short-Circuited Cleaned Nets Are Strongly-Connected]\label{prop:scc-nets-are-cc}
Let $(N,M)$ be a safely marked proper free-choice net having a cluster $C$ such that $C \subseteq \mi{conn}(N,M)$.
The short-circuited cleaned net $N_{C,M} = \mi{short\_circ}(\mi{clean}(N,M),\allowbreak C,M)$ is
strongly-connected and free-choice, and
$\hat{C} = C \cap \{t_c\} \in \cluster{N_{C,M}}$
(i.e., $\hat{C}$ is indeed a cluster of $N_{C,M}$).
\end{proposition}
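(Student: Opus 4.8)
The statement bundles three claims: $N_{C,M}$ is free-choice, the extended cluster $\hat{C}$ (i.e.\ $C$ together with $t_C$) is a cluster of $N_{C,M}$, and $N_{C,M}$ is strongly-connected. The plan is to settle the first two by direct inspection of presets and postsets and to put the real effort into the third. Write $N' = \mi{clean}(N,M)$ and $\hat{N} = N_{C,M} = \mi{short\_circ}(N',C,M)$; since $C \subseteq \mi{conn}(N,M)$, all nodes of $C$ survive cleaning, so $\mi{Pl}(C)$ and $\mi{Tr}(C)$ are the same in $N'$ as in $N$. For free-choice, two bookkeeping facts about cleaning suffice: if a place survives then every transition in its postset survives (extend the witnessing path by one arc), so postsets of surviving places are unchanged; a surviving transition, on the other hand, keeps exactly its surviving input places, i.e.\ its preset in $N'$ is its preset in $N$ intersected with $P'$. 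Hence every preset in $N'$ is a preset of $N$ intersected with the single fixed set $P'$, and intersecting a family of sets with a fixed set preserves the ``pairwise equal-or-disjoint'' property, so $N'$ is free-choice. For the short-circuit step, use the standard fact that in a free-choice net a cluster with $\mi{Tr}(C) \neq \emptyset$ has $\pre{t} = \mi{Pl}(C)$ for every $t \in \mi{Tr}(C)$, while $\mi{Tr}(C) = \emptyset$ forces $C = \{p_C\}$ with $\post{p_C} = \emptyset$. In either case the fresh transition $t_C$ receives preset $\mi{Pl}(C)$, which equals $\pre{t}$ for every $t \in \mi{Tr}(C)$ and is disjoint from $\pre{t}$ for every $t \notin \mi{Tr}(C)$ (an input place of $t$ inside $\mi{Pl}(C)$ would force $t \in C$). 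Since postsets are irrelevant to the free-choice condition, $\hat{N}$ is free-choice.

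For the second claim, run Definition~\ref{def:clust} starting from a place $p \in \mi{Pl}(C)$. In $N'$ the postset of $p$ is unchanged from $N$, and by the cluster fact above it equals $\mi{Tr}(C)$ (a transition has $p$ as an input place iff it lies in $\mi{Tr}(C)$); the short-circuit then adds the arc $p \to t_C$, so in $\hat{N}$ the postset of $p$ is $\mi{Tr}(C) \cup \{t_C\}$. Moreover every $t \in \mi{Tr}(C) \cup \{t_C\}$ has preset $\mi{Pl}(C)$. Therefore the closure in Definition~\ref{def:clust} stops exactly at $\mi{Pl}(C) \cup \mi{Tr}(C) \cup \{t_C\}$, which is $\hat{C}$: presets of the transitions give back only $\mi{Pl}(C)$, and postsets of the places give back only $\mi{Tr}(C) \cup \{t_C\}$. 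The degenerate case $C = \{p_C\}$ with $\mi{Tr}(C) = \emptyset$ is the same argument, with $\post{p_C} = \{t_C\}$ in $\hat{N}$.

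The remaining, and genuinely nontrivial, claim is strong-connectedness. It suffices to show that in $\hat{N}$ every node reaches $t_C$ and is reachable from $t_C$. The ``reachable from $t_C$'' direction is immediate: the postset of $t_C$ is exactly the set of initially marked places, and by definition of $\mi{conn}(N,M)$ every node of $N'$ lies on a directed $N$-path beginning in some $p \in M$; every prefix of such a path again begins in $M$, so the path stays inside $\mi{conn}(N,M)$, hence inside $N'$ and $\hat{N}$, and thus $t_C$ reaches every node of $N'$ (and itself trivially). The hard direction is that \emph{every} node of $\hat{N}$ must reach $t_C$, equivalently must reach some place of $\mi{Pl}(C)$, since the only arcs entering $t_C$ issue from $\mi{Pl}(C)$. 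This is where properness — and, in the setting in which the construction is applied, the fact that $\mi{Mrk}(C)$ is a home marking — must be invoked: the intended argument is the colour-a-token argument used for Corollary~\ref{corr:pathexists}, namely colour a token on an arbitrary place, drive the net to $\mi{Mrk}(C)$ along a firing sequence (properness guarantees a coloured token always survives a firing, so a coloured token lands in $\mi{Mrk}(C)$), and read off a directed path to $\mi{Pl}(C)$ from its trajectory. The step I expect to cost the most care is handling \emph{all} nodes of $\mi{clean}(N,M)$ uniformly this way — including places that lie on a directed path from $M$ but never actually carry a token in the token game — rather than only the token-markable places to which Corollary~\ref{corr:pathexists} applies verbatim.
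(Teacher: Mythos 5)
Your treatment of the free-choice property and of the cluster claim is correct and in fact more careful than the paper's own proof, which does not discuss the cleaning step at all: the observation that $\mi{clean}(N,M)$ replaces each preset $\pre{t}$ by $\pre{t}\cap P'$ while leaving the postsets of surviving places intact, and that intersecting a family of pairwise equal-or-disjoint sets with a fixed set preserves that property, is exactly the bookkeeping the paper omits. For the short-circuit step the paper argues as you do ($\pre{t_C}=\mi{Pl}(C)=\pre{t}$ for $t\in\mi{Tr}(C)$, and $C=\{p_C\}$ otherwise), and your closure computation identifying $\hat{C}$ as a cluster of $N_{C,M}$ is fine.

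The gap is in strong-connectedness, and you have put your finger on it without closing it. You establish only the easy direction (every node is reachable from $t_C$, since $\post{t_C}=\{p\in M\}$ and every node of the cleaned net lies on a path from $M$ whose prefixes stay inside $\mi{conn}(N,M)$); for the converse you appeal to the colour-a-token argument of Corollary~\ref{corr:pathexists}, which needs $\mi{Mrk}(C)$ to be a home marking --- a hypothesis the proposition does not grant. Your suspicion is justified: under the stated hypotheses the converse direction actually fails. Take $P=\{p_1,p_2\}$, $T=\{t_1\}$, $F=\{(p_1,t_1),(t_1,p_2)\}$, $M=[p_1]$, and $C=\cluster{p_1}=\{p_1,t_1\}$; this is a safely marked proper free-choice net with $C\subseteq\mi{conn}(N,M)$, yet in $N_{C,M}$ the place $p_2$ has empty postset and cannot reach $t_C$. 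So no amount of care about non-markable places will rescue the argument as stated; one must import the home-cluster assumption, which is indeed available at the only point where the proposition is invoked (the proof of Theorem~\ref{theo:relation}). For what it is worth, the paper's own proof is silent on this same direction: it shows only that $t_C$ lies on a cycle through the initially marked places and that every node is reachable from them, and then asserts strong-connectedness, so your attempt is not weaker than the published argument --- it merely makes the missing step visible.
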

\begin{proof}
All nodes in $\mi{clean}(N,M)$ are reachable from an initially marked place
(including the nodes in $C$ because $C \subseteq \mi{conn}(N,M)$).
Hence, $t_c$ is also reachable from an initially marked place 
and $t_c$ is connected to this place. Therefore, the net is strongly-connected. 
Adding $t_c$ cannot destroy the free-choice property. 
If there is a transition $t \in C$, then $\pre{t}=\pre{t_c}$.
If not, then $C$ has just one place.
Therefore, $N_{C,M}$ is free-choice and has a new cluster $\hat{C} = C \cap \{t_c\}$.
\end{proof}

Under the assumption that cluster $C$ is preserved when short-circuiting the net,  $C$ is a home cluster of $(N,M)$ if and only if $\hat{C}$ is a home cluster of $(N_{C,M},M)$.
Moreover, this is equivalent to $(N_{C,M},M)$ being live and bounded,
and can be used to decide whether a free-choice net has a home cluster in polynomial time.

\begin{theorem}[Relating Both Settings]\label{theo:relation}
Let $(N,M)$ be a safely marked proper free-choice net having a cluster $C$ such that $C \subseteq \mi{conn}(N,M)$. The following three statements are equivalent:
\begin{itemize}[noitemsep,topsep=2pt]
\item[(1)] $C$ is a home cluster of $(N,M)$,
\item[(2)] $\hat{C}$ is a home cluster of $(N_{C,M},M)$, and
\item[(3)] $(N_{C,M},M)$ is live and bounded.
\end{itemize}
\end{theorem}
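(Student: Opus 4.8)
The natural strategy is to prove the chain of equivalences $(1)\Rightarrow(2)\Rightarrow(3)\Rightarrow(1)$, exploiting the fact that reachability in $(N,M)$ and in $(N_{C,M},M)$ coincide as long as we only fire transitions of the original net, and that the fresh transition $t_C$ can fire exactly when $\mi{Mrk}(C)$ is marked (since $\pre{t_C}=\mi{Pl}(C)$, by Definition~\ref{def:short-circ}), in which case it resets the token distribution to $M$. First I would record the basic observation that $R(N,M)=R(N_{C,M},M)$ when one forbids $t_C$, because cleaning only removes nodes that are never on a path from an initially marked place and hence never marked, and adding $t_C$ does not change any firing of an old transition. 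So every firing sequence of $(N,M)$ is a firing sequence of $(N_{C,M},M)$, and conversely any firing sequence of $(N_{C,M},M)$ that never fires $t_C$ is one of $(N,M)$.

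\textbf{$(1)\Rightarrow(2)$.} Assume $\mi{Mrk}(C)$ is a home marking of $(N,M)$. Take any $M'\in R(N_{C,M},M)$; I would argue by induction on the number of occurrences of $t_C$ in a witnessing sequence that $M'\in R(N,M)$ or $M'$ is reachable in $(N,M)$ after a ``restart''. Concretely, whenever $t_C$ fires it produces exactly the marking $\{p\in M\}=M$ (using safeness of $M$, so $[p\in M]=M$), so $M'$ is reachable in $(N,M)$ from $M$ itself. Then by $(1)$ there is a firing sequence in $(N,M)$, hence in $(N_{C,M},M)$, from $M'$ to $\mi{Mrk}(C)=\mi{Mrk}(\hat{C})$ (the place set of the cluster is unchanged). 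Thus $\mi{Mrk}(\hat{C})$ is a home marking of $(N_{C,M},M)$, i.e.\ $\hat{C}$ is a home cluster.

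\textbf{$(2)\Rightarrow(3)$.} Assume $\hat{C}$ is a home cluster of $(N_{C,M},M)$. By Proposition~\ref{prop:scc-nets-are-cc}, $N_{C,M}$ is a strongly-connected (hence proper) free-choice net, and $\hat{C}$ is genuinely a cluster. Now Proposition~\ref{prop:live-safe} applies directly: a strongly-connected marked free-choice net having a home cluster is live (and safe). So $(N_{C,M},M)$ is live; boundedness then follows either from Corollary~\ref{corr:dommark}/Corollary~\ref{corr:safe} (the short-circuited net is proper free-choice with a home cluster, hence safe, hence bounded) or simply from the fact that $N_{C,M}$ is now well-formed. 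This is the easy direction once the earlier machinery is in place.

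\textbf{$(3)\Rightarrow(1)$ --- the main obstacle.} Assume $(N_{C,M},M)$ is live and bounded. The delicate point is to recover a home \emph{cluster} of the \emph{original} net from well-formedness of the short-circuited net, i.e.\ to show $\mi{Mrk}(C)$ is a home marking of $(N,M)$. Here I would invoke the Home Marking Theorem for well-formed free-choice nets (cited in the excerpt as one of the classical results), or rather the consequence that a live and bounded free-choice net possesses a home marking and, more specifically, that for any cluster one can use liveness to force that cluster's marking: since $t_C$ is live in $(N_{C,M},M)$, from every reachable marking of $(N_{C,M},M)$ one can reach a marking enabling $t_C$, and enabling $t_C$ means exactly that $\mi{Mrk}(C)$ is (sub)marked; by safeness it is precisely marked. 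Then I transfer this back: given $M'\in R(N,M)\subseteq R(N_{C,M},M)$, there is a sequence in $(N_{C,M},M)$ from $M'$ reaching a marking $M''\geq\mi{Mrk}(C)$; I would then need to show this sequence can be taken without firing $t_C$ \emph{before} reaching such an $M''$ (otherwise chop it at the last $t_C$, which restores $M$, and note $M\in R(N,M)$ trivially and $\mi{Mrk}(C)\in R(N,M)$ already needs to be established --- so really one argues: the first time the sequence reaches a marking $\geq\mi{Mrk}(C)$, no $t_C$ has been needed after the last restart, and everything before uses only original transitions, so this is a genuine firing sequence of $(N,M)$). Safeness then upgrades $M''\geq\mi{Mrk}(C)$ to $M''=\mi{Mrk}(C)$, or one appeals to Theorem~\ref{theo:dommark} in the cleaned net. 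The bookkeeping around occurrences of $t_C$ --- making sure one never "cheats" by using the short-circuit arc to reach $\mi{Mrk}(C)$ in a way that does not correspond to an original firing sequence --- is the part that needs care; everything else is routine translation between the two nets.
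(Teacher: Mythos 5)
Your proposal follows essentially the same route as the paper's proof: the cycle $(1)\Rightarrow(2)\Rightarrow(3)\Rightarrow(1)$, with the observation that the reachable markings of $(N,M)$ and $(N_{C,M},M)$ coincide, Propositions~\ref{prop:scc-nets-are-cc} and~\ref{prop:live-safe} giving $(2)\Rightarrow(3)$, and liveness of $t_C$ combined with boundedness giving $(3)\Rightarrow(1)$ (where your ``take the prefix before the first firing of $t_C$'' bookkeeping is exactly the right resolution, since $t_C$ can only fire after a marking $\geq \mi{Mrk}(C)$ has already been reached by original transitions). The one under-justified step is the claim in $(1)\Rightarrow(2)$ that firing $t_C$ always reproduces $M$ exactly: this needs the fact that $t_C$ can only ever be enabled at precisely $\mi{Mrk}(C)$ with no residual tokens elsewhere, which under hypothesis (1) follows from Theorem~\ref{theo:dommark} applied to $(N,M)$ --- the paper covers this with the remark that ``the moment all places in $C$ are marked, the other places are empty.''
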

\begin{proof}
Let $N= (P,T,F)$ be a proper free-choice net having a cluster $C$ 
and an initial marking $M$ that is safe.
$N_{C,M} = \mi{short\_circ}(\mi{clean}(N,M),\allowbreak C,M)$ and $t_C$ is the short-circuiting transition. 
$C \subseteq \mi{conn}(N,M)$, i.e., all cluster nodes are reachable from an initially marked place.

First, we show that (1) $\Rightarrow$ (2). 
Assume that $C$ is a home cluster of $(N,M)$.
Under this assumption, we consider the reachable markings of $(N_{C,M},M)$.
These include the markings of $(N,M)$, but nothing more.
The moment all places in $C$ are marked, the other places are empty.
In $(N_{C,M},M)$ there is an additional transition $t_C$ 
that is enabled if all places in $\hat{C}$ are enabled. 
If $t_C$ fires in $\mi{Mrk}(C) = \mi{Mrk}(\hat{C})$, 
then we reach the initial state $M$ again.
Hence, the set of reachable markings is the same and 
$\hat{C}$ is a home cluster of $(N_{C,M},M)$.

Second, we show that (2) $\Rightarrow$ (3). 
$\hat{C}$ be a home cluster of $(N_{C,M},M)$.
Proposition~\ref{prop:scc-nets-are-cc} shows that 
$N_{C,M}$ is strongly-connected and free-choice.
Using Proposition~\ref{prop:live-safe} this implies that
$(N_{C,M},M)$ is live and safe (i.e., also bounded). 

Finally, we show that (3) $\Rightarrow$ (1).
Let $(N_{C,M},M)$ be live and bounded.
This implies that also $t_C$ is live and can be repeatedly be enabled.
When $t_C$ is enabled, the places in $\hat{C}$ are marked, i.e.,
$t_C$ can only be enabled in a marking $M'$ such that $M' \geq \mi{Mrk}(C)$.
It is impossible that $M' > \mi{Mrk}(C)$. If so, it would be possible to reach a marking larger than the initial marking yielding an unbounded net by firing $t_C$.
Hence, $M' = \mi{Mrk}(C)$ is the only reachable marking enabling $t_C$.
Therefore, the set of reachable markings of $(N_{C,M},M)$ and $(N,M)$ are the same. As a result, $\mi{Mrk}(C)$ can be reached from any reachable marking starting in $(N,M)$. This implies that $C$ is a home cluster of $(N,M)$.

Combining (1) $\Rightarrow$ (2), (2) $\Rightarrow$ (3), and (3) $\Rightarrow$ (1)
shows that the three statements are equivalent.
\end{proof}

We can apply Theorem~\ref{theo:relation} to all clusters of the net. 
Therefore, the problem of deciding whether marked proper free-choice net has a home cluster
can be converted into a liveness and boundedness question, allowing us to solve the problem in polynomial time.

\begin{corollary}[Complexity of Home Cluster Detection]\label{corr:complex}
The following problem is solvable in polynomial time:
Given a marked proper free-choice net, to decide whether there is a home cluster.
\end{corollary}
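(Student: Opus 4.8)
The plan is to enumerate the clusters of $N$ one by one and, for each of them, use Theorem~\ref{theo:relation} to turn the question ``is this a home cluster?'' into a liveness-and-boundedness question about a strongly-connected free-choice net, which can then be answered with the standard polynomial-time machinery from the structure theory of free-choice nets. Since $\cluster{N}$ partitions $P\cup T$, there are at most $\card{P}+\card{T}$ clusters and they are computable in polynomial time; a home cluster, if one exists, must be one of them, so this enumeration is complete.

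First I would inspect the given marking: if $M(p)\ge 2$ for some $p\in P$, then $(N,M)$ is not safe, so by Corollary~\ref{corr:safe} it has no home cluster and we answer ``no''. Otherwise $M$ is a safe marking and, for each cluster $C$, I would proceed as follows. (i) Compute the node set $\mi{conn}(N,M)$ by a graph search from the initially marked places in $(P\cup T,F)$. (ii) If $C\not\subseteq\mi{conn}(N,M)$, observe that a token can only ever reside at a node reachable from an initially marked place (immediate induction on firing sequences: firing a transition needs tokens on its input places and only produces tokens on its output places), so a place of $C$ lying outside $\mi{conn}(N,M)$ is dead, hence $\mi{Mrk}(C)$ is unreachable and $C$ is not a home cluster. (iii) If $C\subseteq\mi{conn}(N,M)$, build the short-circuited cleaned net $N_{C,M}=\mi{short\_circ}(\mi{clean}(N,M),C,M)$, which by Proposition~\ref{prop:scc-nets-are-cc} is a strongly-connected free-choice net with $\hat C$ a cluster, and conclude via the equivalence (1)$\Leftrightarrow$(3) of Theorem~\ref{theo:relation} that $C$ is a home cluster of $(N,M)$ if and only if $(N_{C,M},M)$ is live and bounded. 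The algorithm answers ``yes'' the moment some cluster passes this test and ``no'' if none does; correctness in both directions is exactly the equivalence of Theorem~\ref{theo:relation} together with the dead-place argument in (ii).

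Every ingredient except the last is an elementary polynomial-time graph operation (listing clusters, computing $\mi{conn}(N,M)$, the $\mi{clean}$ and $\mi{short\_circ}$ constructions, the dead-place check), and there are only polynomially many clusters, so the whole procedure is polynomial provided that ``live and bounded'' is polynomial-time decidable for free-choice nets. This is the step that needs genuine care and is the main obstacle: one appeals to the Rank Theorem, by which well-formedness of a free-choice net is checkable in polynomial time (strong connectedness, existence of positive place- and transition-invariants, and the rank condition are all linear-algebra / linear-programming tasks); if $N_{C,M}$ is not well-formed it has no live and bounded marking at all, and if it is well-formed it is conservative, hence structurally bounded, so $(N_{C,M},M)$ is automatically bounded and it remains only to decide liveness, which for free-choice systems is polynomial-time decidable through Commoner's siphon-trap characterization, the siphon-trap property itself being checkable in polynomial time for free-choice nets (see \cite{bestfcn,structure-theory-ToPNoC-advanced-course2010,deselesparza}). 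Assembling these facts, each of the polynomially many cluster tests runs in polynomial time, yielding the desired polynomial-time decision procedure; everything else is bookkeeping around Theorem~\ref{theo:relation} and Proposition~\ref{prop:scc-nets-are-cc}.
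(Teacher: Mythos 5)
Your proof is correct and follows essentially the same route as the paper: enumerate the (polynomially many) clusters, reduce each test to liveness and boundedness of the short-circuited cleaned net $(N_{C,M},M)$ via Theorem~\ref{theo:relation}, and appeal to the polynomial-time decidability of liveness and boundedness for free-choice systems (Corollary~6.18 in \cite{deselesparza}). The extra details you supply --- the initial safety check and the argument that $C\not\subseteq\mi{conn}(N,M)$ already rules out $C$ as a home cluster --- are sound refinements of steps the paper leaves implicit.
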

\begin{proof}
Let $(N,M)$ be a marked proper free-choice net with $N=(P,T,F)$.
There are at most $\card{P}$ clusters. For each cluster $C$, we check whether 
$C$ is a home cluster of $(N,M)$. This is the same as checking whether $C \subseteq \mi{conn}(N,M)$ and
$(N_{C,M},M)$ is live and bounded. 
The former requirement is merely a syntactical check to ensure that
cluster $C$ is preserved when short-circuiting the net.
The latter requirement is known to be solvable in polynomial time 
(see, for example, Corollary~6.18 in \cite{deselesparza}).
Hence, deciding whether there is a home cluster can also be solved in polynomial time.
\end{proof}

The above result is remarkable because it also applies to non-well-formed nets.

\section{Conclusion}
\label{sec:concl}

This paper shows that marked proper free-choice nets 
having a home cluster are lucent. 

A system is \emph{lucent} if the set of enabled actions
uniquely characterizes the state of the system.
The user interface of an information system or
the worklist provided by a workflow management system offers 
possible actions to its users.
If the system is not lucent, the system may behave differently in seemingly identical situations. 
The notion of lucency was introduced in \cite{lucent-PN2018}
and, given its foundational nature, it is surprising 
that this was not investigated before.

The paper focuses on \emph{marked proper free-choice nets 
having a home cluster} and uses novel concepts such as \emph{rooted disentangled paths} and \emph{conflict-pairs} to reason about 
the behavior of such models. Most of the work on free-choice nets 
is restricted to well-formed nets. However, the liveness requirement is unsuitable for many application domains.
Many systems and processes are \emph{terminating} and/or have an \emph{initialization} phase. These are excluded by most of the existing work. 
As shown in this paper, we can often short-circuit the net and apply existing techniques.
However, the approach used in this paper is direct without using any results for well-formed free-choice nets.

Future work aims to extend the class of systems for which lucency can be proven. However, this will not be easy since unbounded nets or nets with long-term dependencies are inherently non-lucent.
More promising is the further investigation of Petri nets with home clusters. Ideas such as rooted disentangled paths and conflict-pairs have a broader applicability and may be used to generalize some of the results known for well-formed (free-choice) Petri nets.
For example, is it possible to create reduction and synthesis rules?

The idea to look into lucency originated from challenges in the field of \emph{process mining} (where observed behavior without state information is converted into process models that have states).
What if event logs would not only show the actions executed, but also what was possible, but did not happen?
In \cite{lucent-translucent-fi-2019} the notion of translucent event logs is introduced, and a baseline discovery algorithm is given. Given such information, it is much easier to discover process models. Another direction for future research is to create process mining techniques tailored towards discovering a marked proper free-choice net having a home cluster from a standard event log. Current approaches often aim to discover workflow nets that are (relaxed) sound. 
Heuristic approaches do not ensure soundness.
Region-based techniques tend to create unreadable models.
Inductive mining techniques tend to produce underfitting models.
Therefore, there is room for exploring alternative representational biases in process mining.

~\\
{\bf Acknowledgements}: The author thanks the Alexander von Humboldt (AvH) Stiftung for supporting our research.
Special thanks go to the persistent anonymous reviewer for providing detailed comments that helped to improve the readability of the proofs.

%\bibliographystyle{plain}
%\bibliography{../../bib/lit}
\bibliographystyle{fundam}
\bibliography{lit}

\end{document}